\newcommand{\akifixvset}{Q}
\newcommand{\akianspset}{K}
\DeclareMathOperator*{\argmax}{arg\,max}
\newcommand{\sMax}[1]{\eta(\vec{s}, #1)}
\newcommand{\TRUE}{{\sf true}}
\newcommand{\FALSE}{{\sf false}}
\newcommand{\CD}{{\sc Competitive Diffusion}}
\def\Gada{a'_1}
\def\Gadb{a'_2}
\def\Gadd{a'_n}
\def\Gdda{a''_1}
\def\Gddb{a''_2}
\def\Gddd{a''_n}
\def\bcenter{b}
\def\bw{\lambda}
\def\ba{b_j}
\def\bb{b_{j, 1}}
\def\bc{b_{j, 2}}
\def\bd{b_{j, 3}}
\def\ps{\alpha}
\def\umin{\mu}
\def\suki{\nu}
\begin{document}

\title{Computational Complexity of Competitive Diffusion on (Un)weighted Graphs}

\author{%
Takehiro Ito\inst{1} \and
Yota Otachi\inst{2} \and
Toshiki Saitoh\inst{3} \and
Hisayuki Satoh\inst{1} \and
Akira Suzuki\inst{1} \and
Kei Uchizawa\inst{4} \and
Ryuhei Uehara\inst{2} \and
Katsuhisa Yamanaka\inst{5} \and
Xiao Zhou\inst{1}
}

\institute{
Tohoku University,\\
\email{\{takehiro, h.satoh, a.suzuki, zhou\}@ecei.tohoku.ac.jp}
\and
Japan Advanced Institute of Science and Technology,\\
\email{\{otachi, uehara\}@jaist.ac.jp}
\and
Kobe University,\\
\email{saitoh@eedept.kobe-u.ac.jp}
\and
Yamagata University,\\
\email{uchizawa@yz.yamagata-u.ac.jp}
\and
Iwate University,\\
\email{yamanaka@cis.iwate-u.ac.jp}
}

\maketitle
\begin{abstract}
Consider an undirected graph modeling
a social network, where
the vertices represent users, and the edges do connections among them.
In the competitive diffusion game, each of a number of players chooses
a vertex as a seed to propagate his/her opinion, and then it spreads along
the edges in the graphs. The objective of every player is to maximize
the number of vertices the opinion infects.
In this paper, we investigate a computational problem of asking whether a pure Nash equilibrium 
exists in the competitive diffusion game on unweighed and weighted graphs, and
present several negative and positive results.
We first prove that the problem is W[1]-hard when parameterized by the
number of players even for unweighted graphs. We also show that
the problem is NP-hard even for series-parallel graphs with positive integer weights,
and  is NP-hard even for forests with arbitrary integer weights.
Furthermore, we show that the problem for forest of paths with arbitrary weights
is solvable in pseudo-polynomial time;
and it is solvable in quadratic time if a given graph is unweighted. We also
prove that the problem for chain, cochain, and threshold graphs with
arbitrary integer weights
is solvable in polynomial time.
 \end{abstract}

\section{Introduction}
Social networks plays an important role in
human interactions in recent years, through which we can propagate
ideas, findings, innovations and trends.
An action some influential users exhibit spreads over social networks, and
greatly affects public opinion formation.
The dynamics of such diffusion process
is a major research subject on game-theoretic analysis of social networks.
In particular,
Alon \emph{et al.}~\cite{Alon2010221} focused on a setting of a diffusion process
where a number of players compete to propagate their opinions on a topic
over a social network. In this setting,
each player chooses a user who acts as a seed of the opinion,
and then the opinion  spreads along
the connections in a social network in discrete time step.
The objective of every player
is to maximize the number of users their opinions infect.
This setting models, for example, a situation where a number of firms
compete in a specific market, and want to advertise their products via
viral marketing; each firm chooses an influential
user to maximize the number of users who buy their product.
Alon \emph{et al.} formalized this process as the \emph{competitive diffusion game}
on unweighted graphs. In this paper, we generalize the game to weighted graphs,
and investigate its computational complexity.
We below define the game more formally.

\begin{figure}[t]
\begin{center}
\includegraphics[width=12cm]{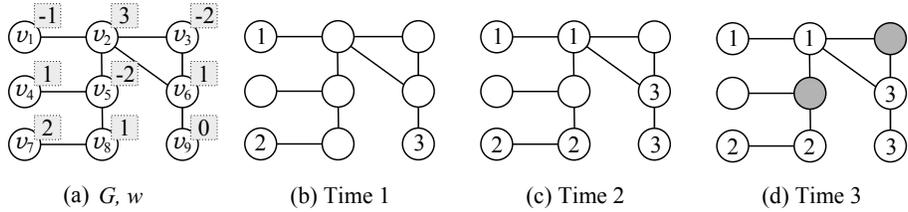}
\end{center}
\caption{Example of competitive diffusion with $k=3$ players. (a) The graph $G$ and weight $w$; numbers in the gray squares are weights. (b) $p_1, p_2$ and $p_3$ choose $v_1, v_7$ and $v_9$ in $G$, respectively;
thus a strategy profile $\vec{s}=(v_1, v_7, v_9)$.
(c) Each player dominates the neighbor. (d) The game ends; the two gray vertices are neutral. Consequently, $U_1(\vec{s})=2, U_2(\vec{s})=3$ and $U_3(\vec{s})=1$.}
\label{Fig:game}
\end{figure}
In the competitive diffusion game, a social network is modeled by an undirected graph $G =(V, E)$, where the vertex set $V$ represents users, the edge set
$E$ does their connections; thus, their neighbors do friends they communicate.
After some user exhibits an opinion, it spreads along the
edges in the graph.
We then define a weight function $w: V \to \mathbb{Z}$ which represents
a level of importance of each user; we admit that a weight on a vertex can
be negative to express, for example, a very demanding customer.
For a positive integer $k$, we define $[k] = \{ 1, 2, \dots , k\}$, and
call the $k$ players  $p_1, p_2, \dots , p_k$.
The competitive diffusion game $(k, G, w)$ proceeds as follows.
(See Fig.~\ref{Fig:game}(a)-(d) for an explicit example.)
At time one, each player chooses a vertex in $V$;
suppose a player $p_i$, $i \in [k]$, chooses a vertex $v\in V$.
If any other player $p_j$, $i \ne j$,
does not choose the vertex $v$, then $p_i$ \emph{dominates} $v$; and otherwise (that is, if there exists a player $p_j$, $i \ne j$, who chooses $v$), $v$ becomes a \emph{neutral} vertex which
no player can dominate in the following time steps.
For each time $t$, $t \ge 2$, a vertex $v \in V$ is dominated by a player $p_i$ at time $t$
if (i) $v$ is neither neutral nor dominated by any player by time $(t-1)$,
and (ii) $v$ has a neighbor dominated
by $p_i$, but does not have a neighbor dominated by any player $p_j$,
$i \ne j$. If $v$ satisfies (i) and there are
two or more players who dominate neighbors of $v$, then $v$ becomes
a neutral vertex at time $t$. The game ends when no player can dominate
a vertex any more. 

Let $\vec{s}=(s^{(1)}, s^{(2)}, \dots , s^{(k)})\in V^k$ be the vector of vertices which the players
choose at the beginning of the game. We call $\vec{s}$ a \emph{strategy profile}.
For every $i\in [k]$,
we define a \emph{utility} $U_i(\vec{s})$ of $p_i$ for $\vec{s}$ as the sum of the weights
of the vertices which $p_i$ dominates at the end. (See Fig.~\ref{Fig:game}(d).)

We note that a neutral vertex is a key notion of the
competitive diffusion game; it sometimes gives critical effect on the result.
(See Fig~\ref{Fig:neutral}.)
This contrasts to a similar game, called Voronoi game, where
a player can dominate all the nearest vertices; if there is a vertex whose
distances to strategies of two or more players tie,
then they do not dominate but share
the vertex~\cite{Durr_Thang_2007,Feldmann:2009,TeramotoDU11}.
\begin{figure}[t]
\begin{center}
\includegraphics[width=6cm]{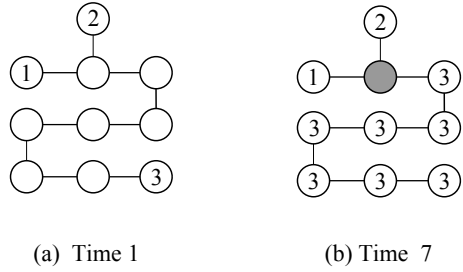}
\end{center}
\caption{The vertex $v_3$ becomes neutral at time 2, and consequently, $p_3$
dominates $v_4$ at time 7.}
\label{Fig:neutral}
\end{figure}

For an index $i \in [k]$,
we define $(\vec{s}_{-i}, v')$ as a strategy profile such that
$p_i$ chooses $v'$ instead of $s^{(i)}$, but any other player $p_j$, $i \ne j$, chooses
$s^{(j)}$:
$(\vec{s}_{-i}, v) = (s^{(1)}, s^{(2)}, \dots , s^{(i-1)}, v', s^{(i+1)},\dots , s^{(k)})$.
For simplicity, we write $U_i(\vec{s}_{-i}, v')$ for $U_i((\vec{s}_{-i}, v'))$.
Then, if $\vec{s}$ satisfies
\begin{eqnarray}\label{eq:condition_on_Nash}
U_i(\vec{s}_{-i}, v') \le U_i(\vec{s})
\end{eqnarray}
for every $i\in [k]$ and every $v'\in V$, we say that $\vec{s}$ is a
\emph{(pure) Nash equilibrium}. The strategy profile given in Fig.~\ref{Fig:game}(b)
is, in fact, a Nash equilibrium. 

Since Alon \emph{et al.} proposed the competitive diffusion game
in \cite{Alon2010221},
several results are obtained for the game on unweighted graphs.
In the same paper,
Alon \emph{et al.} observed that there exists an unweighted graph $G$ of
diameter three such that $(2, G, w)$ does not admit
a Nash equilibrium~\cite{Alon2010221}.
Takehara improved their result and showed that there exists
an unweighted graph $G$ of
diameter two such that $(2, G, w)$ does not admit a
Nash equilibrium~\cite{Takehara201259}.
(Besides, they refute a theorem given in \cite{Alon2010221}.)
Small and Mason considered the case where a social network has a tree structure,
and show that, for any unweighted tree $G$,
$(2, G, w)$ has a Nash equilibrium~\cite{SM13}.

In this paper, we consider a problem, named {\CD}, of deciding whether,
given the number $k$, a graph $G$ and weight function $w$, the
competitive diffusion game $(k, G, w)$ has a Nash equilibrium.
Recently, Etesami and Baser studied the problem,
and showed that {\CD} is a NP-complete problem~\cite{RB14}.
We strengthen their result, and generalize it in terms of weight functions.
More formally, we obtain the following
three negative results:
(i) {\CD} is W[1]-hard when parameterized by the number of players
even for unweighted graphs;
(ii) {\CD} is NP-complete even for series-parallel graphs with positive integer weights;
and (iii) {\CD} is NP-complete even for forests with arbitrary integer weights.
Our results imply that the computational complexity of {\CD} may be
much influenced by weights on vertices.
We note that the results given in~\cite{RB14} does not imply ours.

Furthermore, we obtain several positive results on {\CD}.
For forests of paths, we prove that {\CD}
is solvable in pseudo-polynomial time.
In particular, we give a quadratic-time algorithm for forests of
unweighted paths.
For chain, cochain, and threshold graphs with arbitrary integer weights,
we also provide
polynomial time algorithms.
To the best of our knowledge,
no nontrivial algorithm is known even for unweighted version of
{\CD}.
 
The rest of the paper is organized as follows.
In~ Section~\ref{sec:Complexity}, we present our hardness
results for {\CD}.
In Section~\ref{sec:AlgForPaths}, we give algorithms for forests of paths.
In Section~\ref{sec:AlgForCochain}, we provide an algorithm for
chain, cochain, and threshold graphs.

\section{Complexity of {\CD}}\label{sec:Complexity}

In this section, we show that {\CD} is basically
hard computational problem.
In Section~\ref{ssec:W[1]_UW},
we show W[1]-hardness for unweighted graphs.
In Sections~\ref{ssec:Nonnegative}
and~\ref{ssec:NP_Arbitrary},
we show NP-completeness for series-parallel graphs with nonnegative integer weights
and NP-completeness for forests with arbitrary integer weights, respectively.

\subsection{Unweighted Graphs}\label{ssec:W[1]_UW}

In this section, we prove that
the problem is $W[1]$-hard for the number of players,
as in the following theorem.

\begin{theorem}
{\CD} is $W[1]$-hard even for unweighted graphs
when parameterized by the number of players.
\end{theorem}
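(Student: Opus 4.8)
The plan is to reduce from a standard $W[1]$-hard problem parameterized by solution size; the most natural candidate is \textsc{Multicolored Clique} (equivalently \textsc{Clique} parameterized by $k$). Given an instance $(H, k)$ of \textsc{Multicolored Clique} with color classes $V_1, \dots, V_k$, I would construct an unweighted graph $G$ and set the number of players to be $k' = k + O(1)$ (perhaps $k$ plus a constant number of "gadget" players), so that $(k', G, w)$ admits a pure Nash equilibrium if and only if $H$ has a multicolored clique. The parameter $k'$ depends only on $k$, which is what $W[1]$-hardness with this parameterization requires.

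The core of the construction is a \emph{selection gadget}: for each color class $V_i$ I would attach a vertex-selection region whose structure forces any equilibrium to have exactly one of the $k$ "choice" players seated at a vertex corresponding to some $u \in V_i$. The mechanism I would use exploits the neutral-vertex phenomenon highlighted in Fig.~\ref{Fig:neutral}: I would build the graph so that if two choice players pick non-adjacent vertices of $H$ (i.e., their selected vertices are not joined by an edge in $H$), then some critical high-payoff region becomes reachable/neutral in a way that gives one of them a profitable deviation — hence no equilibrium can exist unless all $\binom{k}{2}$ pairs of chosen vertices are pairwise adjacent, i.e., they form a clique. Conversely, when the chosen vertices do form a multicolored clique, I would design a small number of auxiliary "referee" players and dummy pendant structures so that every player is already getting their maximum attainable utility, certifying a Nash equilibrium. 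Distances in $G$ must be calibrated carefully (as in the diameter-two/three examples of Alon \emph{et al.} and Takehara) so that the timing of the diffusion process realizes exactly the intended domination pattern.

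The main obstacle — and where most of the technical work will go — is controlling the diffusion dynamics globally: unlike the Voronoi game, here a single neutral vertex can block an entire subtree or region (Fig.~\ref{Fig:neutral}), so the gadgets interact through timing in subtle ways. I would need to (a) ensure the selection gadgets cannot be "hijacked" by a player seating outside the intended choice vertices, (b) verify that the referee/dummy players have no profitable deviation into the selection regions, and (c) check that in the "yes" case the equilibrium is genuinely stable against \emph{every} single-player deviation to \emph{every} vertex, not just the obvious ones. A secondary subtlety is that adversarial equilibria might arise that do not correspond to any clique; to rule these out I expect to need "canonical-form" lemmas showing that in any equilibrium the choice players must occupy the selection vertices (e.g., by making all other vertices individually low-value or already contested). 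Once these structural lemmas are in place, the equivalence between equilibria of $(k', G, w)$ and multicolored cliques in $H$ should follow by a case analysis on where the players sit.
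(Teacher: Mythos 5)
Your proposal is an outline, not a proof: every load-bearing component is left as a promise. You write that you ``would construct'' a selection gadget forcing one choice player per color class, that you ``would design'' referee players and dummy structures, and that you ``would need to'' prove canonical-form lemmas ruling out adversarial equilibria --- but none of these objects is actually specified, and you yourself identify exactly these steps as ``where most of the technical work will go.'' For a hardness reduction the construction \emph{is} the proof; without a concrete graph, a concrete player count, and a verified two-way equivalence, there is nothing to check. So as it stands this is a genuine gap, not a completed argument by a different route.

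Two more specific points. First, your choice of \textsc{Multicolored Clique} works against the grain of the game's mechanics. The natural penalty mechanism here --- two players choosing vertices with a common neighbor cause that neighbor to become neutral --- punishes \emph{adjacency}, not non-adjacency. The paper exploits exactly this by reducing from \textsc{Independent Set}: it subdivides every edge of $G$ (so two adjacent chosen vertices lose the subdivision vertex to neutrality), pads every original vertex to uniform degree $n$ with pendants, adds a hub $b$ with $\Theta(n^3)$ leaves to absorb one player and sweep up all unchosen original vertices, and places a disjoint path component whose ends are worth exactly $n+1$ to a deviator. Then a player seated on a vertex of a non-independent set earns at most $n$ and profitably deviates to the path, while an independent set gives everyone exactly $n+1$ and no improving move. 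To encode cliques instead you would have to build a gadget that creates a neutral or lost vertex precisely when two chosen vertices are \emph{not} adjacent, which in effect means complementing the graph and reducing from independent set anyway. Second, your sufficiency direction (``every player is already getting their maximum attainable utility'') is not enough stated at that level of generality: the paper has to verify deviations vertex by vertex, including deviations of the choice players into the other component and deviations of the hub player off the hub, and the disjoint escape component is calibrated to exactly $n+1$ precisely so that these checks close. Until you commit to a concrete construction and carry out that case analysis, the theorem is not established.
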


We prove the theorem by reduction from a well-known $W[1]$-hard
problem, {\sc Independent Set}~\cite{Flum06}.
Given a graph $G = (V, E)$ and a positive integer $k$,
{\sc Independent Set} asks whether there exists an independent set $I$
of size at least $k$, where a set $I$($\subseteq V$) is called an independent
set if there is no pair of vertices $u, v \in I$ such that $(u, v) \in E$.
Below we show that, given $G$ and $k$, one can construct a competitive diffusion game $(k+3, G', w')$ such that $G$ has an independent set $I$ of
size $|I| \ge k$
if and only if  $(k+3, G', w')$ has a pure Nash equilibrium, where
$G'$ and $w'$ is a graph and weight obtained from $G$.
Let $n =|V|$ and $m=|E|$, and let  $\delta_v$ be the degree of $v$
for every $v\in V$.

\begin{figure}[t]
\begin{center}
\includegraphics[width=85mm,clip]{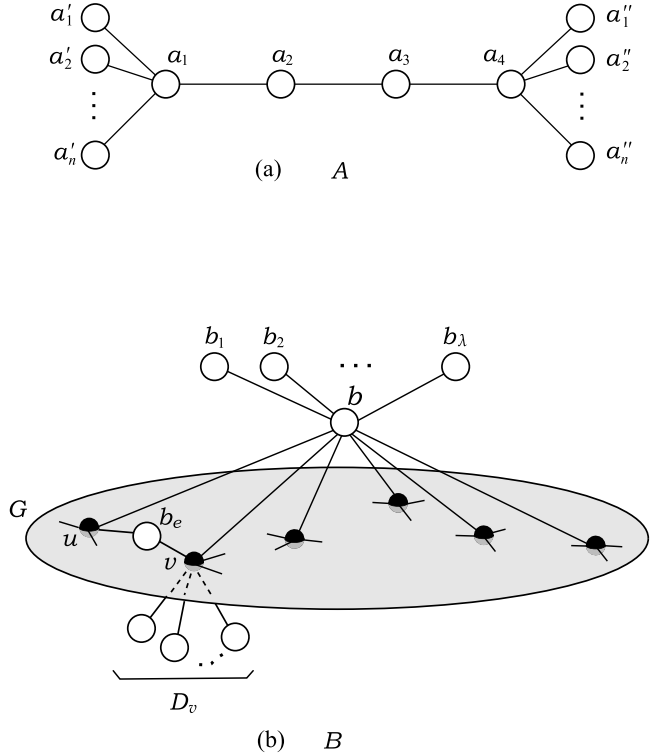}
\caption{Unweighted graph $G'$. (a) Graph $A$. (b) Graph $B$; The black vertices originate from $G$, and compose $V$.}
\label{fig:general}
\end{center}
\end{figure}
\smallskip\noindent
[Construction of $G'$ and $w'$]

The graph $G'$ consists of two disjoint component graphs $A=(V_A, E_A)$
and $B=(V_B, E_B)$.

We obtain the graph $A$ as follows.
We construct a path of four vertices $a_1, a_2, a_3, a_4$; and
make $2n$ vertices $\Gada, \Gadb, \dots , \Gadd$ and
$\Gdda,\Gddb, \dots ,\Gddd$. Then
we connect $a_1$ to $\Gada, \Gadb, \dots , \Gadd$, and
connect $a_4$ to $\Gdda,\Gddb, \dots ,\Gddd$.
See Fig.~\ref{fig:general}(a).
\if0
\begin{figure}[tb]
\begin{center}
\includegraphics[width=85mm,clip]{Gprime.pdf}
\caption{The unweighted graph $G'$. (a) Graph $A$. (b) Graph $B$; The black vertices originate from $G$, and compose $V$.}
\label{fig:general}
\end{center}
\end{figure}
\fi
We obtain the graph $B$ from the original graph $G$ as follows.
For every edge $e=(u, v)\in E$, we add a vertex $b_e$
subdividing $e$. Then, for each $v \in V$, we make
a set $D_v$ of $n-\delta_v$ vertices, and connect
$v$ to every $u \in D_v$.
Lastly we make a vertex $b$ and $\bw$ vertices $b_1, b_2, \dots , b_{\bw}$,
where $\bw$ is a sufficiently large number satisfying $\bw = \Theta (n^3)$,
and connect $b$ to every $v \in V$, and connect $b$ to
$b_1, b_2, \dots , b_{\bw}$. 
\if0
More formally,
\[
V_B = V \cup \{ b_e \mid e \in E\} \cup \bigcup_{v\in V} D_v
\cup\{ \bcenter\}\cup \{ b_1, b_2 ,\dots , b_{\bw} \} 
\]
and
\begin{eqnarray*}
E_B = \bigcup_{e = (u, v) \in E} \{ (u, b_e), (b_e, v)\} \cup
\bigcup_{v\in V} \{ (u, v) \mid u \in D_v \} \cup \\
\{ (v, \bcenter) \mid v \in V \} \cup
\{ (\bcenter, b_i) \mid 1 \le i \le \bw\}.
\hspace*{-15mm}
\end{eqnarray*}
\fi

Thus, $G'=(V', E')$ is such a graph that $V' = V_A \cup V_B$ and
$E' = E_A\cup E_B$. We define the weight of every vertex as one:
$w': V' \to \{ 1\}$.

\smallskip

We now show that  $G$ has an independent set $I$ of
size $|I| \ge k$ if and only if  $(k+3, G', w')$ has a pure Nash equilibrium.
We first verify the sufficiency.

\begin{lemma}
If there exists an independent set $I \subseteq V$ satisfying
$k \le |I|$, then $(k+3, G', w')$ has a Nash equilibrium.
\end{lemma}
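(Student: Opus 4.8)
The plan is to exhibit an explicit strategy profile $\vec{s}$ for $(k+3, G', w')$ and verify directly that it is a Nash equilibrium; I assume $k \ge 1$, the case $k = 0$ being an easy simplification. Fix any $k$-element independent set $I' = \{u_1, \dots, u_k\} \subseteq I$ (possible since $|I| \ge k$). I place two ``spare'' players on the internal path vertices $a_2$ and $a_3$ of $A$, one player on the hub $b$ of $B$, and the remaining $k$ players on $u_1, \dots, u_k$, one per vertex; this uses $2 + 1 + k = k+3$ players. It matters that the spares sit on $a_2, a_3$ rather than on $a_1, a_4$: from the profile with players on $a_1$ and $a_4$ the player on $a_1$ could move to $a_3$ and raise its utility from $n+2$ to $n+3$, whereas from $\{a_2, a_3\}$ no such move helps.

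I first simulate the diffusion from $\vec{s}$ to read off the utilities. In $A$: the player on $a_2$ captures $a_1$ at time $2$ and the leaves $\Gada, \dots, \Gadd$ at time $3$, while $a_3$ is blocked by its own occupant and $a_4$ goes to the player on $a_3$; so each spare has utility $n+2$. In $B$: the hub player captures all $\bw$ pendants $b_1, \dots, b_{\bw}$ at time $2$, so its utility is at least $\bw$, and since the number of vertices of $G'$ other than these pendants is $O(n^2)$ while $\bw = \Theta(n^3)$, we may choose the hidden constant so that $\bw$ exceeds the total weight of all other vertices. Each player on $u_j \in I'$ captures $u_j$ at time $1$ and, at time $2$, its private leaves $D_{u_j}$ together with the subdivision vertex $b_e$ of every edge $e$ incident to $u_j$; since $I'$ is independent the far endpoint of each such $e$ lies outside $I'$ and is reached only at time $2$ from $b$, so no such $b_e$ turns neutral, and because every $v \in V$ has exactly $\delta_v + (n - \delta_v) + 1 = n+1$ neighbours in $G'$, this player's utility is exactly $(n - \delta_{u_j}) + \delta_{u_j} + 1 = n+1$.

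The bulk of the proof is checking that no unilateral deviation is profitable, organized by the type of vertex moved onto. If the hub player moves anywhere, $b$ is vacated; a player not on $b$ at time $1$ can reach a pendant only after first capturing $b$, but the vacated $b$ either turns neutral at time $2$ (two $I'$-vertices are adjacent to it, or the deviator itself lands on a neighbour of $b$) or is captured at time $2$ by the single $I'$-player adjacent to it, so the former hub player gets none of the $b_i$ and hence utility $O(n^2) < \bw$. For a spare or an $I'$-player: moving onto $b$, or onto any vertex currently occupied by another player (including the other of $a_2, a_3$), makes that vertex neutral and gives the deviator $0$; moving onto a free vertex $v \in V$ yields $v$, $D_v$, and the subdivision vertices at $v$ whose far endpoint avoids $I'$ (the others turning neutral), for a total $n+1-t \le n+1$ with $t$ the number of $I'$-neighbours of $v$; moving onto a pendant $b_i$, a subdivision vertex $b_e$, a leaf of some $D_v$, or a leaf of $a_1$ or $a_4$ gives at most $1$, since the unique exit neighbour is already taken at time $1$ or is simultaneously reached by the hub player at time $2$ and so turns neutral; and moving onto $a_1$ or $a_4$ gives at most $n+1$ by a short simulation, the mover being boxed in by the occupants of $a_2, a_3$. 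Since each spare has $n+2$, each $I'$-player has $n+1$, and the hub player has $\ge \bw$, no deviation pays, so $\vec{s}$ is a Nash equilibrium.

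I expect the delicate point to be the ``free $v \in V$'' case: one must confirm that relocating an $I'$-player to another vertex cannot exceed $n+1$ even though it frees that player's old seed $u_j$, the point being that the freed $u_j$ is immediately recaptured by the hub player and is never reached by the deviator, so the uniform degree $n+1$ together with the edge subdivisions yields the $n+1-t$ bound; symmetrically, the $D_v$-padding is precisely what makes every leaf-type target a one-vertex dead end. Once these diffusion facts are pinned down, the equilibrium check reduces to the finite case analysis above.
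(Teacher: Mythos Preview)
Your construction and case analysis match the paper's proof: the same profile (players on $a_2,a_3$, on $b$, and on $k$ vertices of an independent set) with the same utility counts $n+2$, $\ge \lambda$, and $n+1$, followed by the same deviation check. One small slip: your blanket claim that moving to a leaf of $a_1$ or $a_4$ yields utility at most $1$ is false when the deviating player is the spare that was sitting on the adjacent $a_2$ (resp.\ $a_3$); in that case the deviator captures $a_1$ at time $2$ and then the remaining $n-1$ leaves, for utility $n+1$. This does not hurt the argument since $n+1<n+2$, and indeed the paper simply bounds any move into $V_A$ by $n+1$ rather than by $1$.
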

\begin{proof}
 Let $I' \subseteq I$ be an arbitrary fixed subset of $I$ such that
 $|I'| = k$, and let $v_1, v_2, \dots , v_k$ be the vertices in $I'$.
  Then
 the following strategy profile $\vec{s}^*$ is shown to be a Nash equilibrium.
 
 \smallskip
 \noindent
 {\bf The strategy profile $\vec{s}^*$}: For each $i\in [k]$,
 $p_i$ chooses $v_i$. The player $p_{k+1}$ chooses $b$.
 The players $p_{k+2}$ and $p_{k+3}$ choose $a_2$ and $a_3$, respectively.
 \smallskip

Let us estimate the utility of each player.
Firstly, consider $p_i$ for every $i\in [k]$.
The player $p_i$ chooses $v_i$, and $v_i$ is connected to $n-\delta_v$ vertices
in $D_v$ and $\delta_v$ subdivisions. Since $I'$ is
an independent set, $p_i$ dominates all these vertices.
However, at time two, $p_{k+1}$ dominates every vertex in $V\backslash I'$.
Thus, for every $i \in [k]$,
\begin{eqnarray}\label{eq:unweight_U_1_k}
U_i(\vec{s}^*) = n+1.
\end{eqnarray}
Secondly, consider the player $p_{k+1}$. Since $p_{k+1}$ dominates at least
$b$ and $b_1, b_2, \dots , b_{\bw}$,
\begin{eqnarray}\label{eq:unweight_U_k+1}
U_{k+1}(\vec{s}^*) \ge \bw + 1.
\end{eqnarray}
Lastly, we have
\begin{eqnarray}\label{eq:unweight_U_k+2_k+3}
U_{k+2}(\vec{s}^*) = U_{k+3}(\vec{s}^*) = n+2.
\end{eqnarray}

Below we show that $\vec{s}^*$ is a Nash equilibrium
by verifying Eq.~(\ref{eq:condition_on_Nash}).
We omit the cases where a player changes the strategy to
a vertex chosen by another player.

\smallskip
\noindent
(i) $p_1, p_2, \dots , p_k$

Consider an arbitrary player $p_i$, $i \in [k]$. 
If $p_i$ chooses a vertex in $V_A$, then, clearly $p_i$ dominates
at most $n+1$ vertices.
If $p_i$ chooses $b_e$ for some $e \in E$ or $u \in D_v$ for some $v$, then 
$p_i$ dominates exactly one vertex due to $p_{k+1}$.
Suppose $p_i$ chooses $v' \in V$.
If there exists $e = (u, v')$ such that $u \in I'$,
$p_i$ does not dominate a vertex $b_e$, and hence
$p_i$ dominates less than $n+1$ vertices. Otherwise,
$p_i$ dominates exactly $n+1$ vertices, as in Eq.~(\ref{eq:unweight_U_1_k}).
Thus, the utility of $p_i$ does not increase from Eq.~(\ref{eq:unweight_U_1_k}).

\smallskip
\noindent
(ii) $p_{k+1}$

If $p_{k+1}$ chooses any vertex in $V_A$, $p_{k+1}$ dominates
at most $n+1$ vertices.
If $p_{k+1}$ chooses any vertex in $V_B\backslash \{b \}$, $p_{k+1}$ cannot
dominate $b$, and hence dominates at most
\begin{eqnarray*}
|V_B\backslash \{b, b_1, b_2, \dots , b_{\bw} \}| = n+m+\sum_{v\in V} (n-\delta_v) = n^2+ n - m.
\end{eqnarray*}
Since $\bw = \Theta(n^3)$,
 the utility of $p_{k+1}$ does not increase from Eq.~(\ref{eq:unweight_U_k+1}).

\smallskip
\noindent
(iii) $p_{k+2}, p_{k+3}$

Both of $p_{k+2}$ and $p_{k+3}$ dominate at most $n+1$ vertices, as same
in (i).
Thus, the utilities of $p_{k+2}$ and $p_{k+3}$ do not increase from Eq.~(\ref{eq:unweight_U_k+2_k+3}).
\end{proof}

We can also verify the necessity which complete the proof of the theorem.

\begin{lemma}\label{lem:UW_Leftarrow}
If $(k+3, G', w')$ has a Nash equilibrium, then
we can construct an independent set $I$ of $G$ such that $|I| = k$.
\end{lemma}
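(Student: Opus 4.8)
The plan is to argue the contrapositive in spirit: starting from an arbitrary Nash equilibrium $\vec{s}^*$ of $(k+3, G', w')$, I would first pin down where the players must sit, and then read off an independent set from the players located in $V$. The key structural fact I would establish is that, in any Nash equilibrium, exactly one player occupies the component $A$-region so as to secure the large-weight ``hub'' there — wait, $A$ has no hub; instead $A$ is a small gadget whose purpose is to \emph{force} exactly one player to grab $\bcenter$ in $B$. Concretely, the component $A$ can only ever absorb a bounded number (at most three) of players profitably: the path $a_1,a_2,a_3,a_4$ together with the $2n$ pendant vertices means a player at $a_2$ or $a_3$ gets $\Theta(n)$, but beyond two or three players $A$ saturates and any further player there is dominated down to a tiny utility. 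So I would show: (1) at least one player must choose $\bcenter$, for otherwise some player can deviate to $\bcenter$ and collect $\bw+1 = \Theta(n^3)$ vertices, strictly more than the $O(n^2)$ available anywhere else; and (2) having fixed that player at $\bcenter$, the players in the $A$-side and the players in the $B\setminus\{\bcenter\}$-side are each pinned to a specific count.

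Next I would analyze the players who do not sit in $A$ and are not the one at $\bcenter$. In the proof of the previous lemma, a player $p_i$ at $v_i\in V$ obtains exactly $n+1$ (its $n-\delta_{v}$ private vertices in $D_v$, its $\delta_v$ subdivision vertices, plus $v_i$ itself, minus nothing, because $\bcenter$ snatches $V\setminus I'$ at time two). The $D_v$ sets and the per-edge subdivisions are precisely the bookkeeping that makes the attainable value from a vertex $v'\in V$ equal to $n+1$ \emph{if and only if} none of $v'$'s neighbours in $G$ is simultaneously occupied, and strictly less ($\le n$, because a subdivision vertex $b_e$ on a contested edge goes neutral) otherwise. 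Symmetrically, a player sitting on a $b_e$ or on a $D_v$ vertex gets essentially nothing once $\bcenter$ is taken, so in equilibrium no player would choose such a vertex when it could instead reach $n+1$ by moving to a free vertex of $V$. Hence every player that is neither in $A$ nor at $\bcenter$ must sit on a vertex of $V$, and the set $I$ of such vertices is an independent set of $G$: if two of them were adjacent in $G$, the corresponding subdivision vertex would be neutral, dropping at least one of their utilities below $n+1$, and that player could profitably deviate.

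Finally I would do the counting to force $|I| = k$ exactly. There are $k+3$ players; I would show $A$ hosts exactly two of them (the configuration with $a_2,a_3$ giving $n+2$ each, matching Eq.~(\ref{eq:unweight_U_k+2_k+3}), and showing neither one nor three is stable — one leaves room for a profitable move into $A$, three forces some $A$-player to a value well below what a free $V$-vertex offers), and $\bcenter$ hosts exactly one, leaving exactly $k$ players on $V$, all on pairwise non-adjacent vertices, i.e.\ an independent set of size $k$. The main obstacle I expect is case~(2): carefully verifying that the $A$-gadget admits \emph{exactly} two players in every equilibrium — ruling out the boundary cases and the possibility that a player ``parks'' on a pendant vertex $a'_i$ or $a''_i$ — and dovetailing this with the $\bcenter$ count so that precisely $k$ players are squeezed onto $V$. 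The deviation arguments are local and the arithmetic ($n+1$ versus $n$ versus $O(n^2)$ versus $\Theta(n^3)$) is robust because of the $\bw=\Theta(n^3)$ choice, so the difficulty is organizational rather than deep: enumerating the possible equilibrium ``shapes'' and knocking each one out except the intended one.
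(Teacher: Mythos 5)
Your plan follows the paper's proof essentially step for step: the paper first establishes (as a proposition proved in the appendix) that every Nash equilibrium is ``standard'' --- exactly two players at $a_2,a_3$, one at $b$, and the remaining $k$ on distinct vertices of $V$ --- and then derives independence exactly as you propose, from the fact that a contested subdivision vertex $b_e$ goes neutral, capping the utility of an adjacent pair at $n$, while a deviation yields $n+1$. The one detail to tighten is that deviation target: moving to ``a free vertex of $V$'' need not give $n+1$ (its neighbours may be occupied), so the deviation should be to $a_1$, whose $n$ pendant vertices guarantee utility exactly $n+1$ since $a_2$ is occupied, which is what the paper does.
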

\begin{proof}
Let $\vec{s}^*$ be an arbitrary  Nash equilibrium of $(k+3, G', w')$.
We say that a strategy profile $s$ is \emph{standard} if
$s$ satisfies the following conditions:
\begin{description}
\item[(A.1)] $k$ players choose different $k$ vertices in $V \subseteq V'$.
\item[(A.2)] A player choose $b \in V'$. 
\item[(A.3)] Two players choose $a_2, a_3 \in V'$.
\end{description}
\noindent
We then have the following proposition whose proof will be given in the appendix.
\begin{proposition}\label{lem:UW_standard}
If $(k+3, G', w')$ has a Nash equilibrium $\vec{s}^*$, then $\vec{s}^*$ is standard. 
\end{proposition}

Proposition~\ref{lem:UW_standard} implies that $\vec{s}^*$ is standard.
Thus, without loss of generality, we can assume that
the player $p_i$ chooses $v_i \in V$ for every $i\in [k]$, $p_{k+1}$ does
$b$, $p_{k+2}$ does $a_2$, and $p_{k+3}$ does $a_3$.
Then, we can show that
$I = \{v_1, v_2, \dots , v_k \}$ is an independent set, as follows.

Suppose for the sake of contradiction that
there exists an edge $e = (v_i, v_{i'}) \in E$ for $i, i' \in [k]$.
Since $b_e$ becomes neutral at time two, neither $p_i$ nor $p_{i'}$ 
dominate $b_e$. Moreover, $p_{k+1}$ dominates every vertex $v \in V\backslash I'$.
Thus, we have
$U_i(\vec{s}^*) \le n$ and $U_{i'}(\vec{s}^*) \le n$.
On the other hand, if the players $p_i$ and $p_{i'}$ change their strategy to
$a_1$ (or $a_2$), they dominate $n+1$ vertices:
$U_i(\vec{s}^*_{-i}, a_1) = n+1$ and $U_{i'}(\vec{s}^*_{-i}, a_1) = n+1$,
which contradict that $\vec{s}^*$ is a Nash equilibrium.
\end{proof}

\subsection{Nonnegative Integer Weights}\label{ssec:Nonnegative}

In this section, we consider graphs with
nonnegative integer weights, and prove the following theorem.

\begin{theorem}\label{thm:Positive}
{\CD} is NP-complete even for series-parallel graphs with nonnegative
integer weights.
\end{theorem}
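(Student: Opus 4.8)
The plan is to establish membership in NP and then prove NP-hardness by reduction from a suitable NP-complete problem, most naturally a variant of \textsc{Partition} or \textsc{3-Partition}, so that the vertex weights can encode numerical quantities. Membership in NP is routine: given a strategy profile $\vec{s}$, the diffusion process is deterministic and terminates in at most $|V|$ steps, so one can compute every $U_i(\vec{s})$ in polynomial time; to verify that $\vec{s}$ is a Nash equilibrium one checks Eq.~(\ref{eq:condition_on_Nash}) for every player $i$ and every alternative vertex $v'$, which is only $k|V|$ diffusion simulations. Hence the certificate is $\vec{s}$ itself. (Strictly, $k$ is part of the input, but $k\le|V|$ may be assumed without loss of generality, so the whole check is polynomial.)

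For hardness I would start from \textsc{Partition}: given positive integers $c_1,\dots,c_n$ with $\sum_i c_i = 2S$, decide whether some subset sums to $S$. The idea is to build a series-parallel gadget in which a small fixed set of ``budget'' players must, in any Nash equilibrium, split a collection of weighted vertices into two groups of equal total weight; if no equal split exists, some player can always profitably deviate, so no equilibrium exists, while an equal split yields a stable profile. Concretely I would use two parallel paths (or two ``fans'') hanging off common source and sink vertices, place one vertex of weight proportional to $c_i$ for each item along a central structure, and add a couple of ``anchor'' players (as in the $W[1]$-hardness construction, where $p_{k+2},p_{k+3}$ sit on $a_2,a_3$) pinned to high-weight regions via the $\Theta(n^3)$-type padding trick so their positions are forced. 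The two remaining players then compete over the item-vertices from opposite ends; the frontier where their domination waves meet determines the bipartition, and a profitable deviation exists exactly when the two sides are unbalanced. Series-parallelity is maintained because every gadget is a composition of paths in series and in parallel.

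The key steps, in order, are: (1) prove NP membership as above; (2) describe the series-parallel instance $(k,G,w)$ built from a \textsc{Partition} instance, keeping all weights nonnegative integers and the size polynomial; (3) show the ``pinning'' lemma — in any Nash equilibrium the anchor players occupy their intended vertices, using weight padding so no deviation helps them and no other player wants their slots; (4) show that conditioned on the anchors being pinned, a Nash equilibrium exists iff the two competing players can position themselves so the induced split of $\{c_1,\dots,c_n\}$ is balanced, which happens iff the \textsc{Partition} instance is a yes-instance; (5) conclude. The main obstacle I expect is step (3)–(4): the neutral-vertex mechanics of competitive diffusion make the outcome of a deviation genuinely delicate (a single neutral vertex can block an entire wave, as Fig.~\ref{Fig:neutral} shows), so I would need to choose the gadget so that deviations have clean, easily analyzed effects — likely by inserting neutral-forcing ``buffer'' vertices between sub-gadgets and by making the anchor weights dominate so that cross-gadget interference is impossible. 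Getting the arithmetic of the balance condition to line up exactly with ``subset sums to $S$,'' rather than merely ``close to $S$,'' is the crux, and is where the precise choice of edge subdivisions and the $\Theta(n^3)$ padding magnitude must be tuned.
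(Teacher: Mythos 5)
Your NP-membership argument and your choice of \textsc{Partition} as the source problem both match the paper, but the core of your hardness gadget has a genuine gap: you propose that \emph{two} competing players race toward each other over the item vertices and that ``the frontier where their domination waves meet determines the bipartition.'' A meeting frontier along a linear (or series-parallel path-like) arrangement can only realize partitions of the items into a prefix and a suffix of whatever ordering you chose --- only $O(n)$ of the $2^n$ subsets --- so step (4) of your plan cannot establish the equivalence with ``some subset sums to $S$.'' No tuning of buffer vertices or padding magnitudes fixes this; the encoding itself is too weak, and the neutral-vertex mechanics you worry about are a secondary issue.

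The paper's construction sidesteps this by using $2n+4$ players, i.e.\ one dedicated player \emph{per item}. Each item $j$ lives in its own $5$-cycle gadget with a weight-$s_j$ vertex $b_{j,1}$, and the two collector hubs $b'$ and $b''$ are adjacent to $b_{j,2}$ and $b_{j,5}$ respectively; the $j$-th item player's binary choice between $b_{j,3}$ and $b_{j,4}$ routes $s_j$ to exactly one of the two hubs, so an arbitrary subset $S'\subseteq[2n]$ is encodable. The numerical tightness you identify as the crux is then handled not by $\Theta(n^3)$ padding (that trick belongs to the unweighted $W[1]$-hardness proof) but by an exact accounting argument: the separate path component $A$ has end vertices of weight $2\alpha$, so every player can guarantee utility $2\alpha$ by deviating there, while the total weight available in the item component is exactly $2\alpha(2n+2)$ for the $2n+2$ players in it. Hence in any equilibrium both hub players must receive exactly $\alpha+\sum_{j\in S'}s_j=2\alpha$, which forces the balanced split. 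You would need to replace your two-player frontier mechanism with some per-item selection device of this kind before the reduction can go through.
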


It is easy to observe that {\CD} is contained in NP:
Given the number $k$ of players, graph $G$, weight $w$ and
a strategy profile $\vec{s}$, we can compute $U_i(\vec{s})$
for every $i$, $1 \le i \le k$, in polynomial time just by simulating the game
$(k, G, w)$.
Thus,
we can verify in polynomial time
whether $\vec{s}^*$ is a Nash equilibrium by checking Eq.~(\ref{eq:condition_on_Nash})
for every pair of $i$, $1 \le i \le k$, and $u'\in V$.

We below prove that {\CD} is NP-hard by a reduction from {\sc Partition}
which asks, given a multiset $S=\{ s_1, s_2, \dots , s_{2n}\}$ of $2n$ positive integers, whether there exists a subset
$S' \subseteq [2n]$ such that
\begin{eqnarray}\label{eq:partition}
\alpha = \sum_{j\in S'} s_j = \sum_{j\in [2n] \backslash S'} s_j
\end{eqnarray}
where $\alpha = (\sum_{j=1}^{2n} s_j)/2$.
{\sc Partition} is well-known to be NP-complete~\cite{GareyJ79}.

We below construct a 
series-parallel graph $G_S$ and a weight function
$w_S$ for $G_S$ such that there exists $S'\subseteq [2n]$
satisfying Eq.~(\ref{eq:partition})
if and only if $(2n+4, G_S, w_S)$ has a Nash equilibrium.

\begin{figure}[t]
\begin{center}
\includegraphics[width=118mm,clip]{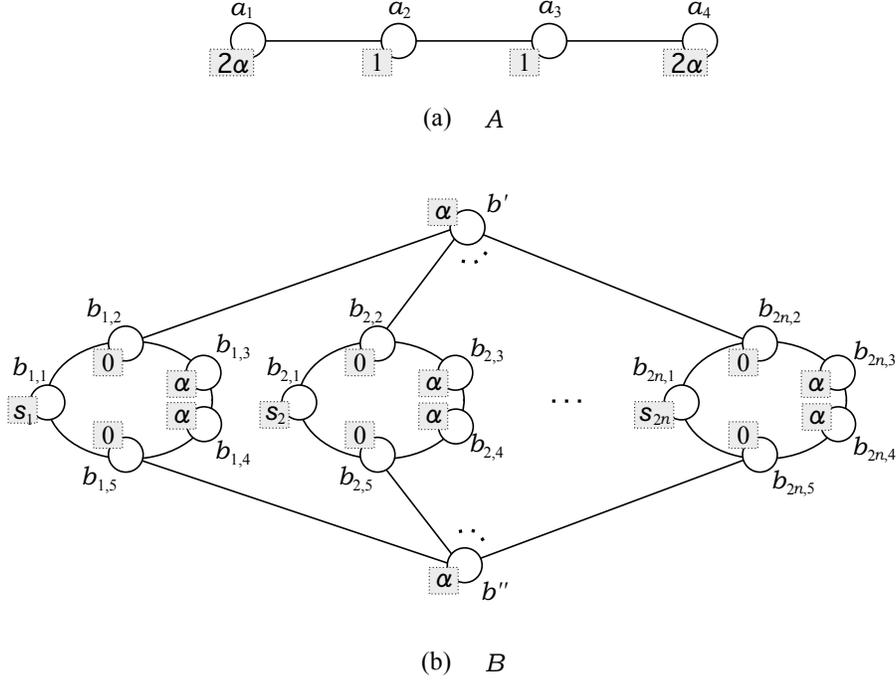}
\caption{Graph $G_S$ for nonnegative integer weights. (a) Graph $A$. (b) Graph $B$.}
\label{fig:positive}
\end{center}
\end{figure}
\smallskip\noindent
[Construction of $G_S$ and $w_S$]

The graph $G_S$ consists of two disjoint component graphs $A=(V_A, E_A)$
and $B=(V_B, E_B)$.

The graph $A$ is a path of four vertices $a_1, a_2, a_3, a_4$, where
for every $i$, $1 \le i \le 4$,
$w_S(a_1) = 2\ps, w_S(a_2) = 1, w_S(a_3) = 1, w_S(a_4) = 2\ps$.
See Fig.~\ref{fig:positive}(a).

We construct $B$, as follows. For every $j \in [2n]$,
we make a cycle of five vertices $b_{j, 1}, b_{j, 2}, b_{j, 3}, b_{j,4},b_{j,5}$
placed in clockwise order.
We complete $B$ by adding two vertices $b'$ and $b''$, and
connecting $b'$ to $b_{j, 2}$ while
connecting  $b''$ to $b_{j, 5}$ for every $j \in [2n]$.
The weights of the vertices in $V_B$ are defined as follows:
$w_S(b_{j, 1}) = s_j, \quad w_S(b_{j, 2}) = 0, \quad w_S(b_{j, 3}) = \ps,
\quad w_S(b_{j, 4}) = \ps, \quad w_S(b_{j, 5}) = 0$;
and
$w_S(b') = w_S(b'') = \ps$. See Fig.~\ref{fig:positive}(b).

We now verify the following two lemmas.
\begin{lemma}
If there exists $S'\subseteq [2n]$
satisfying Eq.~(\ref{eq:partition}), then
$(2n+4, G_S, w_S)$ has a Nash equilibrium.
\end{lemma}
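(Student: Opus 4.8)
The plan is to mimic the sufficiency argument from the unweighted case: given a partition $S'\subseteq[2n]$ with $\sum_{j\in S'}s_j=\sum_{j\in[2n]\setminus S'}s_j=\alpha$, exhibit an explicit strategy profile $\vec{s}^*$ and verify Eq.~(\ref{eq:condition_on_Nash}) for every player and every alternative vertex. The natural candidate profile uses the four ``special'' players on the path $A$ and the $2n$ remaining players on the gadget cycles in $B$. Concretely, I would place two players on $a_1$'s side and two on $a_4$'s side --- say $p_{2n+1}$ on $a_2$ and $p_{2n+2}$ on $a_3$ (so that $a_1$ and $a_4$, each of weight $2\alpha$, are each dominated by one of them, giving utility $2\alpha+1$ apiece), and then use the two ``partition-encoding'' players $p_{2n+3}$ and $p_{2n+4}$ on $b'$ and $b''$. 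The $j$-th gadget player $p_j$, $j\in[2n]$, chooses $b_{j,1}$ if $j\in S'$ and some appropriate vertex (e.g. $b_{j,3}$ or $b_{j,4}$) if $j\notin S'$; the exact placement is what encodes the partition. Under this profile the player $p_{2n+3}$ on $b'$ reaches, at time $2$, every $b_{j,2}$ and then at later times the weight-$\alpha$ vertices $b_{j,3}$ (resp. $b_{j,4}$) of the gadgets whose $p_j$ sits on $b_{j,1}$ --- precisely the $j\in S'$ ones --- while $p_{2n+4}$ on $b''$ symmetrically collects the corresponding vertices on the $b_{j,5}$ side. The partition condition is exactly what makes $U_{2n+3}(\vec{s}^*)=U_{2n+4}(\vec{s}^*)$, so neither of these two players can strictly improve by swapping $b'$ for $b''$.

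The verification then proceeds player-class by player-class, exactly as in the unweighted lemma. For the path players $p_{2n+1},p_{2n+2}$: any deviation into $B$ yields utility at most roughly $\alpha$ (a single gadget is worth only a constant times $\alpha$ and $b',b''$ are occupied), and any deviation within $A$ is no better than $2\alpha+1$; I would set the scale so that $2\alpha$ dominates. For a gadget player $p_j$: deviating to another gadget $b_{j',\cdot}$ runs into the player already there or into $p_{2n+3},p_{2n+4}$ cutting off the $\alpha$-weight vertices, and deviating to $A$ gives at most $2\alpha+1$, which I must ensure is not an improvement --- this is the delicate constraint, and it is probably why the construction uses $w_S(a_1)=w_S(a_4)=2\alpha$ rather than something larger: the gadget players must already be getting at least $2\alpha+1$, i.e. the weight $s_j$ plus the two $\alpha$-vertices it can reach must sum to at least $2\alpha+1$. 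For $p_{2n+3},p_{2n+4}$: deviating off $b',b''$ loses access to the bulk of $B$, and the only symmetric-breaking move, swapping $b'\leftrightarrow b''$, leaves the utility unchanged by the partition equality.

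The \textbf{main obstacle} I expect is getting the gadget-player utilities exactly right so that no gadget player wants to jump to $A$ and grab the weight-$2\alpha$ endpoint. This forces a careful bookkeeping of who dominates each $b_{j,1},b_{j,3},b_{j,4}$ at each time step, and in particular confirming that when $p_j$ sits on $b_{j,1}$ it actually collects enough weight (from $b_{j,1}$ together with whatever portion of $\{b_{j,3},b_{j,4}\}$ is not snatched by the $b'$/$b''$ players) to be at the ``indifference'' threshold. A secondary subtlety is the timing on the five-cycle: since $b'$ and $b''$ are each adjacent to all the $b_{j,2}$ and $b_{j,5}$ respectively, the diffusion from $p_{2n+3}$ and $p_{2n+4}$ reaches every gadget simultaneously at time~$2$, and one must check that the race for $b_{j,3},b_{j,4}$ inside each cycle is won (or tied, creating a neutral vertex) in exactly the way the construction intends; here is precisely where the choice between $j\in S'$ and $j\notin S'$ changes the outcome. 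Once these local computations are pinned down, assembling them into a verification of Eq.~(\ref{eq:condition_on_Nash}) is routine.
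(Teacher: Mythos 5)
Your profile does not work, and the failure is in the one concrete placement you commit to. Put $p_j$ on $b_{j,1}$ for $j\in S'$ while $b'$ and $b''$ are occupied. Since $b_{j,2}$ is adjacent both to $b_{j,1}$ and to $b'$, and $b_{j,5}$ is adjacent both to $b_{j,1}$ and to $b''$, at time~$2$ both $b_{j,2}$ and $b_{j,5}$ become neutral, the diffusion into the cycle stops there, and the two weight-$\alpha$ vertices $b_{j,3},b_{j,4}$ are never dominated by anyone. The gadget player ends with utility only $s_j<2\alpha$ and strictly improves by moving to $b_{j,3}$ (collecting $b_{j,3}$ and $b_{j,4}$ for $2\alpha$) or to $a_1$ (worth $2\alpha$), so your profile is not a Nash equilibrium. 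Relatedly, your bookkeeping of who collects what is inverted: for the partition sum $\sum_{j\in S'}s_j$ to appear in anyone's utility, the player on $b'$ must be the one who reaches the weight-$s_j$ vertex $b_{j,1}$, not the weight-$\alpha$ vertices. The profile that works places $p_j$ on $b_{j,4}$ when $j\in S'$ and on $b_{j,3}$ when $j\notin S'$; then $p_j$ always collects both $\alpha$-vertices $b_{j,3},b_{j,4}$ (utility $2\alpha$), while the $b'$-player walks $b'\to b_{j,2}\to b_{j,1}$ exactly for $j\in S'$ (the side on which $b_{j,2}$ is not blocked by the gadget player) and the $b''$-player symmetrically collects $b_{j,1}$ for $j\notin S'$, giving each of them $\alpha$ plus the sum of the $s_j$ over their half of the partition, i.e.\ $2\alpha$.

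A second, smaller error: the place where Eq.~(\ref{eq:partition}) is actually needed is not the ``swap $b'\leftrightarrow b''$'' indifference (that move makes the target vertex neutral and is never profitable anyway), but the deviation of the $b'$- and $b''$-players to $a_1$ or $a_4$, each worth exactly $2\alpha$; the balance of the partition is what guarantees $\alpha+\sum_{j\in S'}s_j\ge 2\alpha$ and $\alpha+\sum_{j\notin S'}s_j\ge 2\alpha$ simultaneously. Since the placement you propose is wrong and the remaining placements are left unspecified (``some appropriate vertex''), none of the deviation checks can actually be carried out, so the proposal has a genuine gap rather than a repairable slip.
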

\begin{proof}
Let $S' \subseteq [2n]$ be a subset satisfying
Eq.~(\ref{eq:partition}), and let $S'' = [2n]\backslash S'$.
Note that $|S'|+|S''|= 2n$.
Then, the following strategy profile $\vec{s}^*$ is shown to be
a Nash equilibrium.

\smallskip
\noindent
{\bf The strategy profile $\vec{s}^*$}:
For each $i \in [2n]$, $p_i$ chooses $b_{i, 4}$ if
$i \in S'$; and does $b_{i, 3}$ if $i \in S''$.
The player $p_{2n+1}$ chooses $b'$,
$p_{2n+2}$ does $b''$, $p_{2n+3}$ does $a_2$, and
$p_{2n+4}$ does $a_3$.

\smallskip

We estimate the utility of each player.
Consider an arbitrary player $p_i$, $i \in [2n]$.
Since, for every $j \in [2n]$, $b_{j, 2}$ and $b_{j, 5}$ either
become neutral or is dominated by $p_{2n+1}$ or $p_{2n+2}$,
we have, for every $i \in [2n]$,
\begin{eqnarray}\label{eq:positive_Ui}
U_i(\vec{s}^*)= 2\ps.
\end{eqnarray}
The player $p_{2n+1}$ dominates $b'$ together with $b_{j', 1}$ and $b_{j', 2}$
for every $j' \in S'$. Thus,
\begin{eqnarray}\label{eq:positive_U2n+1}
U_{2n+1}(\vec{s}^*)=\ps + \sum_{j' \in S'} s_{j'} = 2\ps.
\end{eqnarray}
Similarly, $p_{2n+2}$ dominates $b''$ together with $b_{j'', 1}$ and $b_{j'', 5}$
for every $j'' \in S''$. Thus,
\begin{eqnarray}\label{eq:positive_U2n+2}
U_{2n+2}(\vec{s}^*)=\ps + \sum_{j'' \in S''} s_{j''} = 2\ps.
\end{eqnarray}
Each of $p_{2n+3}$ and $p_{2n+4}$ dominate
the two vertices in $A$, and hence
\begin{eqnarray}\label{eq:positive_U2n+3_2n+4}
U_{2n+3}(\vec{s}^*)= U_{2n+4}(\vec{s}^*) = 2\ps+1.
\end{eqnarray}

We now verify that $\vec{s}$ is a Nash equilibrium.
We omit the cases where a player changes the strategy to a vertex
already chosen by another player.

\smallskip
\noindent
(i) $p_1, p_2, \dots , p_{2n}, p_{2n+3}, p_{2n+4}$

Consider an arbitrary player $p_i$, $i \in [2n]$.
Clearly, $U_i(\vec{s}^*_{-i}, v') \le 2\ps$ for every $v' \in V_A$.
If $p_i$ changes the strategy to 
$v' \in \{b_{j,1}, b_{j,2}, b_{j,3}, b_{j,4}, b_{j,5}\}$ for some $j \in [2n]
\backslash \{i \}$, then $U_i(\vec{s}^*_{-i}, v') \le \ps < 2\ps$
due to $p_j, p_{2n+1}$ and $p_{2n+2}$; and
if $p_i$ changes the strategy to 
$v' \in \{b_{i, 1}, b_{i, 2}, b_{i, 3}, b_{i, 4}, b_{i,5}\}$, then
$U_i(\vec{s}^*_{-i}, v') \le 2\ps$
due to $p_{2n+1}$ and $p_{2n+2}$, as desired.

We can similarly verify the case for $p_{2n+3}$ and $p_{2n+4}$.

\smallskip
\noindent
(ii) $p_{2n+1}$ and $p_{2n+2}$

Let $i \in \{ 2n+1, 2n+2\}$. Smilarly to (i),
$U_i(\vec{s}^*_{-i}, v') \le 2\ps$ for every $v' \in V_A$.
If $p_i$ changes the strategy to 
$v' \in \{b_{j,1}, b_{j,2}, b_{j,3}, b_{j,4}, b_{j,5}\}$ for some $j \in [2n]$,
then $U_i(\vec{s}^*_{-i}, v') \le s_j + \ps \le 2\ps$, as desired.
\end{proof}

\begin{lemma}\label{lem:Nonnegative_Leftarrow}
If $(2n+4, G_S, w_S)$ has a Nash equilibrium,
then there exists $S'\subseteq [2n]$
satisfying Eq.~(\ref{eq:partition}).
\end{lemma}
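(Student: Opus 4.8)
The plan is to mirror the proof of Lemma~\ref{lem:UW_Leftarrow}: show that any Nash equilibrium $\vec{s}^*$ of $(2n+4,G_S,w_S)$ is forced into a rigid ``standard'' shape, and then read off a solution of {\sc Partition}. Call a strategy profile \emph{standard} if two players occupy $a_2$ and $a_3$, one occupies $b'$, one occupies $b''$, and the remaining $2n$ players occupy the $2n$ five-cycles, exactly one per cycle, each on $b_{j,3}$ or $b_{j,4}$. We may assume $\alpha\ge 2$ (equivalently $\sum_j s_j\ge 4$): if $\sum_j s_j\le 3$ the {\sc Partition} instance is decided outright. Write $W_B=\sum_{v\in V_B}w_S(v)$; a direct count gives $W_B=4\alpha(n+1)=(2n+2)\cdot 2\alpha$.

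\emph{Step 1 (the component $A$).} First I would pin down that exactly two players lie in $A$, on $a_2$ and $a_3$. If at most one player lies in $A$, then one of $a_1, a_4$ is free while the at least $2n+3$ players in $B$ share weight at most $W_B$, so some of them has utility at most $W_B/(2n+3)<2\alpha$; it can move onto that free end of $A$, dominating a vertex of weight $2\alpha$, violating~(\ref{eq:condition_on_Nash}). If two players lie in $A$ but not on $\{a_2, a_3\}$ (or they coincide), a short check of the remaining placements exhibits, in each case, a strictly improving deviation of one of them \emph{within $A$}. If three or more players lie in $A$, a direct check shows some player there has utility at most $1$; then at most $2n+1$ players are in $B$, so (since occupying all $2n$ cycles together with $b', b''$ needs $2n+2$) at least one of $b'$, $b''$, or some whole cycle is unoccupied, and that player can move there and dominate a vertex of weight $\alpha\ge 2>1$, again violating~(\ref{eq:condition_on_Nash}). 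Hence $a_1$ and $a_4$ are free; in particular, every player in $B$ can move onto $a_1$ and guarantee itself utility exactly $2\alpha$ (dominating only $a_1$, as $a_2$ stays occupied).

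\emph{Step 2 (the component $B$).} The heart of the argument --- and, I expect, the main obstacle --- is upgrading Step~1 to the standard shape. First, every player in $B$ has utility \emph{exactly} $2\alpha$ and every vertex of $B$ is dominated: the $2n+2$ players in $B$ each have utility at least $2\alpha$ (the move-to-$a_1$ bound), their utilities sum to at most $W_B=(2n+2)\cdot 2\alpha$, so all these inequalities are equalities; in particular no two $B$-players coincide (coinciding players get $0$), so they occupy $2n+2$ distinct vertices. Now a case analysis on the dominated region of each $B$-player (a connected set of vertices, containing its seed, of total weight exactly $2\alpha$) forces the standard layout: the weight-$\alpha$ vertices $b', b'', \{b_{j,3},b_{j,4}\}_j$ are scarce ($4n+2$ of them for $2n+2$ regions); two weight-$\alpha$ vertices inside one region must be joined by a path of weight-$0$ vertices, leaving only the local patterns $\{b_{j,3},b_{j,4}\}$, $\{b',b_{j,3}\}$, $\{b'',b_{j,4}\}$; and all $b_{j,1}$'s (of total weight $2\alpha$) are dominated, while no connected weight-$2\alpha$ region contains $b_{j,1}$ and $b_{j',1}$ with $j\ne j'$ unless it also contains $b'$ or $b''$. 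Combining these pins $2n$ players onto the cycles (one each, on $b_{j,3}$ or $b_{j,4}$), one onto $b'$, and one onto $b''$; the few degenerate sub-cases that survive this analysis (for instance, some $s_j=\alpha$, or some $s_j,s_{j'}$ summing to $\alpha$) already exhibit a {\sc Partition} solution, so they can be dismissed.

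\emph{Step 3 (extracting the partition).} With $\vec{s}^*$ standard, let $q',q''$ be the players on $b',b''$ and $p_j$ the player in cycle $j$, and set $S'=\{j: p_j\text{ sits on }b_{j,4}\}$, $S''=[2n]\setminus S'$. Repeating the diffusion computation behind~(\ref{eq:positive_U2n+1}) and~(\ref{eq:positive_U2n+2}) --- for $j\in S'$, $q'$ captures $b_{j,2}$ then $b_{j,1}$ while $b_{j,5}$ goes neutral, and for $j\in S''$, $q'$ captures nothing from cycle $j$, with the roles of $q',q''$ exchanged --- one gets
\[
U_{q'}(\vec{s}^*)=\alpha+\sum_{j\in S'}s_j,\qquad U_{q''}(\vec{s}^*)=\alpha+\sum_{j\in S''}s_j .
\]
Since $a_1$ is free, $q'$ and $q''$ can each deviate onto $a_1$ for utility exactly $2\alpha$, so~(\ref{eq:condition_on_Nash}) forces $\sum_{j\in S'}s_j\ge\alpha$ and $\sum_{j\in S''}s_j\ge\alpha$; as these add up to $\sum_{j=1}^{2n}s_j=2\alpha$, both equal $\alpha$, and $S'$ satisfies Eq.~(\ref{eq:partition}).
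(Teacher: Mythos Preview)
Your proof follows the paper's overall architecture---first force the equilibrium into a ``standard'' shape (the paper's Proposition~\ref{lem:Nonnegative_standard}), then read off $S'$ by comparing the utilities of the $b'$- and $b''$-players against the deviation to $a_1$---and Steps~1 and~3 match the paper almost verbatim. The difference is in Step~2. The paper establishes the standard shape by a long case analysis on how many players sit in each five-cycle (the cases $|P_j|\ge 4$, $|P_j|=3$, $|P_j|=2$, split further by whether $b',b''$ are occupied). You instead invoke the tight weight count $W_B=(2n+2)\cdot 2\alpha$ together with the ``deviate-to-$a_1$'' lower bound to conclude that each of the $2n+2$ players in $B$ has utility \emph{exactly} $2\alpha$ and every positive-weight vertex of $B$ is dominated, and then argue about the structure of the resulting connected weight-$2\alpha$ regions. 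The paper actually states the same pigeonhole observation at the top of its proof of Claim~\ref{clm:nonnegative_Pj<=1}, but uses it only as a target for contradiction and then grinds through cases; your route exploits it more directly and is conceptually cleaner.

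There is one over-claim worth flagging. Your region analysis determines the \emph{dominated sets}, not the \emph{seed positions}; the sentence ``combining these pins $2n$ players onto the cycles (one each, on $b_{j,3}$ or $b_{j,4}$), one onto $b'$, and one onto $b''$'' asserts more than the regions alone give you, and Step~3 as written relies on that assertion. The easy patch is that you do not need the seed positions at all: once you know that exactly two regions carry a single weight-$\alpha$ vertex, the counting you already did (ruling out $\{b',b_{j,3}\}$ and $\{b'',b_{j,4}\}$ type regions in the non-degenerate case) forces those two regions to be the ones containing $b'$ and $b''$, and each of them then contains $b_{j,1}$'s of total weight exactly $\alpha$. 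That \emph{is} the partition, so Step~3 becomes redundant. Alternatively, a short dynamics argument (a seed at $b_{j,2}$ would grab $b_{j,1}$ at time~2, inflating its region beyond $2\alpha$) does pin the seeds down and recovers the paper's standard form, after which your Step~3 goes through exactly as in the paper.
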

\begin{proof}
Let $\vec{s}^*$ be an arbitrary  Nash equilibrium of $(2n+4, G_S, w_S)$.
We say that a strategy profile $s$ is \emph{standard} if
$s$ satisfies the following conditions:
\begin{description}
\item[(B.1)] For every $i\in [2n]$, exactly one player
chooses $b_{i, 3}$ or $b_{i,4}$; and no player chooses
$b_{i, 1}, b_{i,2}$ or $b_{i,5}$.
\item[(B.2)] Exactly one player chooses $b' \in V_B$,
and exactly one player does $b''$.
\item[(B.3)]  Exactly one player chooses $a_2 \in V_A$,
and exactly one player does $a_3$.
\end{description}
\noindent
We then have the following proposition whose proof will be given in the appendix.
\begin{proposition}\label{lem:Nonnegative_standard}
If $(2n+4, G_S, w_S)$ has a Nash equilibrium $\vec{s}^*$, then
$\vec{s}^*$ is standard. 
\end{proposition}

Proposition~\ref{lem:Nonnegative_standard} implies that $\vec{s}^*$
is standard.
Let $S'$ (resp., $S''$) be the set of indices $i \in [2n]$ which a player
chooses $b_{i,4}$ (resp., $b_{i, 3}$).
Let $p_{2n+1}$ and $p_{2n+2}$ be the players choosing $b'$ and $b''$,
respectively.
Clearly,
\begin{eqnarray}\label{eq:positive_S'_1}
U_{2n+1}(\vec{s}^*) = \ps + \sum_{j'\in S'}s_{j'},
\end{eqnarray}
while
\begin{eqnarray}\label{eq:positive_S'_2}
U_{2n+1}(\vec{s}^*_{-(2n+1)}, a_1) = 2\ps.
\end{eqnarray}
Since $\vec{s}^*$ is a Nash equilibrium, Eqs.~(\ref{eq:positive_S'_1}) and
(\ref{eq:positive_S'_2}) imply that
$2\ps \le \ps + \sum_{j'\in S'} s_{j'}$,
and hence
\begin{eqnarray}\label{eq:positive_S'}
\ps \le \sum_{j'\in S'}s_{j'}.
\end{eqnarray}
Similarly, we have
$U_{2n+2}(\vec{s}^*) = \ps + \sum_{j''\in S'}s_{j''}$,
and hence
\begin{eqnarray}\label{eq:positive_S''}
\ps \le \sum_{j'\in S''}s_{j''}.
\end{eqnarray}
Since $\sum_{j'\in S'}s_{j'} + \sum_{j''\in S''} s_{j''} = 2\ps$,
Eqs.~(\ref{eq:positive_S'}) and (\ref{eq:positive_S''}) imply that
$S'$ and $S''$ are the desired partition.
\end{proof}

\subsection{Arbitrary Integer Weights}\label{ssec:NP_Arbitrary}

In this section, we prove that
the problem is NP-hard even for forests.

\begin{theorem}\label{thm:Arbitrary}
{\sc Competitive Diffusion} is NP-complete even for forests of two components.
\end{theorem}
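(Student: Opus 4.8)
The plan is to reduce from {\sc Partition} once more. Given a multiset $S=\{s_1,\dots,s_{2n}\}$ of positive integers with $\sum_{j=1}^{2n}s_j=2\alpha$, I would construct a forest $G_S$ having exactly two tree components together with an integer weight function $w_S$ that is allowed to take negative values, and prove that $(N,G_S,w_S)$ has a Nash equilibrium --- for a number $N$ of players depending only on $n$ --- if and only if some $S'\subseteq[2n]$ satisfies Eq.~(\ref{eq:partition}). Membership in NP has already been argued, so only NP-hardness needs to be shown. The first component would again be a constant-size ``threat path'' $a_1a_2a_3a_4$ in the style of Theorem~\ref{thm:Positive}: a large weight on $a_1$ and $a_4$ and weight $1$ on $a_2$ and $a_3$ pin two players to $a_2$ and $a_3$ in every equilibrium and give every other player the standing option of moving to $a_1$ or $a_4$ for a fixed utility $T$.

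The second component would be an ``encoding tree'' $B$ that stores the $2n$ numbers. Because $B$ must be acyclic, the $5$-cycle gadgets of Theorem~\ref{thm:Positive} are unavailable, so I would instead exploit arbitrary integer weights, and in particular large negative weights, to emulate them. A vertex carrying a sufficiently large negative weight, placed on a path of $B$, behaves as a ``wall'': no player moves to a position from which the diffusion of its opinion is forced across the wall, and no player seeds a vertex lying on the far side of a wall from the remaining valuable vertices, so a wall recreates, inside a tree, the blocking effect that neutral vertices and cycles produced in Theorem~\ref{thm:Positive}. Using walls together with a handful of heavy positive vertices, I would attach to $B$, for every $j\in[2n]$, a tree gadget carrying two designated seeds for a dedicated player $p_j$; the seed that $p_j$ selects would decide whether the weight $s_j$ is ultimately absorbed by one distinguished ``collector'' player or by a second one, while $p_j$ itself has utility $T$ for either choice and each collector has utility exactly $T$ precisely when the total weight it absorbs equals $\alpha$.

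For the forward direction, given $S'$ satisfying Eq.~(\ref{eq:partition}), I would seed $p_j$ at the ``$j\in S'$'' seed of gadget $j$, place the two collectors at their hubs and two players at $a_2$ and $a_3$, and then check, by a case analysis parallel to the easy lemma of Theorem~\ref{thm:Positive}, that no unilateral deviation is improving: a move inside a gadget of $B$ either forfeits that gadget's heavy vertices or drives a diffusion into a wall, a move into the threat path yields at most $T$, and each collector already has utility $T$. For the converse the central step would be a standard-form statement in the spirit of Proposition~\ref{lem:Nonnegative_standard}: every Nash equilibrium of $(N,G_S,w_S)$ puts one player on each of $a_2$ and $a_3$, one player on each collector hub, and exactly one player on a seed of every gadget $j$. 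Granting this, the deviation ``first collector $\to a_1$'' forces its absorbed weight to be at least $\alpha$, the symmetric deviation forces the second collector's absorbed weight to be at least $\alpha$, and since in any standard profile every gadget routes $s_j$ to exactly one collector these two absorbed weights sum to $2\alpha$; hence both equal $\alpha$, and the gadgets routing their weight to the first collector form the desired set $S'$.

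The step I expect to be the main obstacle is the standard-form proposition. A player in a tree has far more deviation options than in the series-parallel graph of Theorem~\ref{thm:Positive} --- it can wander along a path accumulating low-weight vertices, or take shelter behind a wall --- so ruling out every non-standard equilibrium would require a careful choice of magnitudes (I would take $T$ polynomially bounded but much larger than $\sum_j s_j$, and every wall weight much larger than $T$) and a fairly lengthy case analysis; as with the earlier hardness theorems, I would defer the proof of this proposition to the appendix.
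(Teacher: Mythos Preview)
Your overall framework matches the paper's: reduction from {\sc Partition}, a two-component forest, the four-vertex threat path $A$ with heavy ends, and an encoding tree $B$ exploiting negative weights. Where you diverge is the design of $B$ itself.

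You try to port the symmetric two-collector architecture of Theorem~\ref{thm:Positive} to a tree: one player $p_j$ per gadget chooses between two seeds, and that choice ``routes'' $s_j$ to one of two collector hubs. In the series-parallel construction this routing works because each $5$-cycle attaches to \emph{both} hubs $b'$ and $b''$; in a tree every gadget has a single attachment point, and which collector's diffusion reaches that point first is fixed by tree distances, not by $p_j$'s choice inside the gadget. You do not say how a wall would change this, and it is exactly here---not in the standard-form proposition---that the acyclicity of $B$ bites. This is the genuine gap in your plan.

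The paper sidesteps routing altogether with a different mechanism. It uses the balanced variant of {\sc Partition} (requiring $|S'|=n$, with all $s_j$ even), and the ``choice'' for index $j$ is not which seed one player picks but whether \emph{two} players occupy the star $B_j$ (sitting on the positive leaves $b_{j,1},b_{j,2}$, which forces the centre $b_j$ neutral) or \emph{no} player does. There is effectively a single collector at $b'_5$: through the hub $b'_4$ it rules every unoccupied $B_j$ and absorbs $\sum_{j\notin S'}s_j$. The player at $b'_1$ is not a symmetric second collector; it supplies the \emph{upper} bound via the deviation $b'_1\to b'_4$ (which would also sweep the unoccupied $B_j$'s), while the lower bound comes from $b'_5\to a_1$. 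This yields $\alpha-1\le\sum_{j\notin S'}s_j\le\alpha+1$, which evenness sharpens to $\alpha$. So the paper's converse argument is asymmetric, not the ``both collectors deviate to $a_1$'' argument you sketch, and the standard-form proposition (C.1)--(C.3) must also establish that exactly $n$ of the $2n$ stars are doubly occupied---a feature absent from your plan.
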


To prove the theorem, we employ {\sc Partition}, as in the proof of
Theorem~\ref{thm:Positive}; but
we here assume that
the subset $S'$ satisfies $|S'|=n$;
 {\sc Partition} remains NP-complete under this additional condition~\cite{GareyJ79}.
Moreover, We assume without loss of generality that
$s_1, s_2, \dots, s_{2n}$ are even; if it is not the case, we simply
double each of $s_1, s_2, \dots, s_{2n}$.

Let $S=\{ s_1, s_2, \dots , s_{2n}\}$ be an input of {\sc Partition},
and $\alpha = (\sum_{j=1}^{2n} s_j)/2$.
We show that one can construct a forest $G_S$ and a weight function
$w_S$ for $G_S$ such that there exists $S'\subseteq [2n]$
that satisfies $|S'|=n$  and Eq.~(\ref{eq:partition})
if and only if $(2n+4, G_S, w_S)$ has a Nash equilibrium.
The graph $G_S$ consists of two trees $A=(V_A, E_A)$ and $B = (V_B, E_B)$.

\begin{figure}[t]
\begin{center}
\includegraphics[width=118mm]{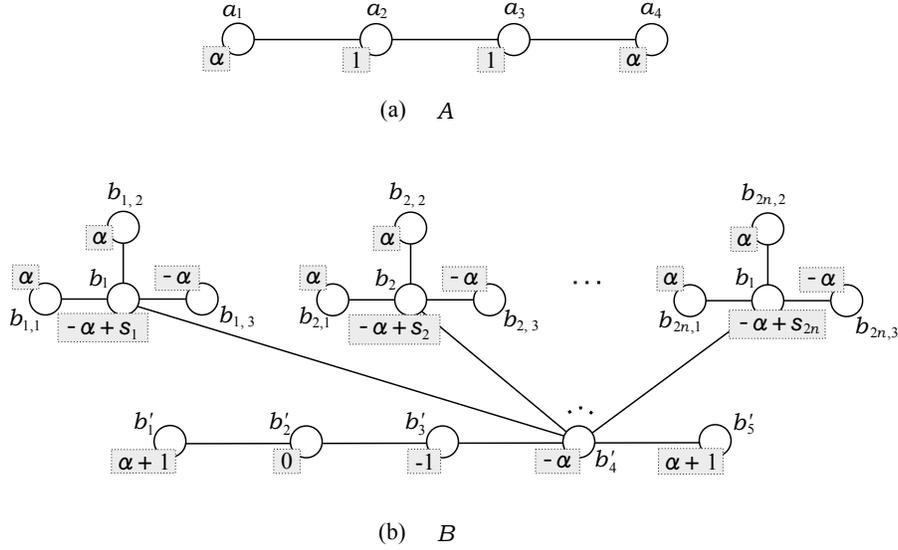}
\caption{Graph $G_S$ for arbitrary integer weights. (a) Graph $A$. (b) Graph $B$. }
\label{fig:forest_NPhard}
\end{center}
\end{figure}
\smallskip
\noindent
[Construction of $G_S$ and $w_S$]

The graph $A$ is a path of four vertices $a_1, a_2, a_3, a_4$, where
for every $i$, $1 \le i \le 4$,
$w_S(a_1) = \ps, w_S(a_2) = 1, w_S(a_3) = 1, w_S(a_4) = \ps$.
\if0
\[
w_S(a_i) = \left\{
\begin{array}{ll}
\alpha & \mbox{ if $i \in \{ 1, 4\}$};\\
1 & \mbox{ if $i \in \{ 2, 3\}$}.
\end{array}
\right.
\]
\fi
See Fig.~\ref{fig:forest_NPhard}(a).

We then construct $B$, as follows. For every $j \in [2n]$,
we make a graph $B_j$ consisting of a vertex $b_j$ together with
$b_{j,1}, b_{j,2}$ and $b_{j,3}$
which compose a complete bipartite graph $K_{1, 3}$.
We complete $B$ by making a path of five vertices $b_1', b_2', b_3', b_4', b_5'$,
and connecting $b_4'$ to $b_j$ for every $j \in [2n]$.
The weights of the vertices in $V_B$ are defined as follows:
For every $j\in [2n]$,
$w_S(\ba) = -\ps+s_j, w_S(\bb) = \ps, w_S(\bc) = \ps, w_S(\bd) = -\ps$;
and
$w_S(b'_1) = \ps + 1, w_S(b'_2) = 0, w_S(b'_3) =  -1,
 w_S(b'_4) = -\ps, w_S(b'_5) =
\ps + 1$.
See Fig.~\ref{fig:forest_NPhard}(b).

Similarly to the previous sections, we can prove that
there exists $S'\subseteq [2n]$
satisfying Eq.~(\ref{eq:partition})
if and only if $(2n+4, G_S, w_S)$ has a Nash equilibrium.
The proof is omitted, but given in the 
the appendix.

\section{Algorithms for Forests of Paths}\label{sec:AlgForPaths}
 
In the last section, we have shown that {\CD} is a computationally hard problem.
However, we can solve the problem for some particular graph classes.
In Section~\ref{ssec:FofP_weighted}, we give a pseudo-polynomial-time algorithm to solve 
{\CD} for forests of weighted paths; 
as its consequence, we show that the problem is solvable in polynomial time for forest of unweighted paths.
In Section~\ref{ssec:FofP}, we improve the running time of our algorithm to a quadratic time for the unweighted case.

\subsection{Forests of weighted paths}\label{ssec:FofP_weighted}

Let $F$ be a forest consisting of weighted $m$ paths $P_1, P_2, \ldots, P_m$, and let $W_j$ be the sum of the positive weights in a path $P_j$, $j \in [m]$. 
Then, we define $W = \max_{j \in [m]} W_j$ as the {\em upper bound on utility} for $F$, that is, any player can obtain at most $W$ in $F$. 
In this subsection, we prove the following theorem. 
\begin{theorem} \label{the:path} 
Let $F$ be a forest of weighted paths. Let $n$ and $W$ be the number of vertices in $F$
and the upper bound on utility for $F$, respectively. Then, we can solve
{\CD}, and find a Nash equilibrium, if any, in $O(W n^9)$ time. 
\end{theorem}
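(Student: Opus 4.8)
The plan is to design a dynamic-programming algorithm over the paths of $F$, guided by a structural characterization of Nash equilibria in a forest of paths. The first step is to understand how a strategy profile behaves on a single path. When players occupy a set of positions on a path $P_j$, the domination process is very restricted: between two consecutive occupied vertices the path is split, each occupied vertex expands outward, and the only subtlety is what happens when two expanding fronts meet (a vertex becomes neutral if the distances tie, otherwise the nearer player takes it). Consequently, for a fixed multiset of occupied positions the outcome on $P_j$ — hence the utility each occupying player draws from $P_j$ — is determined in $O(n^2)$ time by simple prefix sums of weights. I would record, for each path, a small "profile" of the relevant quantities: how many players are placed there, and for each placed player the utility obtained; and, crucially, the best utility a \emph{new} player could obtain by deviating into $P_j$ given the current occupants.

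The key observation that makes the problem tractable is that, for the Nash condition, what matters about the whole configuration is aggregated rather than detailed: a player $p_i$ currently sitting on path $P_j$ with utility $u$ has a profitable deviation iff either (a) moving to some other vertex of $P_j$ gives more than $u$, or (b) moving to some vertex of another path $P_{j'}$ gives more than $u$ — and the best value obtainable by an entering player on $P_{j'}$ depends only on the occupants already on $P_{j'}$, not on which specific player they are. So I would let the DP state, as it scans $P_1,\dots,P_m$, track: the number of players placed so far; the \emph{minimum} utility currently held by any placed player (this is the quantity a deviation must beat — actually one must be slightly more careful and track the minimum utility among players on each path versus the global best-entry value, but this collapses to a bounded amount of information); and the global maximum, over processed paths, of "best utility an extra player can grab here." Since utilities lie in $\{-W,\dots,W\}$ and player counts are at most $n$, and we need a constant number of such bounded coordinates, the state space is polynomial in $n$ and $W$. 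For each path we enumerate all ways of placing some number $t_j \le n$ of players on its $n$ vertices in a way that could appear in an equilibrium; after collapsing symmetric placements this is $O(n^{c})$ choices, and for each we check local equilibrium consistency and update the aggregated state. Multiplying the state size, the number of paths $m \le n$, and the per-path enumeration gives the claimed $O(Wn^9)$ bound. Finally, to \emph{find} an equilibrium (not just decide existence) I would store back-pointers in the DP table and, in a second pass, do the standard reconstruction, then verify no placed player wants to move \emph{within} its own path — this within-path deviation check is local to each path and is folded into the per-path enumeration rather than the global state.

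The main obstacle is getting the DP state to be simultaneously (i) small enough to yield a polynomial bound and (ii) rich enough to certify the global Nash condition. The difficulty is that a player's incentive to deviate couples its own path to all other paths through the "best entry value," and also that a player already on a path might profitably \emph{relocate within that path} to a vertex other than the ones being used — so one must be careful that the enumeration of per-path placements already accounts for all intra-path deviations, while the global state accounts for all inter-path deviations. A further subtlety is the neutral-vertex rule: two players on the same path can, by a deviation, cause a vertex that was dominated to become neutral (or vice versa), so the "best deviation value" on a path is not monotone in any obvious way and must be recomputed honestly for each candidate placement. I expect nailing down precisely which $O(1)$-many bounded statistics suffice — and proving that two configurations agreeing on these statistics are Nash-equivalent — to be the heart of the argument; once that lemma is in hand, the DP, the running-time accounting, and the specialization to $W = O(n)$ for unweighted paths (giving the $O(n^{10})$, later improved in Section~\ref{ssec:FofP}) are routine.
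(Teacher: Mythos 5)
Your global architecture matches the paper's: you decompose the equilibrium path by path, you identify that the only inter-path coupling is the comparison between the minimum utility of a placed player and the maximum utility an entrant could grab on some path, and you combine per-path options with a count-tracking DP (the paper uses a single enumerated threshold $t\in\{0,\dots,W\}$ plus a multiple-choice-knapsack step, which is equivalent to carrying your two bounded statistics; this is exactly Lemma~\ref{lem:decompose_players}). That part of your plan is sound, and your worry about whether these statistics suffice is resolved affirmatively by that lemma.

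The genuine gap is the single-path subproblem, which is where all of the $n^9$ actually comes from. You propose to "enumerate all ways of placing some number of players on $P_j$ that could appear in an equilibrium; after collapsing symmetric placements this is $O(n^c)$ choices." That step does not work as stated: a placement is a multiset of up to $2n$ chosen vertices among $n$ positions, the number of equilibrium-compatible placements on a single path can be exponential in $n$, and there is no evident symmetry that collapses them to polynomially many. What is needed (and what the paper supplies as Lemma~\ref{lem:ForestsOfPaths_weighted}) is a second, inner dynamic program \emph{along the path} that, for each candidate $\kappa$, decides whether \emph{some} placement of $\kappa$ players is a Nash equilibrium on $P_j$ with every player's utility at least $t$ and every entrant's utility at most $t$. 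Making that inner DP correct requires its own structural lemma: one must show that a bounded summary of a prefix of the path — in the paper, the rightmost and second-rightmost occupied vertices together with how many players sit on each (a "Nash sub-equilibrium" indexed by $\bigl((x,a),(y,b)\bigr)$) — determines the utility of the rightmost players, their best within-path deviations to the right, and the entrant's best value to the right of $x$. This is precisely where the tie-breaking/neutral-vertex issues you flag get absorbed, and it is the missing heart of your argument rather than a routine detail; without it you have neither the polynomial bound nor the exponent $9$.
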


We note that $W=O(n)$ if $F$ is an unweighted graph.
Therefore, by Theorem~\ref{the:path}, {\CD} is solvable in $O(n^{10})$ time for an unweighted graph $F$; 
this running time will be improved to $O(n^2)$ in Section~\ref{ssec:FofP}. 

\smallskip

\noindent
{\bf Idea and definitions.}

Let $F$ be a given forest consisting of weighted $m$ paths $P_1, P_2, \dots , P_m$.
Let $w$ be a given weight function; we sometimes denote by $w_j$ the weight function restricted to the path $P_j$, $j \in [m]$. 
Suppose that, for an integer $k$, there exists a strategy profile $\vec{s}$ for the game $(k, F, w)$ that is a Nash equilibrium.
Then, the strategy profile restricted to each path $P_j$, $j \in [m]$, forms a Nash equilibrium for $(k_j, P_j, w_j)$, where $k_j$ is the number of players who chose vertices in $P_j$. 
However, the other direction does not always hold:
A Nash equilibrium $\vec{s}_j$ for $(k_j, P_j, w_j)$ is not always extended to a Nash equilibrium for the whole forest $F$,
because some player may increase its utility by moving to another path in $F$. 
To capture such a situation, we classify a Nash equilibrium for a (single) path $P_j$ more precisely. 




Consider the game $(\kappa_j, P_j, w_j)$ for an integer $\kappa_j \ge 0$.
For a strategy profile $\vec{s}_j$ for $(\kappa_j, P_j, w_j)$, 
we define $\umin_{P_j}(\vec{s}_j)$ as the minimum utility over all the $\kappa_j$ players:
$\umin_{P_j}(\vec{s}_j) = \min_{i\in [\kappa_j]} U_i(\vec{s}_j)$.
In other words, any player in $P_j$ obtains the utility at least
$\umin_{P_j}(\vec{s}_j)$.
For the case where $\kappa_j=0$, we define $\vec{s}_j = \emptyset$ as
the unique strategy profile for $(\kappa_j, P_j, w_j)$; 
then, $\vec{s}_j$ is a Nash equilibrium and we define $\umin_{P_j} (\vec{s}_j) = + \infty$.

For a strategy profile $\vec{s}_j = \bigl( s_j^{(1)}, s_j^{(2)}, \ldots, s_j^{(\kappa_j)} \bigr)$ for $(\kappa_j, P_j, w_j)$, we then define the ``potential'' of the maximum utility under $\vec{s}_j$ that can be expected to gain by an extra player other than the $\kappa_j$ players.
More formally, for a vertex $v$ in $P_j$, we denote by $\vec{s}_j + v$ the strategy profile $\bigl( s_j^{(1)}, s_j^{(2)}, \ldots, s_j^{(\kappa_j)}, s_j^{(\kappa_j+1)} \bigr)$ for $(\kappa_j+1, P_j, w_j)$ such that $s_j^{(\kappa_j+1)} = v$.  
Then, we define $\suki_{P_j}(\vec{s}_j) = \max_{v \in V(P_j)} U_{\kappa_j+1}(\vec{s}_j+v)$.

For two nonnegative integers $\kappa_j$ and $t$, 
we say that \emph{$P_j$ admits $\kappa_j$ players with a boundary $t$} if
there exists a strategy profile $\vec{s}_j$ such that
$\vec{s}_j$ is a Nash equilibrium for $(\kappa_j, P_j, w_j)$ and
$\suki_{P_j}(\vec{s}_j) \le t \le \umin_{P_j}(\vec{s}_j)$ holds.
Then, the following lemma characterizes a Nash equilibrium
of the game $(k, F, w)$ in terms of the components of $F$; the proof is given in the appendix.
\begin{lemma}\label{lem:decompose_players}
The game $(k, F, w)$ has a Nash equilibrium if and only if
there exist nonnegative integers $\kappa_1, \kappa_2, \dots , \kappa_m$ and $t$ such that
$k=\sum_{j=1}^m \kappa_j$ and $P_j$ admits $\kappa_j$ players with the common boundary $t$
for every $j \in [m]$.
\end{lemma}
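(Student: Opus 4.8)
The plan is to prove both implications by restricting and gluing strategy profiles component by component, exploiting that on a forest the diffusion inside one path is unaffected by the other paths: a player's utility depends only on the seeds placed in the path that contains its own seed.

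\emph{The ``if'' direction.} Given $\kappa_1,\dots,\kappa_m$ and $t$ as in the statement, for each $j\in[m]$ fix a Nash equilibrium $\vec{s}_j$ for $(\kappa_j,P_j,w_j)$ with $\suki_{P_j}(\vec{s}_j)\le t\le\umin_{P_j}(\vec{s}_j)$, and let $\vec{s}$ be the profile for $(k,F,w)$ obtained by placing in each $P_j$ its $\kappa_j$ players according to $\vec{s}_j$; this is legal since $\sum_j\kappa_j=k$. I would verify Eq.~(\ref{eq:condition_on_Nash}) for an arbitrary player $p_i$, lying in some $P_j$, and an arbitrary $v'\in V$. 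If $v'\in V(P_j)$, the move changes only the sub-profile on $P_j$, to the profile obtained from $\vec{s}_j$ by relocating $p_i$ to $v'$, and since $\vec{s}_j$ is an equilibrium this cannot increase $p_i$'s utility. If $v'\in V(P_{j'})$ with $j'\ne j$, the sub-profile on $P_{j'}$ becomes exactly $\vec{s}_{j'}+v'$ with $p_i$ in the role of the $(\kappa_{j'}+1)$-st player, so $U_i(\vec{s}_{-i},v')=U_{\kappa_{j'}+1}(\vec{s}_{j'}+v')\le\suki_{P_{j'}}(\vec{s}_{j'})\le t\le\umin_{P_j}(\vec{s}_j)\le U_i(\vec{s})$. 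Hence $\vec{s}$ is a Nash equilibrium.

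\emph{The ``only if'' direction.} Let $\vec{s}$ be a Nash equilibrium for $(k,F,w)$, let $\kappa_j$ be the number of seeds of $\vec{s}$ lying in $P_j$, and let $\vec{s}_j$ be the induced sub-profile, so $\sum_j\kappa_j=k$. First, each $\vec{s}_j$ is a Nash equilibrium for $(\kappa_j,P_j,w_j)$: any within-$P_j$ deviation that improves a player of $P_j$ would improve it in $(k,F,w)$ as well (for $\kappa_j=0$ this is the convention $\vec{s}_j=\emptyset$). It then suffices to establish $\suki_{P_j}(\vec{s}_j)\le\umin_{P_{j'}}(\vec{s}_{j'})$ for every pair $j,j'$, for then any $t$ with $\max_j\suki_{P_j}(\vec{s}_j)\le t\le\min_j\umin_{P_j}(\vec{s}_j)$ witnesses that each $P_j$ admits $\kappa_j$ players with the common boundary $t$. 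For $j\ne j'$ this is direct: pick $v\in V(P_j)$ attaining $\suki_{P_j}(\vec{s}_j)$; for any player $p_i$ of $P_{j'}$, relocating $p_i$ to $v$ turns the sub-profile on $P_j$ into exactly $\vec{s}_j+v$, so $\suki_{P_j}(\vec{s}_j)=U_i(\vec{s}_{-i},v)\le U_i(\vec{s})$ by Eq.~(\ref{eq:condition_on_Nash}), and minimizing over the players of $P_{j'}$ gives $\suki_{P_j}(\vec{s}_j)\le\umin_{P_{j'}}(\vec{s}_{j'})$.

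The main obstacle is the diagonal case $j=j'$, i.e.\ $\suki_{P_j}(\vec{s}_j)\le\umin_{P_j}(\vec{s}_j)$, because here the relevant deviation does not literally produce $\vec{s}_j+v$: when the minimum-utility player of $P_j$ moves to a vertex attaining $\suki_{P_j}(\vec{s}_j)$, its own seed is vacated, so its resulting territory consists of the territory that the extra player would get in $\vec{s}_j+v$ \emph{together with} whatever of its former territory is re-released. The natural way to close this gap is a path-specific monotonicity argument: on a path a player dominates exactly the vertices whose nearest seed is its own (with equidistant vertices turning neutral), so deleting one competing seed only enlarges the vertex set dominated by the player at $v$; fed into the equilibrium condition $U_i(\vec{s}_{-i},v)\le U_i(\vec{s})=\umin_{P_j}(\vec{s}_j)$ for that minimum-utility player, this yields the desired inequality. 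Making this comparison fully rigorous—tracking the re-released vertices and the cases where $v$ is already an occupied vertex—is the technical heart of the proof; the degenerate cases $\kappa_j=0$, where $\umin_{P_j}(\vec{s}_j)=+\infty$ by convention, are immediate.
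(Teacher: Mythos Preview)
Your overall architecture matches the paper's: glue and restrict componentwise, and for the ``only if'' direction derive a common boundary from the pairwise inequalities $\suki_{P_j}(\vec s_j)\le\umin_{P_{j'}}(\vec s_{j'})$. The paper simply asserts that the diagonal case $j=j'$ ``trivially holds'' and then runs a short case analysis that is equivalent to your cross-path argument, so you are right to single the diagonal inequality out as the real content rather than a triviality.

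The gap is in your monotonicity step. On a path it is true that deleting one competing seed can only enlarge the \emph{set} of vertices dominated by the player sitting at $v$; but what you then use is that this enlarges the \emph{utility}, i.e.\ that $U_{\kappa_j+1}(\vec s_j+v)\le U_i\bigl((\vec s_j)_{-i},v\bigr)$. With arbitrary (possibly negative) weights that implication fails, because the extra vertices absorbed when the old seed $s^{(i)}$ disappears may carry negative weight. Concretely, take the forest consisting of a single $3$-vertex path with weights $(-10,1,-10)$ and $k=\kappa_1=1$: every placement is a Nash equilibrium with $\umin_{P_1}(\vec s_1)=-19$, while $\suki_{P_1}(\vec s_1)=0$ (the extra player can collide and obtain utility $0$), so the diagonal inequality $\suki\le\umin$ is outright false here and no boundary $t$ exists. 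Thus your monotonicity route closes the diagonal case only under a nonnegativity assumption on the weights (where ``larger dominated set $\Rightarrow$ larger utility'' is valid); in the generality of the lemma as stated it does not, and the paper's one-line ``trivially holds'' shares the same defect.
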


\smallskip
\noindent
{\bf Algorithm.}
 
We first focus on a weighted single path. 
\begin{lemma}\label{lem:ForestsOfPaths_weighted}
Let $P$ be a weighted path of $n$ vertices, and
$t$ be a nonnegative integer.
Then, one can find in $O(n^9)$ time the set $\akianspset \subseteq \{ 0, 1, \ldots , 2n \}$ of all the integers $\kappa$ such that $P$ admits $\kappa$ players
with boundary $t$.
\end{lemma}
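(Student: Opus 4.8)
The plan is to show that whether $P$ admits $\kappa$ players with boundary $t$ can be decided by a dynamic program that scans the path from left to right, tracking only a constant amount of information about the ``frontier'' of the current partial strategy profile. First I would observe that a strategy profile on a path decomposes the path into intervals controlled by consecutive chosen seeds (or left uncontrolled at the two ends); the utility of the player at seed $s_j^{(\ell)}$ depends only on how the territory between $s_j^{(\ell-1)}$ and $s_j^{(\ell+1)}$ is split, which in the diffusion dynamics on a path is determined by the parities of the distances and whether an intermediate vertex becomes neutral (a ``tie'' vertex). Crucially, the Nash condition (\ref{eq:condition_on_Nash}) — that no player can profitably deviate, and that an extra player placed anywhere gains at most $t$, i.e.\ $\suki_P(\vec{s})\le t\le\umin_P(\vec{s})$ — is checkable from the same local interval data: a deviation of player $p_i$ to a new vertex $v'$ only changes the two intervals adjacent to the vacated position and the interval into which $v'$ falls, so the best deviation value depends only on the multiset of ``gap profiles'' between consecutive seeds.

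Concretely, I would set up the DP over positions $1,\dots,n$ of $P$ with a state recording: the position of the most recently placed seed, the identity of which side of each current interval is ``live'' and how far the diffusion has progressed, and — because deviations and the extra player can exploit the largest available empty gap — the length and weight structure of the current maximal empty interval to the left, together with the running minimum utility $\umin$ achieved so far by completed players and the running maximum $\suki$-value achievable by a hypothetical extra seed placed in the scanned-over region. Since all of these quantities are bounded by polynomials in $n$ (positions $\le n$, interval lengths $\le n$, number of seeds $\le 2n$, and $\umin,\suki\le W=O(\mathrm{poly})$ but we only need to compare against the fixed threshold $t$, so we can truncate these at $t$), the state space is $O(n^c)$ for a small constant $c$, and each transition (placing the next seed at a chosen later position, or deciding the path ends) costs $O(n)$. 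Running the DP once for each candidate value $\kappa\in\{0,1,\dots,2n\}$ — or, better, carrying $\kappa$ as one more coordinate of the state — yields the set $\akianspset$ of all feasible $\kappa$, and the claimed $O(n^9)$ bound comes from bounding the number of state coordinates (seed position, previous seed position, left-gap endpoints, $\umin$ truncated at $t$, $\suki$ truncated at $t$, and $\kappa$, each $O(n)$) times the $O(n)$ transition cost; one then trims the exponent by noting several coordinates are redundant.

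The main obstacle I expect is pinning down exactly which finite local data suffices to (a) compute each player's utility correctly under the path-diffusion rules including neutral vertices, and (b) certify that the global Nash condition holds — in particular that the single most profitable unilateral deviation, over all players and all target vertices simultaneously, can be read off from the interval decomposition without revisiting earlier parts of the path. Handling the neutral-vertex subtlety (a vertex equidistant between two rival seeds becomes neutral, which can cascade and change who wins downstream intervals, cf.\ Fig.~\ref{Fig:neutral}) is the delicate point; I would address it by showing that on a path the effect of a neutral vertex is local to the single interval in which it sits, so the gap-profile abstraction remains sound. Once that is established, correctness of the DP and the polynomial bound on $|\akianspset|\le 2n+1$ and on the running time follow by routine bookkeeping.
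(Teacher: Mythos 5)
Your overall plan --- a left-to-right dynamic program over the path whose state records the last one or two seed positions, their multiplicities, and enough information to check the boundary conditions $\suki_P(\vec{s}) \le t \le \umin_P(\vec{s})$ --- is the same strategy the paper uses, and the $O(n^9)$ budget is indeed obtained by multiplying a constant number of $O(n)$-sized state coordinates. However, you explicitly leave open the one step on which the whole argument rests: you write that the ``main obstacle'' is pinning down which finite local data certifies the global Nash condition, and you do not close it. The paper closes it with a specific device you are missing: a \emph{Nash sub-equilibrium}, in which the players sitting on the current rightmost chosen vertex $v_x$ are deliberately \emph{excluded} from both the minimum-utility requirement and the no-profitable-deviation requirement, because their territory (and hence their utility) is not determined until the next seed to the right is committed. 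This is why the state must carry the positions and multiplicities of \emph{both} the rightmost and second-rightmost chosen vertices, $(x,a)$ and $(y,b)$: only when the next seed $(x',a')$ is appended can the utility of the $a$ players at $v_x$ be evaluated (from $y$, $x$, $x'$ and the weights between them) and their deviation options to vertices in $\{v_1,\dots,v_{x'}\}$ be checked. Your ``running minimum utility achieved so far by completed players'' implicitly assumes such a deferral but never defines which players are ``completed'' or how the deferred utilities are settled.

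The second missing ingredient is how deviations that \emph{cross} the frontier are handled without revisiting earlier parts of the path. The paper's Lemma on sub-equilibria shows these come for free from the sandwich: if every finalized player already has utility at least $t$ (condition on $\umin$) and an extra player placed anywhere to the right of the frontier gains at most $t$ (condition on $\suki$), then no finalized player profits by jumping right, and symmetrically no rightmost player profits by jumping left. Without stating this, your claim that ``the best deviation value depends only on the multiset of gap profiles'' is not justified --- a player far to the left could in principle want to jump into a large future gap, and only the $t$-sandwich rules this out. Two smaller points: in the weighted setting you cannot truncate utilities at $t$ for the internal no-deviation checks (you must compare exact weighted sums, which is why the state stores vertex \emph{indices} rather than gap lengths), and since $\akianspset \subseteq \{0,1,\dots,2n\}$ allows up to $2n$ players on $n$ vertices, the state must carry multiplicities $a,b$ of players stacked on a single vertex, which your sketch does not mention. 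With these pieces supplied, your DP becomes the paper's proof.
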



Based on Lemma~\ref{lem:ForestsOfPaths_weighted}, we can obtain the $m$ sets $\akianspset_1, \akianspset_2, \ldots, \akianspset_m \subseteq \{ 0, 1, \ldots , 2n \}$, where $\akianspset_j \subseteq \{ 0, 1, \ldots , 2n \}$, $j \in [m]$,
is
the set of all the integers $\kappa$ such that $P$ admits $\kappa$ players with boundary $t$.
This can be done in $O(n^9)$ time, where $n$ is the number of vertices in the whole forest $F$. 

We now claim that, for a given integer $t$, it can be decided in $O(n^3)$ time whether there exist nonnegative integers $\kappa_1, \kappa_2, \dots , \kappa_m$ such that
$k=\sum_{j=1}^m \kappa_j$ and $P_j$ admits $\kappa_j$ players with the common boundary $t$ for every $j \in [m]$;
later we will apply this procedure to all possible values of $t$, $0 \le t \le W$. 
To show this, observe that finding desired $m$ integers $\kappa_1, \kappa_2, \dots , \kappa_m$ from the $m$ sets $\akianspset_1, \akianspset_2, \ldots, \akianspset_m$ can be regarded as solving an instance of the multiple-choice knapsack problem~\cite{KPP}:
The capacity $c$ of the knapsack is equal to $k$; 
Each integer $\kappa^\prime$ in $\akianspset_j$, $j \in [m]$, corresponds to an item with profit $\kappa^\prime$ and cost $\kappa^\prime$; 
The items from the same set $\akianspset_j$ form one class, from which at most one item can be packed into the knapsack. 
The multiple-choice knapsack problem can be solved in $O(cN)$ time~\cite{KPP}, where $N$ is the number of all items. 
Since $c = k$ and $N = O(mn)$, we can solve the corresponding instance in time $O(kmn) = O(n^3)$. 

We finally apply the procedure above to all possible values of boundaries $t$. 
Since any player can obtain at most the upper bound $W$ on utility for $F$, 
it suffices to consider $t \in [W]$. 
Therefore, our algorithm runs in $O(Wn^9)$ time in total.



\subsection{Forests of unweighted paths}\label{ssec:FofP}

In this subsection, we improve the running time of our algorithm in Section~\ref{ssec:FofP_weighted} to a quadratic time when restricted to the unweighted case. 
\begin{theorem}
Let $F$ be a forest of unweighted paths, and $n$ be the number of vertices in $F$.
Then, we can solve {\CD}, and find a Nash equilibrium, if any, 
in $O(n^2)$ time.
\end{theorem}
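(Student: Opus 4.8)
The plan is to reuse the decomposition machinery of Section~\ref{ssec:FofP_weighted} — in particular Lemma~\ref{lem:decompose_players} — but to replace the expensive single‑path subroutine (Lemma~\ref{lem:ForestsOfPaths_weighted}) and the multiple‑choice‑knapsack combination step by much cheaper, structure‑specific computations that are available in the unweighted setting. Two features make this possible: every utility is an integer in $\{0,1,\dots,n\}$, so only $O(n)$ boundaries $t$ are ever relevant; and Nash equilibria on a single unweighted path have a very restricted shape.

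First, the single‑path analysis. Let $P_\ell$ be an unweighted path on $\ell$ vertices. I would prove a structural lemma describing all Nash equilibria of $(\kappa,P_\ell,w)$ for $\kappa\ge 1$: up to relabeling the players, the occupied vertices and their multiplicities fall into a bounded number of ``shapes'' — essentially the $\kappa$ seeds are spread so that consecutive gaps differ by at most a constant, with doubled seeds forced only once $\kappa$ exceeds $\ell$ — and for each shape one reads off in $O(1)$ the minimum utility $\umin_{P_\ell}(\vec s)$ and the best‑response value $\suki_{P_\ell}(\vec s)$ as closed‑form (piecewise‑by‑parity) expressions in $\ell$ and $\kappa$. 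From this I would derive that, for fixed $\ell$ and boundary $t$, the set of $\kappa$ for which $P_\ell$ admits $\kappa$ players with boundary $t$ is an interval of integers with closed‑form endpoints: intuitively, over the ``most spread'' equilibrium $\umin$ is non‑increasing in $\kappa$ while $\suki$ is also non‑increasing, so the constraint $\suki\le t\le\umin$ cuts out an interval; doubled‑seed equilibria extend it on the large‑$\kappa$ side (and for $t=0$ all large $\kappa$ up to $2n$ become valid), and the case $\kappa=0$, valid exactly when $t\ge\ell$, sits on the low side. Concretely, for each path $P_j$ and each relevant $\kappa$ I would compute the interval of boundaries $t$ for which $P_j$ admits $\kappa$ players; this costs $O(\ell_j)$ per value of $\kappa$, and since only $\kappa=O(\ell_j)$ need genuine attention (larger $\kappa$ being valid only for $t=0$) it is $O(\ell_j^{2})$ per path, hence $\sum_j O(\ell_j^{2})=O(n^2)$ overall.

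Next, the combination. If $k\ge 2n$ there is always a Nash equilibrium — place two players on each vertex so every vertex is neutral and everyone gets $0$, and put any surplus players anywhere — so assume $k\le 2n$. For each boundary $t\in\{0,1,\dots,W\}$ with $W=O(n)$, Lemma~\ref{lem:decompose_players} says a Nash equilibrium with this boundary exists iff one can choose $\kappa_j$ from the valid set of $P_j$ (for this $t$) with $\sum_{j=1}^m\kappa_j=k$. By the interval property, the valid set of $P_j$ for boundary $t$ is itself an integer interval $[a_j(t),b_j(t)]$, extracted in $O(\ell_j)$ from the precomputed data, and since the Minkowski sum of integer intervals is an integer interval, such a choice exists iff $\sum_j a_j(t)\le k\le\sum_j b_j(t)$. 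This test takes $O(m)$ time per $t$, hence $O(mW)=O(n^2)$ over all $t$. Once a feasible $t$ is found, pick the $\kappa_j$ greedily inside their intervals to sum to $k$, and realize each one on $P_j$ by the explicit equilibrium shape from the structural lemma; writing all seeds down is $O(n)$. Thus the whole algorithm runs in $O(n^2)$ time.

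The main obstacle is the single‑path structural lemma. One must carry out the diffusion dynamics on a path carefully — how a gap of $g$ vertices between consecutive seeds splits (leaving a neutral vertex when $g$ is odd), the special behaviour at the two ends, and the effect of doubled seeds — and then verify the no‑improving‑deviation condition for every player against every target vertex, including deviations to an end and deviations that squeeze into the widest gap. This case analysis, together with the monotonicity argument needed to show that the valid‑$\kappa$ sets are genuine intervals (so that the knapsack step collapses to a single inequality), is where essentially all the work lies; the rest is bookkeeping.
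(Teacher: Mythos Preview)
Your plan is the paper's: it proves a closed-form single-path characterization (Lemma~\ref{lem:ForestsOfPaths} and Corollary~\ref{coro:FofP}) giving, for each pair $(\ell,t)$, the admissible $\kappa$ as an integer interval with $O(1)$-computable endpoints, and then loops over $t\le n_1$ testing $\sum_j k^{\min}_j\le k\le\sum_j k^{\max}_j$ in $O(m)$ time per boundary. One warning for when you carry out the structural lemma: your monotonicity heuristic is misleading, since a path on $\ge 6$ vertices has \emph{no} Nash equilibrium whatsoever for $\kappa=3$, and for $2t+3\le\ell\le 4t-1$ the admissible set of $\kappa$ is empty --- so the interval property really does require the explicit case split the paper performs rather than a soft monotonicity argument.
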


In the rest of this subsection, we consider unweighted graphs, and thus
define $w: V \to \{ 1\}$ for the vertex set $V$ of a given forest.
We assume that the number $k$ of players is less than $n$; otherwise,
a Nash equilibrium always exists. Note that, in this case,
every player has utility at least one for any Nash equilibrium. 

We first show that the set $\akianspset_j$ of Lemma~\ref{lem:ForestsOfPaths_weighted} can be obtained in $O(1)$ time, instead of $O(n^9)$ time,  
%
%
%
%
by characterizing Nash equilibriums
for $(\kappa, P, w)$ in terms of $\kappa, t$ and $n$.
\begin{lemma}\label{lem:ForestsOfPaths}
Let $P$ be a single unweighted path of $n$ vertices,
and let $\kappa$ and $t$ be nonnegative and positive integers, respectively.
\begin{description}
\item[(1)] $P$ admits $\kappa = 0$ player with $t$ if and only if $n \le t$.
\item[(2)] $P$ admits $\kappa=1$ player with $t$ if and only if
$t \le n \le 2t+1$.
\item[(3)] $P$ admits $\kappa=2$ players with $t$ if and only if
$2t \le n \le 2t+2$.
\item[(4)] $P$ admits $\kappa=3$ players with $t$ if and only if
$t = 1$ and $n = 3, 4$ or $5$.
\item[(5)] For any integer $\kappa \ge 4$, $P$ admits $\kappa$ players with $t$ if and only if
\begin{eqnarray}\label{eq:FofP_k >= 4}
\begin{array}{ll}
(\kappa+1)t-1 \le n \le (2\kappa-4)t+\kappa & \mbox{ if $\kappa$ is odd};\\
\kappa t \le n \le (2\kappa-4)t+\kappa & \mbox{ if $\kappa$ is even}.\\
\end{array}
\end{eqnarray}
\end{description}
\end{lemma}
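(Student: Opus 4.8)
The statement is a purely combinatorial characterization of which $(\kappa, t)$ pairs are admissible on a single unweighted path $P_n$ of $n$ vertices, so I would prove it by a direct structural analysis of Nash equilibria on paths. The key observation is that on a path, a strategy profile partitions $P_n$ into consecutive segments separated by the chosen seeds; the game dynamics are completely determined by where the seeds sit and by the lengths of the gaps between consecutive seeds. A seed at an endpoint gets the whole run out to the nearest neighboring seed's ``meeting point''; two adjacent seeds at distance $d$ split the $d-1$ interior vertices so that each gets roughly $\lfloor (d-1)/2 \rfloor$ (with the middle vertex going neutral when $d$ is even, since both sides reach it simultaneously). So I would first set up, as a lemma or as explicit notation, the utility of player $p_i$ in terms of the positions $x_1 < x_2 < \cdots < x_\kappa$ of the seeds: the two ``boundary'' players (leftmost and rightmost seeds) get $x_1 - 1 + \lceil (x_2 - x_1)/2 \rceil$-type expressions, wait, more carefully each gets everything from its side endpoint up to the near-side of its meeting point, and each interior player $p_i$ gets the near-halves of the two gaps flanking it. I would record the exact formulas (floors/ceilings, and the neutral-vertex correction for even gaps) once and then reuse them.

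**Core of the argument.** With these formulas in hand, the condition ``$P$ admits $\kappa$ players with boundary $t$'' unpacks to: there is a placement of $\kappa$ seeds such that (a) every player's utility is $\ge t$ (this is $t \le \umin$), (b) no player can deviate to gain more (Nash), and (c) any $(\kappa+1)$-st added player would get $\le t$ (this is $\nu \le t$). I would analyze these three constraints case by case on $\kappa$. For $\kappa = 0$: no players, so $\nu$ is just the best single seed on $P_n$, which gives $\lceil n/1 \rceil$-type value $= n$ when you take an endpoint — actually a single seed dominates the entire path, utility $n$, so the condition $\nu \le t$ is exactly $n \le t$, giving (1). For $\kappa = 1$: one seed, utility is $n$ only if it's at an endpoint, but Nash forces it to a near-center position; the seed at position $x$ gets $\max(x, n-x+1)$ roughly, Nash says this is maximized, and then a second added player splits off a piece of size about $\lceil (n-1)/2 \rceil$ on the far side — the inequalities $t \le n$ (from utility $\ge t$, the whole path) and $n \le 2t+1$ (from the added-player bound, he gets at most $\lceil (n-1)/2\rceil \le t$, i.e. $n \le 2t+1$) give (2). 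The $\kappa = 2$ and $\kappa = 3$ cases are similar finite/small analyses: with two seeds you split $n$ into two near-equal halves, each player wants $\ge t$ so $n \ge 2t$, and the Nash/added-player conditions cap $n \le 2t+2$; with three seeds the middle player is squeezed and only $t = 1$, $n \in \{3,4,5\}$ survives. For $\kappa \ge 4$ the general pattern stabilizes: the lower bound on $n$ comes from ``every one of the $\kappa$ players needs $\ge t$ vertices,'' which is $\kappa t$ (even) or $(\kappa+1)t - 1$ (odd) — the odd correction coming from parity of how the two boundary runs and the interior half-gaps can be arranged — and the upper bound $(2\kappa - 4)t + \kappa$ comes from the Nash stability: if $n$ were larger, some player with a big share would be tempting to a neighbor, or a new player could grab more than $t$; I would derive $(2\kappa-4)t + \kappa$ by pushing the two boundary players as far out as possible (each covering up to $2t$ vertices before deviation becomes profitable) while keeping the $\kappa - 2$ interior players at exactly their tight bound, then check that all intermediate $n$ in the stated range are realizable by a greedy/interpolation argument.

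**Realizability and the main obstacle.** For each range I need both directions: that the stated inequalities are necessary (done by the utility-formula bookkeeping above) and sufficient — i.e. for every $n$ in the range I must exhibit an actual Nash-equilibrium placement with the right $\umin$ and $\nu$. The sufficiency direction is where I expect the real work: I would give an explicit construction, e.g. place the $\kappa$ seeds to cut $P_n$ into one block of a carefully chosen ``slack'' size (absorbed by pushing a boundary player out toward $2t$) and $\kappa - 1$ blocks of size as close to $t$ (or $2t$) as parity allows, and then verify three things for this construction: each player's utility is in $[t, \text{something}]$, no deviation helps (the worst deviations are to an endpoint or into the largest existing block — only finitely many cases to check given the block structure), and the best extra seed yields $\le t$. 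The bookkeeping with floors, ceilings, and the neutral-vertex parity correction is fiddly and is the main obstacle: getting the odd-vs-even split in the $\kappa \ge 4$ lower bound exactly right, and confirming that the deviation check really does reduce to checking endpoints and the single largest gap (so that the Nash condition is equivalent to a clean inequality on the gap sizes rather than something more global). Once that ``which deviations matter'' reduction is nailed down, the five cases become a finite, if tedious, verification, and I would organize the writeup as: (i) gap/utility formulas; (ii) reduction of the Nash condition to gap inequalities; (iii) the five cases, each with necessity then an explicit sufficient construction.
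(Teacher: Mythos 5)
Your plan matches the paper's proof essentially step for step: the paper also parametrizes a profile by the gap lengths $\delta_0,\dots,\delta_\kappa$ between consecutive seeds, reduces the Nash and boundary conditions to inequalities on these gaps ($\delta_1=\delta_{\kappa-1}=0$, $t-1\le\delta_0,\delta_\kappa\le t$, interior gaps $\le 2t$, with the neutral-middle-vertex parity correction), handles $\kappa\le 3$ by small case analysis, and proves sufficiency for $\kappa\ge 4$ by an explicit construction interpolated one vertex at a time between the extremes $\kappa t$ and $(2\kappa-4)t+\kappa$. One small attribution slip: for $\kappa=1$ it is the added-player bound $\suki_P(\vec{s})\le t$, not the Nash condition, that forces the seed to the center (a lone player dominates the whole path from any vertex), and likewise the outer runs are capped at $t$ rather than $2t$ — but your subsequent inequalities are the correct ones, so this does not affect the argument.
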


By Lemme~\ref{lem:ForestsOfPaths}, we can immediately
obtain the number of players which $P$ admits with a given boundary
$t$:

\begin{corollary}\label{coro:FofP}
Consider a fixed boundary $t$.
If $P$ is a path of $n$ vertices,
the numbers of players which $P$ admits with a boundary
$t$ is given as follows.
\begin{description}
\item[(1)] If  $n \le t-1$, the number is only $0$.
\item[(2)] If $n=t$, the numbers are $0$ and $1$.
\item[(3)] If $t+1 \le n \le 2t-1$, the number is only $1$.
\item[(4)] If $2t \le n \le 2t+1$ and $n=3$, the numbers are $1, 2$ and $3$; and if $2t \le n \le 2t+1$ and $n\ne3$, the numbers are $1$ and $2$.
\item[(5)] If $n = 2t+2$ and $n=4$, the numbers are $2, 3$ and $4$; and if $n = 2t+2$ and $n\ne 4$, the number is only $2$.
\item[(6)] If $2t+3 \le n \le 4t-1$, $P$ has no desired Nash equilibrium.
\item[(7)] If $4t \le n$ and $5 \le n$,
the numbers are integers $\kappa$ such that
\[
\left\lceil \frac{n+4t}{2t+1} \right\rceil \le \kappa \le \max(k_{odd}, k_{even}),
\]
where $k_{odd}$ is the maximum odd integer satisfying
$k_{odd} \le (n-t+1)/t$,
and $k_{even}$ is the maximum even integer satisfying
$k_{even} \le n/t$.
\end{description}
\end{corollary}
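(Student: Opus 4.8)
The plan is to derive each case mechanically from Lemma~\ref{lem:ForestsOfPaths} by treating the five characterizations there as a system of inequalities in $\kappa$ (for fixed $n$ and $t$) and reading off the admissible $\kappa$. Concretely, I would fix $t$ and view Lemma~\ref{lem:ForestsOfPaths}(1)--(5) as giving, for each $\kappa$, an interval $I_\kappa(t)$ of values of $n$ for which $P$ admits $\kappa$ players; then the set of admissible $\kappa$ for a given $n$ is exactly $\{\kappa : n \in I_\kappa(t)\}$. The task is to partition the $n$-axis into the ranges listed in (1)--(7) and, on each range, identify which intervals $I_\kappa(t)$ contain it.

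First I would handle the small values $\kappa \in \{0,1,2,3\}$ using parts (1)--(4) of the lemma directly: $I_0(t) = [\, t, \infty)$ (wait --- part (1) says $n \le t$, i.e.\ $n \in (-\infty, t]$; I must be careful with the off-by-one in the corollary's statement, since (1) of the corollary says $n \le t-1$ gives only $\kappa=0$, which is consistent with $I_1(t) = [t, 2t+1]$ from part (2) ``kicking in'' exactly at $n = t$). So $I_0(t) = (-\infty, t]$, $I_1(t) = [t, 2t+1]$, $I_2(t) = [2t, 2t+2]$, and $I_3(t) = \{3,4,5\}$ but only when $t = 1$. Overlaying these intervals gives cases (1)--(5) of the corollary: on $n \le t-1$ only $I_0$ applies; at $n = t$ both $I_0, I_1$ apply; on $t+1 \le n \le 2t-1$ only $I_1$; on $2t \le n \le 2t+1$ both $I_1, I_2$ apply (and additionally $I_3$ iff $t=1$, i.e.\ iff $n \in \{3,\dots,5\}$, which under $2t \le n \le 2t+1$ forces $n = 3$); at $n = 2t+2$ only $I_2$ applies (plus $I_3$ iff $t=1$, i.e.\ $n = 4$). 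This is routine interval bookkeeping once one is careful that when $t = 1$ the intervals $I_1, I_2, I_3$ all overlap near $n \in \{3,4,5\}$.

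For the gap $2t+3 \le n \le 4t-1$ (case (6)), I would check that this $n$ lies outside $I_\kappa(t)$ for all $\kappa$: it is past $I_1 = [t, 2t+1]$ and past $I_2 = [2t, 2t+2]$; for $\kappa \ge 4$, part (5) requires $n \ge (\kappa+1)t - 1 \ge 5t - 1 > 4t-1$ (odd case) or $n \ge \kappa t \ge 4t > 4t-1$ (even case), so no such $\kappa$ qualifies either, and $I_3$ only exists when $t = 1$, where the gap $2t+3 \le n \le 4t-1$ is empty. Hence no Nash equilibrium, as claimed. Finally, for case (7) ($4t \le n$, $n \ge 5$), I would solve the inequalities of Eq.~(\ref{eq:FofP_k >= 4}) for $\kappa$: the lower bounds $(\kappa+1)t - 1 \le n$ (odd) and $\kappa t \le n$ (even) together with the possibility of also using $\kappa \in \{1,2\}$ from parts (2),(3) combine to give the single floor-form lower bound $\lceil (n+4t)/(2t+1) \rceil$ on the smallest valid $\kappa$ --- here the $2t+1$ in the denominator comes from rearranging the upper bound $n \le (2\kappa-4)t + \kappa = \kappa(2t+1) - 4t$, which is the binding constraint at the \emph{top} of the range. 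The upper bound on $\kappa$ comes from the lower bounds of Eq.~(\ref{eq:FofP_k >= 4}): the largest odd $\kappa$ with $(\kappa+1)t - 1 \le n$ is the largest odd $\kappa \le (n+1-t)/t$, matching $k_{odd}$, wait that gives $(n - t + 1)/t$ --- yes, and the largest even $\kappa$ with $\kappa t \le n$ is the largest even $\kappa \le n/t$, matching $k_{even}$; since any $\kappa$ strictly between the computed lower bound and $\max(k_{odd}, k_{even})$ of the correct parity also satisfies both bounds (monotonicity of the intervals' union in $\kappa$ over this regime), the admissible set is exactly the stated range.

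The main obstacle will be the boundary/parity bookkeeping in case (7): one must verify that the union $\bigcup_\kappa I_\kappa(t)$ over $\kappa$ in the claimed range is actually an interval of consecutive integers (no internal gaps), which requires checking that consecutive intervals $I_\kappa(t)$ and $I_{\kappa+2}(t)$ (same parity) overlap or abut, and that the odd-$\kappa$ and even-$\kappa$ families interleave to cover everything in between --- this is where the inequality $(\kappa+1)t - 1 \le (2\kappa-4)t + \kappa$, equivalently $t \ge$ something mild, and the analogous even-case inequality, need to be confirmed so that each individual $I_\kappa(t)$ is nonempty and the families mesh. The rest is arithmetic.
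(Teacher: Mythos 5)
Your overall plan --- reading Lemma~\ref{lem:ForestsOfPaths}(1)--(5) as a family of $n$-intervals $I_\kappa(t)$ and, for each range of $n$, collecting the $\kappa$ with $n\in I_\kappa(t)$ --- is the right one; the paper gives no explicit proof and treats the corollary as exactly this kind of immediate bookkeeping. However, your execution of case~(5) fails as written: you derive cases (1)--(5) of the corollary using only parts (1)--(4) of the lemma and conclude that at $n=2t+2$ ``only $I_2$ applies (plus $I_3$ iff $t=1$).'' That yields $\{2,3\}$ for $n=4$, $t=1$, whereas the corollary asserts $\{2,3,4\}$. The missing value comes from part~(5) of the lemma with even $\kappa=4$: the condition $\kappa t\le n\le(2\kappa-4)t+\kappa$ reads $4\le 4\le 8$ when $t=1$, $n=4$, so $\kappa=4$ is admissible there. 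You do test $\kappa\ge 4$ against the range of case~(6), but the small-$n$ cases (1)--(5) must be screened against part~(5) of the lemma as well (this is harmless everywhere except at $t=1$, $n=4$).

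A second loose end is in case~(7). The values $\kappa\in\{1,2\}$ you invoke never contribute there ($n\le 2t+2$ together with $n\ge 4t$ forces $t=1$ and $n\le 4$, excluded by $n\ge 5$), so they play no role in the lower bound; the bound $\lceil(n+4t)/(2t+1)\rceil$ comes solely from rearranging $n\le(2\kappa-4)t+\kappa=\kappa(2t+1)-4t$, as the second half of your sentence says. What you must instead account for is $\kappa=3$: by part~(4) it is admissible exactly when $t=1$ and $n=5$, and one has to check that this coincides with the only situation in which $\lceil(n+4t)/(2t+1)\rceil=3$ (indeed $n+4t\le 3(2t+1)$ together with $n\ge 4t$ forces $t=1$, $n=5$), so the formula remains exact at the bottom of the range. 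Finally, your ``no internal gaps'' worry is settled most cleanly in $\kappa$-space rather than $n$-space: for fixed $n$ the admissible $\kappa\ge4$ of each parity form an interval ending at $k_{odd}$, resp.\ $k_{even}$, and these two endpoints always differ by exactly one (since $k_{odd}\le(n-t+1)/t\le n/t$ and $k_{even}\le n/t\le(n+1)/t$), so their union covers every integer up to $\max(k_{odd},k_{even})$. With these repairs the argument goes through.
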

We use Corollary~\ref{coro:FofP} to design our algorithm for forests of paths.

Without loss of generality, we assume that
$P_1$ is a longest path among the $m$ paths, and has $n_1$ vertices.
For each $t$, $1 \le t \le n_1$, we repeat the following procedure:
For every $j$, $1 \le j \le m$, we obtain, using Corollary~\ref{coro:FofP},
the minimum number $k^{\min}_j$ and the maximum number $k^{\max}_j$ of players
which $P_j$ admits with the boundary $t$.
Corollary~\ref{coro:FofP} implies that, for every $j$, $1 \le j \le m$,
$P_j$ admits $\kappa$ players with $t$
for any $\kappa$ between $k^{\min}_j$ and $k^{\max}_j$, and hence
$(k, F, w)$ has a Nash equilibrium with the common boundary $t$
if and only if
\begin{eqnarray}\label{eq:FofP}
\sum_{j=1}^m k^{\min}_j \le k \le \sum_{j=1}^{m} k^{\max}_j.
\end{eqnarray}
We thus complete the procedure by checking if the two inequalities in~(\ref{eq:FofP}) both hold. Since Corollary~\ref{coro:FofP} implies that
we can obtain $k^{\min}_j$ and $k^{\max}_j$ in constant time for every $j$,
the running time of the procedure above for single $t$ is $O(m)$, and hence
that of our entire algorithm is $O(n_1m)=O(n^2)$, as desired.

\section{Algorithms for Chain, Cochain, and Threshold Graphs}\label{sec:AlgForCochain}

A bipartite graph $B = (X, Y; E)$ with $|X| = p$ and $|Y| = q$
is a \emph{chain graph} if there is an ordering
$(x_{1}, x_{2}, \dots, x_{p})$ on $X$ such that
$N(x_{1}) \subseteq N(x_{2}) \subseteq \cdots \subseteq N(x_{p})$,
where $N(u)$ denote a set of neighbors of a vertex $u$.
If there is such an ordering on $X$,
then there also exists an ordering $(y_{1}, y_{2}, \dots, y_{q})$ on $Y$
such that $N(y_{1}) \subseteq N(y_{2}) \subseteq \cdots \subseteq N(y_{q})$.
We call such orderings \emph{inclusion orderings}.
A graph $B'$ is a \emph{cochain graph}
if it can be obtained from a chain graph $B = (X,Y;E)$
by making the independents sets $X$ and $Y$ into cliques.
A graph $B''$ is a \emph{threshold graph}
if it can be obtained from a chain graph $B = (X,Y;E)$
by making one of the independents sets $X$ and $Y$ into a clique.
Observe that inclusion orderings on $X$ and $Y$ in $B$
can be seen as \emph{inclusion orderings} in $B'$ and $B''$
if we use closed neighborhoods in cliques.
Such inclusion orderings can be found in linear time~\cite{Heggernes:08}.
Because the algorithm for chain graphs we will describe in this section
depends only on its property of having inclusion orderings,
we can apply the exactly same algorithm for cochain graphs and threshold graphs.

The following lemma follows directly from the definitions. Note that $N[u] = N(u)
\cup \{ u \}$.
\begin{lemma}
\label{lem:inclusion}
If $N(u) \subseteq N(v)$ or $N[u] \subseteq N[v]$ holds for $u = s^{(i)} \ne v = s^{(j)}$,
then
\[
  U_{i}(\vec{s}) =
  \begin{cases}
    0 & \text{if there is } h \ne i \text{ such that } s^{(h)} = u, \\
    w(u) & \text{otherwise}.
  \end{cases}
\]
\end{lemma}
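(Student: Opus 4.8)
The plan is to analyze what happens to the vertex $u = s^{(i)}$ and its neighbors under the diffusion dynamics when $N(u) \subseteq N(v)$ (the closed-neighborhood case $N[u] \subseteq N[v]$ being handled identically using closed neighborhoods). First I would dispose of the easy case: if some other player $p_h$, $h \ne i$, also chose $u$, then by definition $u$ becomes neutral at time one, so $p_i$ dominates nothing reachable from it — indeed, every vertex adjacent to $u$ is also adjacent to $v = s^{(j)}$, hence at time two such a vertex either is already dominated/neutral or sees a neighbor dominated by $p_j$, so $p_i$ never extends past $u$. Thus $U_i(\vec{s}) = 0$.

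For the main case, assume no other player chose $u$, so $p_i$ dominates $u$ at time one, contributing $w(u)$. The key claim is that $p_i$ dominates nothing else. Consider any vertex $x \in N(u)$. Since $N(u) \subseteq N(v)$, we have $x \in N(v)$ as well, i.e. $x$ is also a neighbor of $v = s^{(j)}$. At time two, $x$ is not yet dominated or neutral only if it was not chosen as a seed; in that case $x$ has $u$ as a neighbor dominated by $p_i$ and $v$ as a neighbor dominated by $p_j$ (note $v$ is dominated by $p_j$ at time one since $v \ne u$ and — as with $u$ — if $v$ were contested it would be neutral, but then $x$ still sees a player other than $p_i$ or becomes neutral). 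Hence at time two $x$ becomes neutral (or was already claimed by someone), and in no case is $x$ dominated by $p_i$. So after time two, $p_i$'s territory is exactly $\{u\}$, and it cannot grow further because all vertices at distance two from $u$ via $N(u)$ are blocked. Therefore $U_i(\vec{s}) = w(u)$.

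The step I expect to require the most care is verifying that $v$ really does act as a competing source for every $x \in N(u)$ at time two even in degenerate situations — for instance when $v$ itself was contested and became neutral, or when $x$ happens to equal some other seed, or when $x$ was already dominated at time one by yet another player. In each of these sub-cases one checks directly from the diffusion rules (i) and (ii) in the model that $x$ is never added to $p_i$'s dominated set: either $x$ is excluded by clause (i), or clause (ii) fails because $x$ has a neighbor dominated by a player $p_{j'} \ne i$ (namely $v$, or whoever dominated $v$'s neutralization is irrelevant since a neutral neighbor still prevents nothing — but in the neutral-$v$ situation one instead observes $x$ itself becomes neutral by the two-source rule applied to $u$ and $v$'s original claimants). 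Writing this out is a short but slightly fiddly case check; everything else is immediate from the definitions.
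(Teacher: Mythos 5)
The paper gives no proof of this lemma (it asserts it ``follows directly from the definitions''), and your main line of argument is the natural one: if $u$ is contested then $p_i$ dominates no vertex at time one and hence can never dominate anything, so $U_i(\vec{s})=0$; if $u$ is uncontested then $p_i$ takes $u$ at time one, and every $x\in N(u)$ also lies in $N(v)$, so at time two $x$ is either already settled or sees the two distinct sources $u$ and $v$ and becomes neutral, whence $p_i$'s territory never grows beyond $\{u\}$ and $U_i(\vec{s})=w(u)$. That part is fine.

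However, your resolution of the sub-case you yourself flag as the delicate one --- $v$ contested and hence neutral --- is wrong, and in fact that branch cannot be closed because the statement as literally written fails there. A neutral vertex is dominated by nobody and propagates nothing, so it does not supply a competing source at $x$: if $v$ is neutral and $x\in N(u)\cap N(v)$ has no other neighbor dominated by a player other than $p_i$, then rule (ii) is satisfied for $p_i$ at $x$ and $p_i$ dominates $x$ at time two. Concretely, take the chain graph with $X=\{u,v\}$, $Y=\{x\}$ and edges $ux$, $vx$ (so $N(u)=N(v)=\{x\}$), and three players with $s^{(1)}=u$ and $s^{(2)}=s^{(3)}=v$. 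Then $v$ is neutral, $p_1$ dominates $u$ at time one and $x$ at time two, so $U_1(\vec{s})=w(u)+w(x)\ne w(u)$, contradicting the claimed conclusion. Your assertion that ``$x$ itself becomes neutral by the two-source rule applied to $u$ and $v$'s original claimants'' contradicts the model: $v$'s claimants dominate nothing, so they contribute no second source at $x$. The lemma, and your proof, are correct under the additional hypothesis that $v$ is chosen by exactly one player; without that hypothesis the contested-$v$ branch of your case analysis is not a repairable gap but a genuine failure of the statement, and it should be flagged as such rather than argued away.
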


In what follows, 
let $B = (X,Y; E)$ be a chain graph with inclusion orderings
$(x_{1}, \dots, x_{p})$ and $(y_{1}, \dots, y_{q})$ on $X$ and $Y$, respectively.
We define $\sMax{X} = \max(\{0\} \cup \{i \mid x_{i} \in V(\vec{s})\})$
and $\sMax{Y} = \max(\{0\} \cup \{i \mid y_{i} \in V(\vec{s})\})$.

\begin{lemma}
\label{lem:chain_heaviest}
Let $\vec{s}$ be a Nash equilibrium of $B$.
If $s^{(i)} \notin \{x_{\sMax{X}}, y_{\sMax{Y}}\}$, then
\begin{eqnarray}\label{eq:chain}
  w(s^{(i)}) \ge
  \max \left\{w(u) \mid u \in \bigl( \{x_{j} \mid j \le \sMax{X}\} \cup \{y_{j} \mid j \le \sMax{Y}\}\bigr) \setminus V(\vec{s}) \right\}.
\end{eqnarray}
\end{lemma}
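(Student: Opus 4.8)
The plan is to prove~(\ref{eq:chain}) by letting $p_{i}$ deviate to a heaviest ``free'' vertex and evaluating both sides with Lemma~\ref{lem:inclusion}. Fix the Nash equilibrium $\vec{s}$ and a player $p_{i}$ with $s^{(i)}\notin\{x_{\sMax{X}},y_{\sMax{Y}}\}$, and let $u^{\star}$ attain the maximum on the right-hand side of~(\ref{eq:chain}); we may assume that set is nonempty, as otherwise there is nothing to prove. By the symmetry of the two inclusion orderings I may assume $u^{\star}=x_{c}$ with $c\le\sMax{X}$ and $x_{c}\notin V(\vec{s})$. Since $x_{c}$ is unoccupied while $x_{\sMax{X}}$ is occupied, we have $\sMax{X}\ge 1$ and $c<\sMax{X}$, hence $N(x_{c})\subseteq N(x_{\sMax{X}})$ by the inclusion ordering; moreover the player sitting on $x_{\sMax{X}}$ is not $p_{i}$, because $s^{(i)}\ne x_{\sMax{X}}$.

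First I would evaluate the deviation $(\vec{s}_{-i},x_{c})$. In this profile $x_{c}$ is occupied only by $p_{i}$ (it was unoccupied in $\vec{s}$ and only $p_{i}$ moved), while $x_{\sMax{X}}$ is still occupied by a player other than $p_{i}$. Applying Lemma~\ref{lem:inclusion} in $(\vec{s}_{-i},x_{c})$ with $u=x_{c}$ and $v=x_{\sMax{X}}$ therefore gives $U_{i}(\vec{s}_{-i},x_{c})=w(x_{c})$. The Nash condition~(\ref{eq:condition_on_Nash}) then yields $w(u^{\star})=w(x_{c})\le U_{i}(\vec{s})$, so it remains only to bound $U_{i}(\vec{s})$ from above by $w(s^{(i)})$.

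For this I would use that $s^{(i)}$ is not a maximal occupied vertex on its own side. Indeed, if $s^{(i)}=x_{a}\in X$ then $x_{a}$ is occupied and $x_{a}\ne x_{\sMax{X}}$, so $a<\sMax{X}$ and $N(s^{(i)})\subseteq N(x_{\sMax{X}})$ with $x_{\sMax{X}}$ occupied by a player other than $p_{i}$; the case $s^{(i)}\in Y$ is symmetric through $y_{\sMax{Y}}$, which is exactly why the hypothesis excludes \emph{both} $x_{\sMax{X}}$ and $y_{\sMax{Y}}$. By Lemma~\ref{lem:inclusion}, $U_{i}(\vec{s})=w(s^{(i)})$ whenever no other player also chose $s^{(i)}$; in that case $w(u^{\star})\le U_{i}(\vec{s})=w(s^{(i)})$ and~(\ref{eq:chain}) follows.

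The main obstacle is the degenerate case in which $s^{(i)}$ is occupied by at least two players: then Lemma~\ref{lem:inclusion} only gives $U_{i}(\vec{s})=0$, so the deviation above shows merely $w(u^{\star})\le 0$ rather than $w(u^{\star})\le w(s^{(i)})$, and with negative weights this is not enough. Closing this gap is where the real work lies: one must rule out that such a shared, non-maximal seed coexists with a free vertex of strictly larger weight in a Nash equilibrium, either by a preliminary normalization of the equilibria under consideration (replacing a co-located pair by a configuration without it) or by directly analyzing the deviations available to the two players sharing $s^{(i)}$. It is also worth noting that the hypothesis $s^{(i)}\notin\{x_{\sMax{X}},y_{\sMax{Y}}\}$ can only be met when there are at least two players, since otherwise the unique seed is itself the maximal occupied vertex on its side, so $k=1$ need not be treated. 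Apart from the co-located case, the lemma is a two-step consequence of Lemma~\ref{lem:inclusion} together with the Nash condition.
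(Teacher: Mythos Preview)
Your argument is essentially identical to the paper's: invoke Lemma~\ref{lem:inclusion} once at $s^{(i)}$ to bound $U_{i}(\vec{s})$ by $w(s^{(i)})$, invoke it again at the deviation to a free vertex $u$ to get $U_{i}(\vec{s}_{-i},u)=w(u)$, and combine via the Nash inequality. The paper's proof is the three-line contrapositive of exactly this.

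The gap you flag in the co-located case is not a trick you have missed; the paper does not handle it either. The paper simply writes ``$U_{i}(\vec{s})\le w(s^{(i)})$ by Lemma~\ref{lem:inclusion}'', which, as you observe, is only immediate when no other player sits at $s^{(i)}$: if $s^{(i)}$ is shared and $w(s^{(i)})<0$, Lemma~\ref{lem:inclusion} gives $U_{i}(\vec{s})=0>w(s^{(i)})$, and the chain of inequalities breaks. So you have been more careful than the paper, not less. In the paper's actual use of the lemma (the correctness argument for Algorithm~\ref{alg:chain}), the case of several players co-located at a non-maximal vertex is treated separately anyway, so the gap does not affect the downstream argument; but as a standalone statement with arbitrary integer weights, the edge case you isolate is indeed left open by the paper's proof as well.
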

\begin{proof}
Since $N(s^{(i)}) \subseteq N(x_{\sMax{X}})$ or $N(s^{(i)}) \subseteq N(y_{\sMax{Y}})$,
it follows that $U_{i}(\vec{s}) \le w(s^{(i)})$ by Lemma~\ref{lem:inclusion}.
Suppose for the contrary that
there exists $u \in ( \{x_{j} \mid j \le \sMax{X}\} \cup \{y_{j} \mid j \le \sMax{Y}\}) \setminus V(\vec{s})$
such that $w(s^{(i)}) < w(u)$.
Now it holds that $N(u) \subseteq N(x_{\sMax{X}})$ or $N(u) \subseteq N(y_{\sMax{Y}})$.
Thus, by Lemma~\ref{lem:inclusion}, we have $U_{i}(\vec{s}_{-i}, u) = w(u) > w(s^{(i)}) \ge U_{i}(\vec{s})$.
This contradicts the assumption that $\vec{s}$ is a Nash equilibrium.
\end{proof}

Thus, it suffices to check the strategy profiles satisfying Eq.~(\ref{eq:chain})
for our purpose.

\begin{algorithm}[htbp]
\caption{Find a Nash equilibrium $\vec{s} \in V^{k}$ of a chain graph $B = (X,Y; E)$}
\label{alg:chain}
\begin{algorithmic}[1]
  \STATE Let $(x_{1}, \dots, x_{p})$ on $X$ and $(y_{1}, \dots, y_{q})$ on $Y$ be inclusion orderings.
  \STATE // The following is for the case where $\sMax{X} \ne 0$.
  \FORALL{guesses $(\sMax{X}, \sMax{Y}) \in \{1,\dots, p\} \times \{0,\dots,q\}$}
  \STATE $s^{(1)} := x_{\sMax{X}}$. \ $s^{(2)} := y_{\sMax{Y}}$ \textbf{if} $\sMax{Y} \ne 0$.
  \STATE $R := \{x_{i} \mid i < \sMax{X}\} \cup \{y_{i} \mid i < \sMax{Y}\}$.
  \WHILE{there is a player $i$ not assigned to a vertex}
    \STATE $v := \argmax_{u \in R} w(u)$.
    \IF{$w(v) \ge 0$}
      \STATE $s^{(i)} := v$. \ $R := R \setminus \{v\}$.
    \ELSE
      \STATE $s^{(i)} := x_{\sMax{X}}$.
      \label{line:no-positive}
    \ENDIF
  \ENDWHILE
  \STATE \textbf{return} $\vec{s}$ \textbf{if} it is a Nash equilibrium.
  \ENDFOR
  \RETURN ``no Nash equilibrium''
\end{algorithmic}
\end{algorithm}

\begin{theorem}
Let $G$ be a chain, cochain, or threshold graph of $n$ vertices and $m$ edges. Then, we can solve {\CD} for $G$, and find a Nash equilibrium, if any, in $O(n^{4} (m+n))$ time. 
\end{theorem}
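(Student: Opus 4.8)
The plan is to prove correctness and bound the running time of Algorithm~\ref{alg:chain}, together with its symmetric counterpart handling the case $\sMax{X}=0$ (and the trivial case $k=0$). The algorithm enumerates all $O(pq)=O(n^{2})$ guesses for the pair $(\sMax{X},\sMax{Y})$ of maximal occupied indices; for each guess it deterministically builds a candidate profile $\vec{s}$ by pinning the two ``heaviest-neighbourhood'' vertices $x_{\sMax{X}}$ and $y_{\sMax{Y}}$, greedily assigning each remaining player to the currently heaviest unused vertex of index strictly below the two pinned indices, and parking a player on $x_{\sMax{X}}$ whenever no vertex of nonnegative weight remains; it then checks whether $\vec{s}$ satisfies Eq.~(\ref{eq:condition_on_Nash}) and returns it if so. Soundness is immediate: a profile is output only after the Nash condition has been verified explicitly for every player and every alternative vertex.

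For completeness, suppose $(k,G,w)$ has a Nash equilibrium $\vec{s}^{*}$; I would show that at the iteration with guess $(\sMax{X},\sMax{Y})=(\eta(\vec{s}^{*},X),\eta(\vec{s}^{*},Y))$ the candidate profile $\vec{s}$ produced by the greedy step is itself a Nash equilibrium. This is the crux, and it rests on Lemma~\ref{lem:inclusion} and Lemma~\ref{lem:chain_heaviest}. By Lemma~\ref{lem:inclusion}, every player sitting on a vertex other than $x_{\sMax{X}}$ or $y_{\sMax{Y}}$ earns exactly the weight of its own vertex, or $0$ under a collision; hence the utility of a non-pinned player depends only on which vertex it occupies, while the utilities of the two pinned players depend only on the set of occupied vertices. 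By Lemma~\ref{lem:chain_heaviest}, in any Nash equilibrium with these $\eta$-values every non-pinned, non-colliding player occupies a vertex at least as heavy as every unoccupied vertex of index below the two thresholds; so $\vec{s}^{*}$ places its non-pinned players on a weight-maximal selection of such vertices (plus possibly players parked on $x_{\sMax{X}}$), which is exactly what the greedy rule produces, up to ties that are interchangeable by Lemma~\ref{lem:inclusion}. I would then transfer the Nash property from $\vec{s}^{*}$ to $\vec{s}$: a deviation to a lower-index vertex is bounded by that vertex's weight, and the greedy step already chose the heaviest available ones; a deviation onto $x_{\sMax{X}}$, onto $y_{\sMax{Y}}$, or onto a vertex of larger index recreates, for the deviating player, the very situation it faced under $\vec{s}^{*}$ (the same occupied vertices dominate and the same absorbing structure governs the outcome), so it cannot exceed $U_{i}(\vec{s})$.

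For the running time there are $O(n^{2})$ guesses; per guess the greedy construction costs $O(n\log n)$, and verifying the Nash condition requires, for each of the at most $k\le n$ players and each of the $n$ alternative vertices, one simulation of the diffusion game, and a single simulation on $G$ runs in $O(n+m)$ time by a layered breadth-first search over distances to the seeds (at each layer a vertex becomes dominated by the unique player owning a nearest seed, or neutral if two or more players tie). Thus each guess costs $O(n^{2}(n+m))$ and the total is $O(n^{4}(n+m))$. Since chain, cochain, and threshold graphs all admit inclusion orderings computable in linear time and the algorithm uses nothing beyond the existence of such orderings, the same bound holds uniformly. I expect the main obstacle to be precisely the completeness argument: carefully reconciling collisions (players parked on $x_{\sMax{X}}$), the interaction of the two absorbing vertices $x_{\sMax{X}}$ and $y_{\sMax{Y}}$ when they are adjacent and split the remaining vertices, and ties among equal-weight vertices, using Lemma~\ref{lem:inclusion} throughout to argue that none of these choices changes any player's achievable utility.
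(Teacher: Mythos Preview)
Your approach is essentially the paper's: enumerate the $O(n^{2})$ guesses for $(\sMax{X},\sMax{Y})$, pin $x_{\sMax{X}}$ and $y_{\sMax{Y}}$, greedily place the remaining players on the heaviest sub-threshold vertices (parking overflow on $x_{\sMax{X}}$) as forced by Lemma~\ref{lem:chain_heaviest}, and verify Eq.~(\ref{eq:condition_on_Nash}) in $O(n^{2}(m+n))$ time per guess; the running-time analysis is identical.

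The one structural difference worth noting is in the completeness argument for the overflow case. You aim to transfer the Nash property directly from $\vec{s}^{*}$ to the algorithm's $\vec{s}$, relying on the claim that equal-weight ties are ``interchangeable by Lemma~\ref{lem:inclusion}''; but that lemma only fixes the utilities of the \emph{non-pinned} players, while the utilities of the players on $x_{\sMax{X}}$ and $y_{\sMax{Y}}$ (and the value of deviations above the thresholds) depend on the full occupied set, so equal-weight substitutions are not obviously Nash-preserving. The paper sidesteps this: when two or more players are parked on $x_{\sMax{X}}$, it argues that there are too few nonnegative-weight sub-threshold vertices, so \emph{every} profile with the guessed $\eta$-values has a player of utility at most $0$; if that player can profit by moving above the thresholds, then no Nash equilibrium with these $\eta$-values exists at all, and otherwise the algorithm's $\vec{s}$ is already a Nash equilibrium. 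This non-existence detour is precisely what lets the paper avoid the collision-and-tie reconciliation you correctly flag as the main obstacle.
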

\begin{proof}
We present an algorithm for chain graphs only.
As previously described, we can apply the same algorithm for cochain and threshold graph.
 
We first guess $\sMax{X}$ and $\sMax{Y}$. Here we assume $\sMax{X} \ne 0$.
The other case can be treated in the same way by swapping $X$ and $Y$.
We assign $x_{\sMax{X}}$ to the first player.
If $\sMax{Y} \ne 0$, then we assign $y_{\sMax{Y}}$ to the second player.
By Lemma~\ref{lem:chain_heaviest},
if $\vec{s}$ is a Nash equilibrium, then the other players have to select the heaviest vertices
in $\{x_{i} \mid i < \sMax{X}\} \cup \{y_{i} \mid i < \sMax{Y}\}$.
For each of the remaining players, we assign a vacant vertex with the maximum non-negative weight.
If there is no such a vertex, we assign $x_{\sMax{X}}$.
We then test whether the strategy profile is a Nash equilibrium.
See Algorithm~\ref{alg:chain}.

Lemma~\ref{lem:chain_heaviest} implies that
if the algorithm assigns at most one player to $x_{\sMax{X}}$,
then the algorithm is correct.
If two or more players are assigned to $x_{\sMax{X}}$,
then these players have utility 0.
In such a case, there are not enough number of vertices of non-negative weights in
$\{x_{i} \mid i < \sMax{X}\} \cup \{y_{i} \mid i < \sMax{Y}\}$.
Thus every $\vec{s}$ with the guesses $\sMax{X}$ and $\sMax{Y}$ has a player with non-positive utility.
If such a player, say $p_{i}$, has negative utility, then $\vec{s}$ is clearly not a Nash equilibrium.
If $p_{i}$ has utility 0, then it may improve its utility only if there is a vertex 
$v \in \{x_{\sMax{X}+1}, \dots, x_{p}\} \cup \{y_{\sMax{Y}+1}, \dots, y_{q}\}$
such that $U_{i}(\vec{s}_{-i}, v) > 0$.
However, in this case, there is no Nash equilibrium with the guesses $\sMax{X}$ and $\sMax{Y}$.
Therefore, the algorithm is correct.

We now analyze the running time of the algorithm.
We have $O(n^{2})$ options for guessing $x_{\sMax{X}}$ and $y_{\sMax{Y}}$.
For each guess, the bottle-neck of the running time is to test
whether the strategy profile is a Nash equilibrium or not.
It takes $O(n^{2} (m+n))$ time as follows:
we have $O(n^{2})$ candidates of moves of players;
for each candidate, we can compute the utility of the player moved by running a breadth-first search once
in $O(m+n)$ time by adding a virtual root connecting to all the vertices occupied by the players.
In total, the algorithm runs in $O(n^{4} (m+n))$ time.
\end{proof}

\newpage

\appendix

\section{Proof of Proposition~\ref{lem:UW_standard}}

Our proof of Proposition~\ref{lem:UW_standard}
consists of a sequence of claims, and we prove each claim 
by contradiction: we suppose that a claim does not hold,
and show that it contradicts the fact that $\vec{s}^*$ is a Nash
equilibrium.

Let $\vec{s}^*$ be an arbitrary Nash equilibrium of $(k+3, G', w')$.
We start by the following claim.
\begin{claim}\label{clm:UW_AtMostTwo}
There are at most two players choosing vertices in $V_A$.
\end{claim}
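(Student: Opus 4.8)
The plan is to argue by contradiction: suppose that in the Nash equilibrium $\vec{s}^*$ at least three of the $k+3$ players choose vertices of $V_A$. First I would record the basic facts about how much utility a player can extract from $A$ versus from $B$. Recall that $A$ is the small gadget (a path $a_1,a_2,a_3,a_4$ with $a_1$ joined to $n$ pendants $a'_1,\dots,a'_n$ and $a_4$ joined to $n$ pendants $a''_1,\dots,a''_n$), so $|V_A| = 2n+4$; hence \emph{any} player confined to $A$, even alone, dominates at most $2n+4$ vertices, and once two or more players are in $A$ the total they can jointly collect is still at most $|V_A| = 2n+4$, so by pigeonhole among three players in $A$ some player gets at most $(2n+4)/3 < n+1$ vertices (for $n$ large). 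On the other side, component $B$ contains the big star $b$ together with its $\bw = \Theta(n^3)$ leaves $b_1,\dots,b_\bw$; I would show that if three players are in $A$, then there is a vertex in $V'$ to which one of those players can deviate and obtain strictly more than its current utility, contradicting equilibrium.

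The key step is to locate that profitable deviation. I would split on whether some player already occupies $b$. If no player is at $b$, then any player deviating to $b$ dominates $b$ plus all $\bw = \Theta(n^3)$ leaves at time one (they have no other neighbors), getting at least $\bw+1$; since a player in $A$ has utility at most $2n+4 = o(\bw)$, this is a strict improvement, contradiction. If some player $p_j$ is at $b$, then among the three players in $A$ I would pick one, say $p_i$, whose utility is small — at most $\lfloor (2n+4)/3\rfloor < n+1$ by the pigeonhole bound above, or else bounded because the three strategies inside the path-plus-pendants structure force collisions/neutral vertices (e.g. three players on a path of length $4$ with pendant bunches cannot all reach $n+1$; at least one is blocked to $\le n$ by the others dominating $a_1$ or $a_4$ and their pendants). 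Then I exhibit a target vertex in $B$ for $p_i$ with value $\ge n+1$: choosing $v' \in V$ (an original black vertex) makes $p_i$ dominate $v'$, its $n-\delta_{v'}$ private neighbors in $D_{v'}$, and possibly its subdivision vertices — at least $n+1$ vertices before $p_j$ (at $b$) reaches them at time two — so $U_i(\vec{s}^*_{-i},v') \ge n+1 > U_i(\vec{s}^*)$, again contradicting equilibrium. (If every such $v'$ is already occupied, there are $n \ge k$ of them while at most $k-1$ other players sit in $V$, so a free one exists.)

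The main obstacle I anticipate is the careful accounting in the second case: I must make sure that the three players in $A$ genuinely cannot \emph{all} achieve utility $\ge n+1$, since a single player alone in $A$ can get up to $2n+4 > n+1$. The argument here is that with three players in the connected gadget $A$, the diffusion dynamics (whoever reaches $a_1$ or $a_4$ first sweeps the respective pendant bunch, ties make pendants neutral, and the $a_1a_2a_3a_4$ spine has only four vertices) leave at least one player with a strictly smaller catchment — concretely at most $n$ — and then the deviation into $B$ strictly helps. I would state this as a short sub-argument enumerating where the three strategies can lie in $V_A$ (how many on the spine $\{a_1,a_2,a_3,a_4\}$, how many among pendants) and in each case name the shortchanged player. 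Once that bookkeeping is pinned down, the contradiction with Eq.~(\ref{eq:condition_on_Nash}) is immediate, and the claim follows.
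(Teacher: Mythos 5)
Your overall strategy is the paper's: assume three or more players sit in $V_A$, find one whose utility is small, and exhibit a profitable deviation into $V_B$ (to $b$ if it is free, otherwise to a free vertex of $V$). Your first case (nobody at $b$) is fine. The gap is in the second case, in the asserted inequality $U_i(\vec{s}^*_{-i}, v') \ge n+1$ for a free $v' \in V$. That bound holds only when $p_i$ actually wins all $\delta_{v'}$ subdivision vertices incident to $v'$, which requires that no neighbour of $v'$ in $G$ is occupied; if a neighbour $u$ is occupied, $b_{(u,v')}$ becomes neutral at time two, and in the worst case ($\delta_{v'}=n-1$ with every neighbour occupied) the deviation yields only $1+|D_{v'}| = 1+(n-\delta_{v'}) = 2$. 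Your parenthetical guarantees only that some $v'\in V$ is unoccupied, not that its whole neighbourhood is free, and with up to $k-1$ further players scattered over $V_B\setminus\{b\}$ such a fully free $v'$ need not exist. Since your upper bound on $U_i(\vec{s}^*)$ from the pigeonhole is only about $2n/3$, the two bounds do not meet: you are left comparing a guaranteed deviation payoff of $2$ against a current utility that could be of order $n$.

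To close the argument you must sharpen the other side, which is what the paper does: with three or more players confined to $A$ (two stars of $n+1$ vertices whose centres $a_1$ and $a_4$ are joined by the path through $a_2$ and $a_3$), at least one of them ends the game with only a constant number of vertices --- the paper's proof asserts at most one --- because at most two players can each capture a pendant bunch and any further player is squeezed onto the four-vertex spine with its neighbours contested or already taken. Once that constant bound is in hand, the guaranteed payoff $1+|D_{v'}|\ge 2$ from any free vertex of $V\cup\{b\}$ (one exists, since at most $k\le n$ players are in $V_B$ while $|V\cup\{b\}|=n+1$) already yields the contradiction. You gesture at exactly this sub-argument (``at least one is blocked to $\le n$'') but defer it, and the bound $\le n$ you name is still far too weak to pair with a guaranteed deviation payoff of $2$; the short case analysis over where three strategies can sit in $V_A$ is the real content of the claim and has to be carried out.
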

\begin{proof}
Suppose for the sake of contradiction that there are $k'(\ge 3)$ players
choosing vertices in $V_A$. Then there exists a players $p_i$ dominating
at most one vertex: $U_i(\vec{s}^*) \le 1$.
On the other hand, every $v \in V$ has
$n- \delta_v (\ge 1)$ neighbors for every $v \in V$, and $b$ has
$\bw + n (\ge 1)$
neighbors, and thus there are $n+1$ vertices (i.e., ones in $V$ and $b$) by which
$p_i$ can dominate at least two vertices by changing the strategy.
Since we have $k' \ge 3$, there are at most $k$($\le n$) players choosing vertices
in $V_B$, and hence we have a contradiction.
\end{proof}

Then, we prove the following claim.
\begin{claim}\label{clm:UW_A.2}
$\vec{s}^*$ satisfies A.2.
\end{claim}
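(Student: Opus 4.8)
The plan is to prove Claim~\ref{clm:UW_A.2} in two parts: (i) at least one player chooses $b$ in $\vec{s}^*$, and (ii) at most one player chooses $b$; together these give A.2. Throughout I use that in the reduction we may assume $k \le n$ (otherwise {\sc Independent Set} is trivial), and the decisive structural fact that $b$ carries a pendant cloud of $\lambda = \Theta(n^3)$ leaves $b_1,\dots,b_\lambda$, which dominates the total weight $|V'|$ of $G'$.

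For part (i), suppose for contradiction that no player chooses $b$. Fix an arbitrary player $p_i$ and consider the deviation in which $p_i$ moves to $b$ while the other $k+2$ players stay. Since no other player sits on $b$, player $p_i$ is the unique chooser of $b$ and dominates it at time $1$; then at time $2$ every leaf $b_\ell$ that is free at time $1$ is dominated by $p_i$, because its only neighbour $b$ is dominated by $p_i$ and by no one else, and a leaf can fail to be free only if one of the $k+2$ other players occupies it at time $1$. Hence $U_i(\vec{s}^*_{-i}, b) \ge 1 + (\lambda - (k+2)) = \lambda - k - 1$. If $\vec{s}^*$ is a Nash equilibrium, Eq.~(\ref{eq:condition_on_Nash}) forces $U_i(\vec{s}^*) \ge \lambda - k - 1$ for \emph{every} $i$, so $\sum_{i} U_i(\vec{s}^*) \ge (k+3)(\lambda - k - 1)$. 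On the other hand, every vertex is dominated by at most one player and every weight equals $1$, so $\sum_i U_i(\vec{s}^*) \le |V'| = |V_A| + |V_B| = (2n+4) + (n^2 + n - m + 1 + \lambda) = \lambda + O(n^2)$. Comparing the two estimates yields $(k+2)\lambda \le O(n^2)$, which is impossible once $\lambda = \Theta(n^3)$ is chosen sufficiently large. I expect this counting step to be the crux of the argument; the only care needed is the explicit evaluation of $|V_B|$ (using $\sum_{v\in V}\delta_v = 2m$, so that $\sum_{v}|D_v| = n^2 - 2m$) and the observation that $\lambda = \Theta(n^3)$ outgrows it.

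For part (ii), suppose two or more players choose $b$. Then $b$ becomes neutral at time $1$ and is therefore never dominated; consequently no leaf $b_\ell$ is ever dominated unless some player sits on it directly, and each player who chose $b$ ends the game with utility $0$. Such a player $p_i$ can deviate to a leaf $b_\ell$ chosen by no other player — one exists because $\lambda > k+2$ — and then dominates $b_\ell$ at time $1$, obtaining utility $w'(b_\ell) = 1 > 0$, contradicting that $\vec{s}^*$ is a Nash equilibrium. Thus at most one player chooses $b$, and together with part (i) exactly one player does, which is A.2. Note that Claim~\ref{clm:UW_AtMostTwo} is not needed for this claim; it will be used, together with the present claim, in the subsequent claims establishing A.1 and A.3.
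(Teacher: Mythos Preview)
Your proof is correct, but it differs from the paper's in two respects.

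\textbf{Part (i).} The paper argues locally: using Claim~\ref{clm:UW_AtMostTwo} it finds a player $p_i$ sitting in $V_B$ who does not dominate $b$, bounds that single player's utility by $|V_B\setminus\{b,b_1,\dots,b_\lambda\}| = n^2+n-m$, and then compares with the deviation $U_i(\vec{s}^*_{-i},b)\ge \lambda-2(n+2)=\Theta(n^3)$. You instead run a global averaging argument: every player can deviate to $b$ and secure at least $\lambda-(k+2)+1$, so by Nash each $U_i(\vec{s}^*)$ is at least that large, yet the utilities sum to at most $|V'|=\lambda+O(n^2)$; the resulting inequality $(k+2)\lambda\le O(n^2)$ is impossible. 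Both arguments are valid. Yours is arguably cleaner and, as you observe, does not rely on Claim~\ref{clm:UW_AtMostTwo}; the paper's argument is more localized and yields an explicit utility bound for one designated player.

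\textbf{Part (ii).} The paper does \emph{not} prove uniqueness in this claim. Condition A.2 as written (``a player chooses $b$'') is established there only in the sense ``at least one''; that exactly one player sits on $b$ is obtained only after A.1 and A.3 account for the remaining $k+2$ players. Your part (ii) is correct and self-contained (the deviation of a colliding player to an unoccupied leaf $b_\ell$ gives utility $1>0$), but it proves more than the paper does at this point.
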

\begin{proof}
Suppose $\vec{s}^*$ does not satisfy A.2, that is,
no player chooses $b$.
Since Claim~\ref{clm:UW_AtMostTwo} implies that at least $k+3 - 2 \ge 2$ players
choose vertices in $V_B$,
there exists a player $p_i$ who cannot dominate $b$.
Since $b_1, b_2, \cdots , b_{\bw}$ are connected only to $b$,
$p_i$ dominates either at most one of $b_1, b_2, \cdots , b_{\bw}$
or at most all the other vertices in $V_B$. Thus,
\begin{eqnarray*}
U_i(\vec{s}) &\le& n + m + \sum_{v \in V}|D_v|\\
&\le&  n + m + n^2 - 2m\\
&\le&  n^2 + n  - m.
\end{eqnarray*}
On the other hand, $p_i$ can dominate, by changing the strategy to $b$,
at least $b$ together with
some of $b_1, b_2, \cdots , b_{\bw}$ and the vertices in $V$.
In this case, since each of the other $k+2 \le n+2$ players can dominate at most
two vertices of $b_1, b_2, \cdots , b_{\bw}$ and the vertices in $V$
due to $p_i$, we have
\begin{eqnarray*}
U_i(\vec{s}_{-i}, b) \ge \bw - 2(n+2) = \Theta(n^3).
\end{eqnarray*}
\end{proof}

We complete the proof of Lemma~\ref{lem:UW_standard} by verifying
the following two claims.
\begin{claim}\label{clm:UW_A.3}
$\vec{s}^*$ satisfies A.3.
\end{claim}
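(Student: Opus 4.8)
The plan is to argue by contradiction, as in the earlier claims. First I would record two preliminary facts. (a) \emph{Exactly} one player occupies $b$: by Claim~\ref{clm:UW_A.2} at least one does, and if two or more did then $b$ would be neutral, the pendant vertices $b_1,\dots,b_{\bw}$ would stay undominated forever, and any player on $b$ (utility $0$) could improve by moving to some $b_\ell$; so the unique player on $b$ dominates $b$ together with all of $b_1,\dots,b_{\bw}$ and has utility at least $\bw+1=\Theta(n^3)$. (b) Every player on a vertex of $V_B\setminus\{b\}$ has utility at most $n+1$: since the player on $b$ reaches all of $V\subseteq V_B$ at time $2$, a player on an original vertex $v\in V$ can dominate only $v$, the $n-\delta_v$ vertices of $D_v$, and at most $\delta_v$ subdivision vertices, while a player on a subdivision vertex, a vertex of some $D_v$, or some $b_\ell$ dominates at most one vertex.

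Now suppose $\vec{s}^*$ violates A.3. By Claim~\ref{clm:UW_AtMostTwo} at most two players lie in $V_A$, so at least one player $p$ sits in $V_B\setminus\{b\}$, and $U_p(\vec{s}^*)\le n+1$ by (b). If \emph{no} player lies in $V_A$, then $p$ moving to $a_1$ dominates all $2n+4$ vertices of the empty graph $A$, contradicting the equilibrium. If \emph{exactly one} player lies in $V_A$, at a vertex $x$, then $p$ can move onto an unoccupied vertex among $\{a_2,a_3\}$; a short check of the two-player diffusion on $A$ shows that, whatever $x$ is, at least one of $a_2,a_3$ gives the newcomer utility at least $n+2>n+1$ — again a contradiction.

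It remains to treat \emph{exactly two} players in $V_A$, at vertices $x$ and $y$. If $x=y$, all of $A$ is undominated, both of these players have utility $0$, and either can relocate profitably. So assume $x\ne y$ and $\{x,y\}\ne\{a_2,a_3\}$, whence at least one of $a_2,a_3$ is unoccupied. If one of the two players, say the one at $x$, has utility at most $n+1$, it relocates onto an unoccupied vertex among $\{a_2,a_3\}$ and (again by the two-player analysis on $A$) obtains utility at least $n+2$, a contradiction. Otherwise both players have utility at least $n+2$; since $A$ has $2n+4$ vertices of weight one, both have utility exactly $n+2$ and every vertex of $A$ is dominated. I would then check that the only such configurations other than $\{a_2,a_3\}$ are $\{a_1,a_4\}$ and $\{a'_i,a''_j\}$, and that in each one of the two players can relocate within $V_A$ — from $a_1$ to $a_3$, respectively from $a'_i$ to $a_2$ — to dominate $n+3$ vertices, a contradiction. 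Hence $\{x,y\}=\{a_2,a_3\}$ with $x\ne y$, which together with Claim~\ref{clm:UW_AtMostTwo} is exactly A.3.

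The main obstacle is the finite case analysis on the fixed gadget $A$: showing that moving onto $a_2$ or $a_3$ next to a single occupant always secures utility at least $n+2$, and that the equal-split configurations are precisely $\{a_1,a_4\}$ and $\{a'_i,a''_j\}$. Both hinge on correctly tracking when two diffusion wavefronts reach a vertex simultaneously and turn it neutral, since that is exactly what changes the final domination counts.
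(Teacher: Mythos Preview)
Your argument is correct and follows essentially the same route as the paper's: a case split on how many players sit in $V_A$, together with the observation (via the player on $b$) that anyone in $V_B\setminus\{b\}$ has utility at most $n+1$, so a deviation onto $a_2$ or $a_3$ is profitable. Your version is in fact more careful than the paper's in two respects: you explicitly argue that \emph{exactly one} player occupies $b$ (the paper's Claim~\ref{clm:UW_A.2} only guarantees at least one, and its proof of the present claim silently uses uniqueness when bounding utilities in $V_B$), and in the two-players-in-$V_A$ case you separate out the equal-split configurations $\{a_1,a_4\}$ and $\{a'_i,a''_j\}$ and exhibit an explicit $n+3$ deviation, whereas the paper simply asserts that ``one player can increase the utility by choosing $a_2$ or $a_3$.'' The remaining case checks you flag as ``the main obstacle'' are routine and your outcomes are correct.
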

\begin{proof}
By Claim~\ref{clm:UW_AtMostTwo}, it suffices to consider the case where
there are at most two players choosing vertices in $V_A$.

Consider the case where exactly two players choose vertices in $V_A$
other than ``$a_2$ and $a_3$.''
Without loss of generality, the chosen vertices are ``$a_1$ and $a_2$,''
``$a_1$ and $a_3$,'' ``$a_1$ and $a_4$,'' or
one of the players chooses $a_1'$. For any of them, one player
can increase the utility by choosing $a_2$ or $a_3$.

Consider the other case where at most one player chooses a vertex in $V_A$.
Then, there are at least $k+2$ players choosing vertices in $V_B$.
Since Claim~\ref{clm:UW_A.2} implies that some player chooses $b$,
each of the other players dominates at most $n+1$ vertices;
let $p_i$ be one of them: $U_i(\vec{s}^*) \le n+1$.
On the other hand, we have either
\[
U_i(\vec{s}_{-i}, a_2) \ge n+2 \quad \mbox{or} \quad U_i(\vec{s}_{-i}, a_3) \ge n+2.
\]
\end{proof}

\begin{claim}
$\vec{s}^*$ satisfies A.1.
\end{claim}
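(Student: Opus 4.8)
The plan is to turn the vertex $a_1$ into a universal deviation target. By Claim~\ref{clm:UW_AtMostTwo} and Claim~\ref{clm:UW_A.3}, the only players in $V_A$ are the one on $a_2$ and the one on $a_3$, so $a_1$ and each of its $n$ private pendants $\Gada, \Gadb, \dots , \Gadd$ is unoccupied in $\vec{s}^*$. Hence, for \emph{every} player $p_i$, the move to $a_1$ dominates $a_1$ at time one and then all of $\Gada, \Gadb, \dots , \Gadd$ at time two, so $U_i(\vec{s}^*_{-i}, a_1) \ge n+1$. As $\vec{s}^*$ is a Nash equilibrium, this already forces $U_i(\vec{s}^*) \ge n+1$ for every $i$; I will call this the \emph{utility bound}. (Note it holds no matter where the other players sit, and it is consistent with the values computed in the sufficiency lemma.)

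Next I would use the utility bound to eliminate every placement incompatible with A.1. By Claim~\ref{clm:UW_A.2} at least one player sits on $b$, and by the previous paragraph exactly two sit on $\{a_2, a_3\}$ and none elsewhere in $V_A$, so the remaining $k+1$ players lie in $V_B$. I then rule out, one by one: (i) a player on one of $b_1, \dots , b_{\bw}$, whose only neighbor $b$ is already occupied, so its utility is $1$; (ii) a player on a subdivision vertex $b_e$ or on a vertex of some $D_v$, whose adjacent original vertices are also adjacent to the occupied vertex $b$ and hence are either already taken at time one or simultaneously reached by $b$'s owner and by this player at time two (thus becoming neutral), so again its utility is $1$; (iii) two players on the same vertex of $V$, each having utility $0$; and (iv) two players on $b$, each having utility $0$. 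Every one of (i)--(iv) contradicts the utility bound $U_i(\vec{s}^*) \ge n+1 \ge 2$. Therefore exactly one player sits on $b$, no player sits in $V_B \setminus (V \cup \{b\})$, and the remaining $(k+3)-2-1 = k$ players occupy pairwise distinct vertices of $V$ --- which is precisely condition A.1.

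The only step requiring genuine care is the time-step bookkeeping in (ii): one must verify that a player on $b_e$ (or on a padding vertex of $D_v$) really cannot spread to an incident vertex $v \in V$. This follows because $b$ is dominated already at time one (Claim~\ref{clm:UW_A.2}), so at time two either $v$ has been occupied by another player at time one, or $v$ is reached simultaneously by the owner of $b$ (through $b$) and by our player, and therefore turns neutral; in all cases the player is stuck with its single starting vertex. Everything else is a short counting argument, and with this last claim in hand the proof of Proposition~\ref{lem:UW_standard} is complete.
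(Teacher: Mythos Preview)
Your proof is correct and follows exactly the paper's approach: use the deviation to $a_1$ to force $U_i(\vec{s}^*)\ge n+1$ for every player, then eliminate every placement in $V_B\setminus V$. You are in fact more careful than the paper, which only explicitly treats a player sitting on some $D_v$, $b_e$, or $b_j$ and leaves the no-collision conditions on $V$ and on $b$ implicit; the one tiny wrinkle is that your step~(ii) invokes ``$b$ is dominated at time one'' via Claim~\ref{clm:UW_A.2}, but that claim only guarantees some player \emph{chooses} $b$, so (ii) is only valid once (iv) has excluded two players on $b$---reordering (iv) before (ii), or noting the dependency, closes this.
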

\begin{proof}
Suppose there exists a player $p_i$ choosing either $u \in D_v$ for some $v \in V$,
$b_e$ for some $e \in E$, or one of $b_1, b_2, \dots , b_{\bw}$.
Claim~\ref{clm:UW_A.2} implies that $p_i$ dominates only one vertex:
$U_i(\vec{s}^*) \le 1$.
However,
Claim~\ref{clm:UW_A.3} implies that
\[
U_i(\vec{s}, a_1) \ge n+1.
\]
\end{proof}

\section{Proof of Proposition~\ref{lem:Nonnegative_standard}}

In this section, we prove Proposition~\ref{lem:Nonnegative_standard}.
As in the last section, our proof consists of a sequence of claims.

Consider an arbitrary Nash equilibrium $\vec{s}^*$ of $(2n+4, G_S, w_S)$.
We first verify the following claim:
\begin{claim}\label{claim:nonnegative_AtMostTwo}
$\vec{s}^*$ satisfies (B.3).
\end{claim}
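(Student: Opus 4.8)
The plan is to show Claim~\ref{claim:nonnegative_AtMostTwo}, i.e. that any Nash equilibrium $\vec{s}^*$ of $(2n+4, G_S, w_S)$ puts exactly one player on $a_2$ and exactly one on $a_3$, by a case analysis on how many players occupy the path component $A$ and where they sit. Recall the key numerical facts: in the path $A=a_1a_2a_3a_4$ with weights $2\ps,1,1,2\ps$, a single player on $a_2$ (or $a_3$) sweeps the whole path for utility $4\ps+2$, and the maximum utility available anywhere in $B$ is bounded: a player on $b'$ or $b''$ gets at most $\ps + \sum_{j} s_j = \ps + 2\ps = 3\ps$ (if it somehow captured every $b_{j,1}$), on a $b_{j,3}$ or $b_{j,4}$ it gets at most $2\ps$ or so, and every other vertex of $B$ yields even less. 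So the path offers strictly more than anything in $B$, which forces players to crowd into $A$ until the marginal benefit is gone; the whole argument is about counting that crowd.

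First I would establish that at most two players choose vertices in $V_A$: if three or more sit in the $4$-vertex path $A$, some player dominates at most one vertex, hence has utility at most $2\ps$ (the heaviest single vertex), while by the bounds above and the structure of $B$ one can always find a vacant vertex in $B$ — or indeed a better vertex in $A$ — giving strictly more, contradicting Nash; here I have to be a little careful about the degenerate small cases but the $\Theta$-type weight gaps make it routine. Next, assuming at most two players occupy $A$, I would show at least two do: if at most one player is in $A$, then at least $2n+3$ players are in $B$; since $B$ has only $2n$ five-cycles plus $b',b''$, and each player's utility in $B$ is bounded by roughly $2\ps$ while many vertices force utility far below that, some player $p_i$ in $B$ has utility strictly less than $4\ps+2$ and can profitably deviate to $a_2$ or $a_3$ (at least one of which is free since at most one player is in $A$), a contradiction. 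So exactly two players are in $A$.

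Finally, with exactly two players in $A$, I would rule out every placement other than $\{a_2,a_3\}$: for the pairs $\{a_1,a_2\}$, $\{a_1,a_3\}$, $\{a_1,a_4\}$, $\{a_2,a_4\}$, $\{a_3,a_4\}$, $\{a_1,a_1\}$-type collisions, etc., a direct check shows one of the two players gets utility at most $2\ps+1$ (or less) yet could move to the still-vacant one of $a_2,a_3$ and obtain $4\ps+2$, or in the collision case a player gets $0$ and improves by moving to any leaf; enumerating the $\binom{4}{2}$ unordered positions plus the coincidence cases handles it. The main obstacle I anticipate is not conceptual but bookkeeping: pinning down the exact utility of a second player sharing the path $A$ (how far the diffusion front reaches past an occupied $a_2$ or $a_3$, given the neutral-vertex rule) and making sure the inequality ``deviation value $>$ current value'' holds with the specific weights $\ps,1,1,\ps$ in every subcase, including the annoying small-$\ps$ boundary. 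Once Claim~\ref{claim:nonnegative_AtMostTwo} is in hand, the remaining standardness conditions (B.1) and (B.2) follow by the analogous counting arguments sketched in the main-text lemmas.
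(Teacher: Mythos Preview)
Your three-case skeleton (at least three players in $A$; exactly two in $A$ but not on $\{a_2,a_3\}$; at most one in $A$) is exactly the paper's, so the strategy is fine. The problems are all in the numbers.

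In the ``$\ge 3$ players in $A$'' case you only conclude that some player dominates one vertex and hence has utility $\le 2\ps$. That bound is too weak to finish: a deviation into $B$ is only guaranteed to yield $\ps$ (land on an unoccupied vertex of weight $\ps$, of which there are $4n+2$ while at most $2n+1$ players sit in $B$), and $\ps\not>2\ps$. The paper instead observes that with three distinct occupied vertices on the four-path, one of them must be $a_2$ or $a_3$, and that player is sandwiched with utility at most $1$; then the $\ps$-deviation works. You need that sharper observation.

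In the ``$\le 1$ player in $A$'' case your deviation value $4\ps+2$ is wrong whenever one player already occupies $A$: if someone sits on $a_2$, moving to $a_3$ yields only $2\ps+1$, not $4\ps+2$. So the target inequality is ``some player in $B$ has utility $\le 2\ps$,'' not ``$<4\ps+2$.'' Your justification for this (``each player's utility in $B$ is bounded by roughly $2\ps$'') is false as a per-player statement --- a player on $b'$ can collect $b'$ together with several $b_{j,1}$'s and exceed $2\ps$. The correct argument, which you are missing, is a pigeonhole on total weight: $\sum_{v\in V_B} w_S(v)=2\ps(2n+2)$, and at least $2n+3$ players share it, so the minimum utility among them is strictly below $2\ps$, hence $<2\ps+1$.

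Finally, the weights on $A$ are $2\ps,1,1,2\ps$, not $\ps,1,1,\ps$ as you write near the end; this is why the relevant deviation value is $2\ps+1$ and why the total-weight argument is exactly what is needed.
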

\begin{proof}
Suppose $\vec{s}^*$ does not satisfy (B.3).
We consider the following three cases (i), (ii) and (iii).

\smallskip
\noindent
(i) \emph{More than three players choose vertices in $A$.}

Suppose there are at least three players choosing vertices in $V_A$, which implies
that
there exists a player $p$ whose utility is at most one. Moreover,
at most $2n + 1$ players choose vertices in $V_B$,
and there are $4n + 2$ vertices with weight $\ps$.
Thus, the player $p$
obtain the utility $\ps (> 1)$ by changing the strategy to a vertex in $V_B$,
which contradicts the fact that $\vec{s}^*$ is a Nash equilibrium.

\smallskip
\noindent
(ii) \emph{Exactly two players choose vertices in $A$.}

Since $\vec{s}^*$ does not satisfy (B.3),
we can assume without loss of generality that the two players
choose either ``$a_1$ and $a_2$,'' ``$a_1$ and $a_3$'' or `` $a_1$ and $a_4$.''
For both cases, a player can increase the utility
by choosing $a_2$ or $a_3$.

\smallskip
\noindent
(iii) \emph{At most one player chooses a vertex in $A$.}

In this case, there exists a vertex in $V_A$ such that
any player obtain utility $2\ps+1$
by changing the strategy to it. Thus,
it suffices to show that one of the other $2n+3$ (or $2n+4$)
players has utility at most $2\ps$.
Clearly, the players, in this case, choose vertices in $V_B$,
while the sum of the weights of all the vertices in $V_B$ is
\[
\sum_{j \in [2n]} (2\ps + s_j) + \ps + \ps = 4n\ps + 4\ps = 2\ps (2n+2).
\]
Thus, at least one player has utility less than $2\ps$, as desired.
\end{proof}

We  now consider the other $2n+2$ players choosing vertices in $V_B$.
For every $j$, $j \in [2n]$, we define $P_j$ as a set of indices of
the players who choose $b_{j, 1},b_{j, 2},b_{j, 3},b_{j, 4}$ or $b_{j, 5}$
on $\vec{s}^*$.
We then have the following claim.

\begin{claim}\label{clm:nonnegative_Pj<=1}
For every $j \in [2n]$, $|P_j| \le 1$.
\end{claim}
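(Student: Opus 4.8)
The plan is to argue by contradiction: fix $j\in[2n]$, assume $|P_j|\ge 2$, and derive a violation of the Nash property of $\vec{s}^*$. The first move is to extract rigidity from the fact (established just before the claim) that exactly $2n+2$ players choose vertices in $V_B$, so $a_1$ and $a_4$ are free. A $V_B$-player moving to $a_1$ dominates only $a_1$ (its unique neighbour $a_2$ is occupied), obtaining exactly $w_S(a_1)=2\ps$; hence every $V_B$-player already has utility at least $2\ps$ in $\vec{s}^*$. Summing over all $2n+2$ of them gives total utility at least $2\ps(2n+2)$, which is exactly the total weight $\sum_{j'\in[2n]}(s_{j'}+2\ps)+2\ps$ of $V_B$. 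Since weights are nonnegative this forces equality: every $V_B$-player has utility exactly $2\ps$, and no positive-weight vertex of $V_B$ becomes neutral or is left undominated.

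The second move localizes everything to the $5$-cycle $B_j=\{b_{j,1},\dots,b_{j,5}\}$, whose total weight is $s_j+2\ps<4\ps$ (because $\ps=\tfrac12\sum_\ell s_\ell>\tfrac12 s_j$, using $2n\ge 2$). If $|P_j|\ge 3$, the players of $P_j$ cannot collect $|P_j|\cdot 2\ps$ inside $B_j$ together with its only exits $b_{j,2}b'$ and $b_{j,5}b''$, so this case is ruled out by a short counting argument combined with the rigidity step. For $|P_j|=2$: if the two players chose the same vertex, that vertex is neutral and both players have utility $0<2\ps$, contradicting the rigidity step (or directly the Nash property, via $a_1$). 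If they chose distinct vertices $u,v$ of $B_j$, I would run the diffusion restricted to $B_j$: the two opinions expand from $u$ and $v$ along the cycle and meet, and an elementary parity check over the two arc lengths shows that some vertex of $B_j$ becomes neutral. For all but four of the $\binom{5}{2}$ unordered pairs $\{u,v\}$ this neutral vertex is one of $b_{j,1},b_{j,3},b_{j,4}$, each of positive weight, contradicting the rigidity step immediately.

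For the remaining pairs — those whose collision vertex is a weight-$0$ gateway $b_{j,2}$ or $b_{j,5}$ — the diffusion inside $B_j$ alone is not conclusive, and I would trace the game through $b'$ and $b''$. In each such pair one of the two players reaches $b_{j,1}$ together with $b_{j,5}$ or $b_{j,2}$, so it threatens to capture $b''$ (or $b'$) and then cascade into further cycles; one checks that in every resolution either that player's utility already exceeds $2\ps$, or the cascade is blocked and this re-creates a neutral positive-weight vertex or leaves one undominated, or (if the player captures nothing beyond $b_{j,1}$) its utility is only $s_j<2\ps$ and moving to $a_1$ is profitable. Each branch contradicts $\vec{s}^*$ being a Nash equilibrium, so $|P_j|\le 1$.

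I expect this last case analysis to be the main obstacle: it requires careful bookkeeping of the exact times at which $b_{j,i}$, $b'$, $b''$ and the corresponding vertices of the neighbouring cycles become dominated — in particular whether players sit on $b'$ or $b''$ — and one must confirm that each branch lands in one of the three contradictions above. The rigidity and localization steps are routine; the diffusion bookkeeping is where the work is.
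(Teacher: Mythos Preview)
Your rigidity step is exactly right and matches the paper's setup: once (B.3) is established, every $V_B$-player can deviate to $a_1$ for utility $2\alpha$, and since the total weight of $V_B$ is $2\alpha(2n+2)$, every $V_B$-player has utility \emph{exactly} $2\alpha$ and no positive-weight vertex is lost. The paper states this as ``it suffices to show that some vertex with a positive weight becomes neutral or there exists a player whose utility is less than (or more than) $2\alpha$,'' which is the contrapositive of your formulation.

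There is, however, a genuine gap in your treatment of $|P_j|\ge 3$. The ``short counting argument'' asserts that the $P_j$-players can collect weight only from $B_j\cup\{b',b''\}$, but this is false: a $P_j$-player sitting on a gateway $b_{j,2}$ or $b_{j,5}$ can spread through $b'$ or $b''$ into the other cycles $B_{j'}$ and pick up positive weight there. Concretely, for the triples $\{b_{j,1},b_{j,2},b_{j,3}\}$ and $\{b_{j,1},b_{j,4},b_{j,5}\}$ the middle player escapes outward, and no local counting rules this out. The paper handles these two surviving triples by tracing the diffusion through $b''$ (resp.\ $b'$) and showing that the player at $b_{j,1}$ is outrun in every other cycle by the player who captured $b'$ (resp.\ $b''$), so its utility is at most $s_j+\alpha<2\alpha$. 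This is precisely the kind of bookkeeping you flagged as the main obstacle for the four ``hard'' $|P_j|=2$ pairs; it is needed here too.

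A smaller issue: for your six ``easy'' pairs, the cycle-restricted diffusion does pin down a neutral vertex, but in the full graph an external player at $b'$ or $b''$ can neutralise a gateway first and change which vertex actually becomes neutral. A contradiction is still reachable in each case, but it is not immediate---you end up re-splitting on whether $b'$ and $b''$ are occupied. This is exactly why the paper organises the $|P_j|=2$ analysis by $b'/b''$ occupancy (cases (i)--(iii)) rather than by the pair $\{u,v\}\subseteq B_j$: that decomposition controls the external influence up front and keeps each subcase local to one cycle.
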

\begin{proof}
Note that, by Claim~\ref{claim:nonnegative_AtMostTwo},
any player can obtain the utility $2\ps$ by changing the strategy to $a_1$.
Furthermore, the sum of all the weights of the vertices in $V_B$ is
exactly $2\ps(2n+2)$.
Thus, to derive a contradiction, it suffices to show that
some vertex with a positive weight becomes neutral or
there exists a player whose utility is less than (or more than) $2\ps$.

If there exists $j$ such that $|P_j| \ge 4$,
then we can verify by exhaustive search that
there must exist a player $p_i$, $i \in P_j$, has the utility less than $2\ps$.

If there exists $j$ such that $|P_j| = 3$,
we can assume that the three players choose either $b_{j, 1}, b_{j, 2}$ and
$b_{j,  3}$ or $b_{j, 1}, b_{j, 4}$ and
$b_{j,  5}$; otherwise, a player $p_i$, $i \in P_j$,
must have the utility less than $2\ps$ or one of  $b_{j, 1}, b_{j, 3}$ and $b_{j,4}$
becomes neutral. Without loss of generality, we consider the former case.
Then, the player choosing $b_{j,1}$ dominates $b_{j, 1}$ and $b_{j, 5}$, and
also may dominate $b''$; but, cannot do any of the other vertices
with positive weights
due to the player dominating $b'$ or the one choosing $b_{j, 2}$.
Thus, the player choosing $b_{j,1}$ has the utility at most $s_j + \ps < 2\ps$. 

Suppose there exists $j$ such that $|P_j| = 2$;
let $p$ and $p'$ be the two players.
We consider the following two cases.

\smallskip
\noindent
(i) \emph{Two players choose $b'$ and $b''$.}

In this case, $p$ and $p'$ can dominate
vertices only of $b_{j, 1},b_{j, 2},b_{j, 3},b_{j, 4}$ and $b_{j, 5}$,
the sum of whose weights is $2\ps+s_j < 4\ps$. Thus, one of $p$
and $p'$ has the utility less than $2\ps$.

\smallskip
\noindent
(ii) \emph{Exactly one player chooses $b'$ or $b''$.}

Suppose without loss of generality that a player chooses $b'$,
but no player does $b''$.
Then, the two players must choose either
``$b_{j, 1}$ and $b_{j, 3}$''
or ``$b_{j,4}$ and $b_{j, 5}$.'' However, the player choosing $b_{j, 1}$ (or $b_{j, 5}$)
dominates $b_{j, 1}$ and $b_{j, 5}$, and also may dominate $b''$;
but cannot do any of the other vertices with positive weights
due to the player choosing $b'$. Thus, the utility of the player choosing
$b_{j, 1}$ (or $b_{j, 5}$) is $s_j + \ps < 2\ps$.

\smallskip
\noindent
(iii) \emph{No player chooses $b'$ or $b''$.}

In this case, it suffices to consider the case where
the two players choose either ``$b_{j, 1}$ and $b_{j, 3}$,'' ``$b_{j, 1}$ and
$b_{j, 4}$,'' ``$b_{j, 2}$ and $b_{j, 3}$'' or ``$b_{j, 4}$ and  $b_{j, 5}$,'' since, otherwise, a vertex with a positive integer weight
becomes neutral.

We verify only the case for ``$b_{j, 1}$ and $b_{j, 3}$,''
since the other three cases are similar.
In this case, a player $p_i$ choosing $b_{j, 1}$ dominates
$b_{j,1}$ and $b_{j, 5}$. Furthermore, if $p_i$ does not dominate
$b''$, then the utility is $s_j < 2\ps$.
Otherwise, that is, if $p_i$ dominates $b''$, then $p_i$ faced on one of
the following two situations for any $j' \in [2n] \backslash \{ j\}$:
(iii-1)
if there exists a player choosing one of
$b_{j', 1}, b_{j', 2}, b_{j', 3}, b_{j', 4}$ and $b_{j', 5}$,
then $p_i$ cannot dominate none of them;
and (iii-2)
if no player chooses $b_{j', 1}, b_{j', 2}, b_{j', 3}, b_{j', 4}$ and
$b_{j', 5}$, then either $p_i$ dominates none of them or
$p_i$ does at least $b_{j',4}$.
Therefore, $p_i$ cannot have the utility exactly $2\ps$.

We thus complete the proof of the claim.
\end{proof}

\begin{claim}
$\vec{s}^*$ satisfies (B.1) and (B.2).
\end{claim}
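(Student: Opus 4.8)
The plan is to reduce both (B.1) and (B.2) to a single ``budget'' observation about utilities, and then to settle (B.1) by a short case analysis of the diffusion inside one cycle-gadget. First I would record that the proof of Claim~\ref{claim:nonnegative_AtMostTwo} actually gives slightly more than (B.3): it shows that \emph{exactly} two players occupy $V_A$, namely $a_2$ and $a_3$, so that exactly $2n+2$ players occupy $V_B$. The budget fact is then: in $\vec{s}^*$, each of these $2n+2$ players has utility exactly $2\ps$, and every vertex of positive weight in $V_B$ is dominated by some player. Indeed, by (B.3) any player can deviate to the free vertex $a_1$ and obtain $w_S(a_1)=2\ps$ (its only neighbour $a_2$ being occupied), so in $\vec{s}^*$ every player, hence each of the $2n+2$ players in $V_B$, has utility at least $2\ps$; summing, their total utility is at least $2\ps(2n+2)$. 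On the other hand $B$ is a union of connected components of $G_S$, so the total utility of these players equals $\sum\{w_S(v)\mid v\in V_B,\ v\text{ dominated}\}\le\sum_{v\in V_B}w_S(v)=2\ps(2n+2)$, where the equality is the weight count already carried out in case~(iii) of Claim~\ref{claim:nonnegative_AtMostTwo} and the inequality uses that all weights in $V_B$ are nonnegative. Both bounds being tight yields the budget fact.

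Next I would derive (B.2) by counting. If two or more players chose $b'$, then $b'$ would be neutral and every such player would have utility $0<2\ps$, contradicting the budget fact; hence at most one player chooses $b'$, and symmetrically at most one chooses $b''$. Combining this with Claim~\ref{clm:nonnegative_Pj<=1} ($|P_j|\le 1$ for every $j$) and the fact that each of the $2n+2$ players in $V_B$ lies in exactly one gadget $B_j$ or on $b'$ or on $b''$, we get $2n+2=\sum_{j\in[2n]}|P_j|+(\text{number of players on }b')+(\text{number of players on }b'')\le 2n+1+1$, so equality holds throughout: exactly one player chooses $b'$, exactly one chooses $b''$ --- this is (B.2) --- and $|P_j|=1$ for every $j$, i.e.\ every gadget $B_j$ contains exactly one player.

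For (B.1), fix $j\in[2n]$ and let $p_i$ be the unique player in $B_j$; it suffices to show $p_i$ chooses $b_{j,3}$ or $b_{j,4}$. By (B.2) the vertices $b'$ and $b''$ are occupied, by players \emph{outside} $B_j$, so the diffusion originating at $b'$ and $b''$ can enter the $5$-cycle only through $b_{j,2}$ and $b_{j,5}$, respectively. I would then rule out the three remaining positions of $p_i$ by tracing $O(1)$ rounds. If $p_i$ chooses $b_{j,1}$, then at time $2$ the vertex $b_{j,2}$ is claimed by both $p_i$ and the $b'$-player, and $b_{j,5}$ by both $p_i$ and the $b''$-player, so both become neutral; consequently $b_{j,3}$ and $b_{j,4}$ (each of weight $\ps>0$) are never dominated, contradicting the budget fact. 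If $p_i$ chooses $b_{j,2}$, then at time $2$ it dominates $b_{j,1}$ and $b_{j,3}$ while the $b''$-player dominates $b_{j,5}$, so at time $3$ the vertex $b_{j,4}$ (weight $\ps>0$) is claimed by $p_i$ (through $b_{j,3}$) and by the $b''$-player (through $b_{j,5}$) and becomes neutral --- again a contradiction. The case $p_i$ chooses $b_{j,5}$ is symmetric to the previous one under the weight-preserving automorphism of $(B,w_S)$ that reverses every $5$-cycle and swaps $b'$ with $b''$, and it makes $b_{j,3}$ neutral. Hence $p_i$ chooses $b_{j,3}$ or $b_{j,4}$ for every $j$, and since $|P_j|=1$ no player chooses $b_{j,1},b_{j,2}$ or $b_{j,5}$: this is (B.1).

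The step I expect to be the most delicate is the case analysis for (B.1): one must trace the first two or three time steps of the game precisely --- in particular, that $b_{j,2}$ and $b_{j,5}$ are reached by the $b'$- and $b''$-players at the same time as by $p_i$ (so they go neutral), and that in the $b_{j,2}$-case $b_{j,4}$ is reached simultaneously from both sides --- and one must use (B.2) in an essential way, since it is exactly the occupation of $b'$ and $b''$ that blocks $p_i$ from diffusing out of its gadget and forces the neutral vertices. The remaining ingredients (the budget inequality and the counting for (B.2)) are routine once the total weight $2\ps(2n+2)$ of $V_B$ is in hand.
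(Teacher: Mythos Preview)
Your proof is correct and follows essentially the same line as the paper's: the paper also derives (B.2) by counting (using (B.3) together with $|P_j|\le 1$), obtains $|P_j|=1$ for all $j$ from the same count, and then argues (B.1) by noting that any other choice inside a gadget would leave some positive-weight vertex neutral or undominated. Your write-up is more explicit---you isolate the underlying ``budget'' observation (each of the $2n+2$ players in $V_B$ has utility exactly $2\ps$, hence every positive-weight vertex of $B$ is dominated) and you trace the diffusion round by round in the case analysis for (B.1)---but the mechanism is the same.
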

\begin{proof}
Claims~\ref{claim:nonnegative_AtMostTwo} and~\ref{clm:nonnegative_Pj<=1}
immediately imply that
exactly two players must choose $b'$ and $b''$, and hence $\vec{s}^*$ satisfies
(B.2). Therefore, we have $|P_j| = 1$ for every $j \in [2n]$;
otherwise, a vertex with weight $s_j$ such that $|P_j|=0$ becomes neutral.
Consequently, $\vec{s}^*$ satisfies (B.1): otherwise a vertex with a
positive weight becomes neutral, or the game end while
such vertex is not dominated by any player
\end{proof}

\section{Proof of Theorem~\ref{thm:Arbitrary}}
In this section, we prove Theorem~\ref{thm:Arbitrary}: we show that
there exists $S'\subseteq [2n]$
satisfying Eq.~(\ref{eq:partition})
if and only if $(2n+4, G_S, w_S)$ has a Nash equilibrium.
We say that a player $p$ \emph{rules} $B_j$, $j \in [2n]$,
if $p$ dominates all the four vertices
$b_j, b_{j,1}, b_{j,2}$ and $b_{j,3}$ in $B_j$.

\subsection{Main Proof}
We first prove the sufficiency of the theorem.

\begin{lemma}
If there exists $S'\subseteq [2n]$
that satisfies $|S'|=n$  and Eq.~(\ref{eq:partition}), then
$(2n+4, G_S, w_S)$ has a Nash equilibrium.
\end{lemma}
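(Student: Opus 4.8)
The plan is to exhibit an explicit strategy profile $\vec{s}^*$ analogous to those used in the previous two lemmas, and verify it is a Nash equilibrium. Given $S' \subseteq [2n]$ with $|S'| = n$ and $\sum_{j \in S'} s_j = \alpha$, set $S'' = [2n] \setminus S'$, so $|S''| = n$ as well. First I would have each player $p_i$, $i \in [2n]$, choose $b_{i,1}$ if $i \in S'$ and $b_{i,3}$ if $i \in S''$ (or some symmetric choice inside the gadget $B_j$ that makes $p_i$ \emph{rule} $B_j$ only in one case); have $p_{2n+3}$ and $p_{2n+4}$ choose $a_2$ and $a_3$ in the path $A$; and have $p_{2n+1}$ and $p_{2n+2}$ play the two ``collector'' vertices on the path $b_1', b_2', b_3', b_4', b_5'$ — namely $b_4'$ and, say, $b_1'$ (or $b_5'$), so that one of them sweeps all the $B_j$'s with $j \in S'$ through $b_4'$. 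The weights were rigged precisely so that, when the game resolves, the two collectors together capture total weight $2(\ps+1) + \sum_j (s_j - \ps) + (\text{stuff})$, and balancing this between them forces $\sum_{j \in S'} s_j = \sum_{j \in S''} s_j$.

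The key quantitative step is the utility computation. For $i \in [2n]$, the player controls a $K_{1,3}$-gadget $B_j$: depending on which vertex of $B_j$ it picks, it dominates either $\{b_{j,1}, b_{j,2}\}$ of weight $(-\ps+s_j)+\ps = s_j$ together with $b_j$ of weight $-\ps$ possibly offset, or it rules $B_j$ and gets $b_j + b_{j,1}+b_{j,2}+b_{j,3} = (-\ps) + (-\ps+s_j) + \ps + \ps = s_j$ — the weights are chosen so every gadget player nets exactly $s_j$ (or a fixed value independent of $s_j$; I would check the figure's weights to get the exact common value). For the two collectors, $p_{2n+1}$ on $b_4'$ reaches every $b_j$ with $j$ such that no other player blocks it; because the gadget players sit at the ``far'' leaves, $p_{2n+1}$ collects $b_4' = -\ps$ plus the $b_j$'s ($= -\ps+s_j$ each) it reaches plus neighbours, and $p_{2n+2}$ on $b_1'$ collects $b_1' + b_2' + b_3' = (\ps+1) + 0 + (-1) = \ps$, etc. I would arrange matters so that $U_{2n+1}(\vec{s}^*) = \sum_{j \in S'} s_j - (\text{fixed})$ and likewise for $p_{2n+2}$, and so that the common Nash value is achieved iff both sums equal $\alpha$.

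The main obstacle — and the reason arbitrary (negative) weights are needed — is the deviation analysis: I must show no player can profit by moving, and in particular no collector or path-player can barge into a gadget and grab a large positive chunk, and no gadget player can profitably relocate to the path $A$ or to another gadget or to $b_4'$. The negative weights ($-\ps$ on $b_j$, on $b_{j,3}$, on $b_4'$) are the ``poison'' that makes unwanted expansions unprofitable: a would-be intruder into $B_j$ or along $b_1',\dots,b_5'$ drags along enough negative weight to cancel the $s_j$ it hoped to steal. So the verification splits into: (i) deviations to $V_A$ — bounded above by $\ps + 1$ or $2\ps$, matching or below the equilibrium utilities; (ii) deviations within the same gadget — neutralization or poison caps the gain; (iii) deviations to a foreign gadget $B_{j'}$, $j' \neq j$ — blocked by the resident player, yielding at most weight $1$ or a negative total; (iv) deviations of a collector onto $b_4'$ from the other end, or onto a gadget leaf — here the balance condition $|S'| = n = |S''|$ together with $\sum_{S'} s_j = \sum_{S''} s_j$ is exactly what guarantees neither collector envies the other. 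Conversely (in the companion lemma) a Nash equilibrium will force, via the ``$a_1$ gives $2\ps$'' threat argument of Lemma~\ref{lem:Nonnegative_Leftarrow}, that each collector's utility is at least a threshold, and summing the two inequalities against the fixed total $2\alpha$ pins down the partition. I expect step (iv) to be the delicate one, since it is where the extra constraint $|S'| = n$ (not just equal sums) is consumed.
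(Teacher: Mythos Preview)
Your proposed profile is not the one that works, and as stated it is not a Nash equilibrium. You place \emph{one} player in each of the $2n$ gadgets $B_j$ (at $b_{j,1}$ for $j\in S'$, at $b_{j,3}$ for $j\in S''$), with collectors somewhere on the path $b_1',\dots,b_5'$. But then for every $j$ the lone occupant of $B_j$ eventually rules $B_j$ and earns exactly $w(b_j)+w(b_{j,1})+w(b_{j,2})+w(b_{j,3})=s_j$, while the collector coming through $b_4'$ is shut out of every gadget and ends up with utility at most $(\alpha+1)+(-\alpha)=1$. Since $s_j<\alpha$ in general and $1<\alpha$, both the gadget players and the collector strictly prefer to deviate to $a_1$ (utility $\alpha$). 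Putting a collector directly on $b_4'$ does not help either: then $b_j$ becomes neutral for every $j$, and the player sitting at $b_{j,3}$ is stranded with utility $-\alpha$.

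The paper's equilibrium has a different shape, and this is where the hypothesis $|S'|=n$ is actually used. The $2n$ ``gadget'' players are grouped into $n$ \emph{pairs}: for each $j\in S'$ one player takes $b_{j,1}$ and another takes $b_{j,2}$, so that $b_j$ becomes neutral and each of these $2n$ players dominates a single leaf of weight $\alpha$. The gadgets $B_j$ with $j\notin S'$ are left completely empty. The collectors sit at $b_1'$ and $b_5'$: the one at $b_5'$ sweeps through $b_4'$ into every empty gadget, ruling all $B_j$ with $j\notin S'$, for utility $(\alpha+1)+(-\alpha)+\sum_{j\notin S'}s_j=\alpha+1$ by the partition equality; the one at $b_1'$ picks up $b_1',b_2'$ for utility $\alpha+1$; and $b_3'$ becomes neutral between them. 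The deviation analysis then goes through with every player at utility $\alpha$ or $\alpha+1$. Your step~(iv) intuition that $|S'|=n$ balances the two collectors is off: the equal cardinality is consumed purely in the \emph{counting} (two players per occupied gadget times $n$ gadgets gives $2n$), not in balancing utilities.
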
 
\begin{proof}
Let $S' \subseteq [2n]$ be a subset satisfying Eq.~(\ref{eq:partition}).
We denote by $j_1, j_2, \dots , j_n \in [2n]$ the indices of the integers
in $S'$, that is, $S=\{ s_{j_1}, s_{j_2}, \dots , s_{j_n}\}$.
Then, the following strategy profile $\vec{s}^*$ is shown to be
a Nash equilibrium.
 
\smallskip
\noindent
{\bf The strategy profile $\vec{s}^*$}: For each $i\in [n]$,
the players $p_{2i-1}$ and $p_{2i}$ choose $b_{j_i, 1}$ and $b_{j_i, 2}$,
respectively. The player $p_{2n+1}$ chooses $b_1'$,
$p_{2n+2}$ does $b_5'$, $p_{2n+3}$ does $a_2$, and
$p_{2n+4}$ does $a_3$, respectively.

\smallskip

We estimate the utility of each player.
Consider the players $p_{2i-1}$ and $p_{2i}$ for every $i \in[n]$.
Since $b_i$ becomes neutral, $p_{2i-1}$ dominates only $b_{i, 1}$, and
$p_{2i}$ does only $b_{i,2}$. Thus,
for every $i\in [n]$,
\begin{eqnarray}\label{eq:arbitrary_U2iI2i-1}
U_{2i-1}(\vec{s}^*)= U_{2i}(\vec{s}^*)=  \ps.
\end{eqnarray}
The player $p_{2n+1}$ dominates only $b_1'$ and $b_2'$. Thus,
\begin{eqnarray}\label{eq:arbitrary_U2n+1}
U_{2n+1}(\vec{s}^*)=\ps+1.
\end{eqnarray}
The player $p_{2n+2}$ dominates $b_4'$ and $b'_5$; besides,
$p_{2n+2}$ rules $B_{j}$ for every $j\in [2n]\backslash S'$. Thus,
 Eq.~(\ref{eq:partition}) implies that
\begin{eqnarray}\label{eq:arbitrary_U2n+2}
U_{2n+1}(\vec{s}^*)=-\ps + \ps+1 + \sum_{j\in [2n]\backslash S'}s_j = \ps+1 .
\end{eqnarray}
The players $p_{2n+3}$ and $p_{2n+4}$ dominate
two vertices in $A$, and hence
\begin{eqnarray}\label{eq:arbitrary_U2n+3_2n+4}
U_{2n+3}(\vec{s}^*)= U_{2n+4}(\vec{s}^*) = \ps+1.
\end{eqnarray}

Below we show that $\vec{s}^*$ is a Nash equilibrium
by verifying Eq.~(\ref{eq:condition_on_Nash}).
We omit the cases where a player changes the strategy to a vertex chosen by another player.

\smallskip
\noindent
(i) $p_1, p_2, \dots , p_{2n}, p_{2n+3}, p_{2n+4}$

Consider an arbitrary player $p_i$, $i \in [2n] \cup \{2n+3, 2n+4\}$.
By Eq.~(\ref{eq:arbitrary_U2iI2i-1}) and
(\ref{eq:arbitrary_U2n+3_2n+4}), it suffices to show that
$U_i(\vec{s}^*_{-i}, v') \le \ps$
for any $v' \in V_A\cup V_B$.
If $v' \in V_A$, then $U_i(\vec{s}^*_{-i}, v') \le \ps$.
If $v' \in \{ b'_2, b'_3\}$, then $U_i(\vec{s}^*_{-i}, v') = -1$.
If $v' = b'_4$, then $p_i$ dominates $b'_3$ and $b'_4$, and also
rules $B_j$ for every $j \in [2n]\backslash S'$. Thus
\[
U_i(\vec{s}^*_{-i}, v') = -1 -\ps + \sum_{j \in [2n]\backslash S'} s_j
\]
and hence Eq.~(\ref{eq:partition}) implies that
\[
U_i(\vec{s}^*_{-i}, v') = -1 -\ps + \ps = -1.
\]
Consider $v' \in B_j$ for some $j \in [2n]$. If $j \in S'$, $p_i$ dominates
either only $b_{j,3}$ or $b_j$ together with $b_{j, 3}$.
Otherwise, since $p_{2n+2}$ chooses $b'_5$,
$p_i$ just rules $B_j$. Thus, $U_i(\vec{s}^*_{-i}, v') \le s_j \le \ps$.

\smallskip
\noindent
(ii) $p_{2n+1}$

By Eq.~(\ref{eq:arbitrary_U2n+1}),
it suffices to show that
$U_i(\vec{s}^*_{-i}, v') \le \ps + 1$ for any $v' \in V_A\cup V_B$.
Similarly to the case (i), 
$U_{2n+1}(\vec{s}^*_{-(2n+1)}, v') = \ps$ if $p_{2n+1}$ chooses
a vertex in $V_A$ or $B_j$ for some $j \in [2n]$.
If $v' \in \{ b'_2, b'_3\}$, then
$p_{2n+1}$ dominates $b'_1, b'_2, b'_3$, and hence $U_{2n+1}(\vec{s}^*_{-(2n+1)}, v') = \ps$.
If $v' = b'_4$, then $p_{2n+1}$ dominates $b'_1, b'_2, b'_3, b'_4$, and also
rules $B_j$ for every $j \in [2n]\backslash S'$. Thus,
by Eq.~(\ref{eq:partition}),
\begin{eqnarray*}
U_{2n+1}(\vec{s}^*_{-(2n+1)}, v') &=& (\ps + 1) -1 -\ps +
\sum_{j \in [2n]\backslash S'} s_j\\
&=& \ps + 1 -1 -\ps + \ps\\
&=& \ps.
\end{eqnarray*}

\smallskip
\noindent
(iii) $p_{2n+2}$

By Eq.~(\ref{eq:arbitrary_U2n+2}),
it suffices to show that
$U_i(\vec{s}^*_{-i}, v') \le \ps + 1$ for any $v' \in V'$.
Similarly to the case (i), if $v' \in V_A$,
$U_{2n+1}(\vec{s}^*_{-(2n+1)}, v') = \ps$.
If $v' \in B_j$ for some $j \in S'$, $p_{2n+2}$ dominates
exactly the same set of the vertices as the one for $\vec{s}^*$, that is,
$p_{2n+2}$ dominates $b'_4, b'_5$, and rules $B_j$ for every
$j \in [2n] \backslash S'$. Thus,
\[
U_{2n+1}(\vec{s}^*_{-(2n+1)}, v') = -\ps + ( \ps+1) + \sum_{j \in S'} s_j = \ps +1,
\]
since $b_j$ becomes neutral for every $j \in S'$.
If $v' \in \{ b'_2, b'_3, b_4\}$, then
$p_{2n+2}$ dominates $b'_3, b'_4, b'_5$, and rules $B_j$ for every
$j \in [2n] \backslash S'$. Thus,
\[
U_{2n+1}(\vec{s}^*_{-(2n+1)}, v') = -1 -\ps + ( \ps+1) + \sum_{j \in S'} s_j = \ps.
\].
\end{proof}

We below prove the necessity of the theorem.

\begin{lemma}
If $(2n+4, G_S, w_S)$ has a Nash equilibrium,
there exists $S'\subseteq [2n]$
that satisfies $|S'|=n$  and Eq.~(\ref{eq:partition}).
\end{lemma}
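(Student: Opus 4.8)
The plan is to mirror the necessity argument of Lemma~\ref{lem:Nonnegative_Leftarrow}. Let $\vec{s}^*$ be an arbitrary Nash equilibrium of $(2n+4, G_S, w_S)$. I would call a strategy profile \emph{standard} if: (C.1)~exactly one player chooses $a_2$ and exactly one chooses $a_3$, and no other player lies in $V_A$; (C.2)~exactly one player chooses $b_1'$ and exactly one chooses $b_5'$, and no other player lies on the path $b_1' b_2' b_3' b_4' b_5'$; and (C.3)~for every $j\in[2n]$, either no player lies in $B_j=\{b_j,b_{j,1},b_{j,2},b_{j,3}\}$, or exactly two players lie in $B_j$ and they occupy the two weight-$\alpha$ leaves $b_{j,1}$ and $b_{j,2}$ (so that $b_j$ becomes neutral at time $2$ and $b_{j,3}$ is never dominated). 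The heart of the proof is the proposition --- to be relegated to the appendix and established by a chain of claims --- that every Nash equilibrium of $(2n+4,G_S,w_S)$ is standard. Granting it, let $S'\subseteq[2n]$ be the set of indices $j$ for which two players lie in $B_j$, and $S''=[2n]\setminus S'$.

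Assume $\vec{s}^*$ is standard. Counting players gives $2n+4 = 4 + 2|S'|$, hence $|S'|=n$, which is one half of the desired conclusion. For the balance condition, observe that the player $p$ at $b_5'$ dominates $b_5'$ and $b_4'$ and then, via $b_4'$, rules every gadget $B_j$ with $j\in S''$ (each $b_j$ with $j\in S'$ is already neutral, while $b_3'$ turns neutral at time $3$ because it is reached simultaneously from $b_2'$ and $b_4'$), so
\[
U_p(\vec{s}^*) \;=\; (\alpha+1) + (-\alpha) + \sum_{j\in S''} s_j \;=\; 1 + \sum_{j\in S''} s_j .
\]
Moving $p$ to the vacant weight-$\alpha$ vertex $a_1$ yields utility $\alpha$, so the Nash condition forces $\sum_{j\in S''} s_j \ge \alpha - 1$. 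On the other hand, the player $q$ at $b_1'$ has utility only $(\alpha+1)+0 = \alpha+1$ (it reaches $b_2'$ but is blocked at the neutral $b_3'$), whereas if $q$ moved to $b_4'$ it would dominate $b_4', b_3', b_2', b_1'$ and rule every $B_j$ with $j\in S''$, for utility $(-\alpha)+(-1)+0+(\alpha+1)+\sum_{j\in S''}s_j = \sum_{j\in S''}s_j$; the Nash condition therefore forces $\sum_{j\in S''}s_j \le \alpha+1$. Since every $s_j$ is even, the two bounds $\alpha-1 \le \sum_{j\in S''}s_j \le \alpha+1$ pin down $\sum_{j\in S''}s_j = \alpha$, and then $\sum_{j\in S'}s_j = 2\alpha - \alpha = \alpha$ as well; together with $|S'|=n$ this is exactly Eq.~(\ref{eq:partition}).

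The main obstacle is the standardness proposition, which is where essentially all the work lies; as in the nonnegative case it is proved by a sequence of claims, each by contradiction against the Nash property. I would argue, roughly in this order: (i)~at most two players lie in $V_A$, and if two (resp.\ at most one) they must sit on $\{a_2,a_3\}$ --- otherwise some player either occupies a low-value vertex of the path $a_1a_2a_3a_4$ and can shift to $a_3$ or to a free weight-$\alpha$ vertex of $B$, or, if $V_A$ is nearly empty, a counting bound against the total weight $5\alpha+3$ produces a player who can profitably jump to an empty part of $A$; (ii)~each $B_j$ holds at most two players, and if two they sit on $b_{j,1}$ and $b_{j,2}$, since every other two-player occupancy of $B_j$ either turns a positive-weight vertex neutral or leaves some player with negative or below-$\alpha$ utility while $a_1$ pays $\alpha$; (iii)~a single player in some $B_j$ can collect only $s_j$, which we may assume is less than $\alpha$, so it would move to $a_1$, and thus $|B_j\cap V(\vec{s}^*)|\ne 1$; and (iv)~the path $b_1'\cdots b_5'$ is occupied exactly at its two endpoints $b_1', b_5'$ --- any occupant of $b_2', b_3'$, or $b_4'$ has small or negative utility and can profitably shift (notably, an occupant of $b_4'$ strictly prefers $b_5'$), and if $b_1'$ or $b_5'$ is free it is itself a profitable target of weight $\alpha+1$. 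The delicate point throughout is that, because of the negative weights on $b_j, b_{j,3}, b_3', b_4'$ and the neutral-vertex rule, the utility of each candidate deviation depends on the exact times at which the diffusion reaches each vertex and on which vertices are reached simultaneously; carrying out this timing bookkeeping for every case is the bulk of the proof.
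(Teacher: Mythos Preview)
Your proposal is correct and follows essentially the same route as the paper: reduce to a ``standard'' form via an appendix proposition, then read off $|S'|=n$ by counting and pin down $\sum_{j\in S''}s_j=\alpha$ from the two deviations (player at $b_5'$ to $a_1$; player at $b_1'$ to $b_4'$) together with the parity of the $s_j$. One small organizational difference: the paper does not prove $\beta_j\ne 1$ directly as in your sketch~(iii), but instead shows at most one $j$ can have $\beta_j=1$ and then uses a counting argument to force exactly $n$ gadgets with $\beta_j=2$; your ``a lone player in $B_j$ collects only $s_j$'' is too quick because such a player might also reach $b_4'$ and beyond, so when you fill in the appendix you will likely find the paper's two-step route cleaner.
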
 
\begin{proof}
We show that if $(2n+4, G_S, w_S)$ has a Nash equilibrium, then
the exists $S' \subseteq [2n]$ satisfying Eq.~(\ref{eq:partition}). 
We say that a strategy profile $s$ is \emph{standard} if
$s$ satisfies the following conditions:
\begin{description}
\item[(C.1)] There exists a set $S'\subseteq [2n]$ such that $|S'| = n$ and,
for every $j \in S'$, a player chooses $b_{j, 1}$, a player does $b_{j, 2}$,
and no player choose $b_j$ and $b_{j, 3}$. Moreover, for every
$j\in [2n]\backslash S'$, no player chooses a vertex in $B_j$. 
\item[(C.2)] A player chooses $b'_1$, and a player does $b'_5$. 
\item[(C.3)] A player chooses $a_2$, and a player does $a_3$.
\end{description}
\noindent
We then have the following lemma whose proof will be given in the next section.
\begin{proposition}\label{lem:arbitrary_standard}
If $(2n+4, G_S, w_S)$ has a Nash equilibrium $\vec{s}^*$,
then $\vec{s}^*$ is standard. 
\end{proposition}

Let $\vec{s}^*$ be an arbitrary Nash equilibrium of $(2n+4, G_S, w_S)$;
Proposition~\ref{lem:arbitrary_standard} implies that
$\vec{s}^*$ is standard.
Let $S'=\{ j_1, j_2, \dots , j_n\} \subseteq [2n]$.
We can assume without loss of generality that,
for each $i\in [n]$,
the players $p_{2i-1}$ and $p_{2i}$ choose $b_{j_i, 1}$ and $b_{j_i, 2}$,
respectively; $p_{2n+1}$ chooses $b_1'$,
$p_{2n+2}$ does $b_5'$, $p_{2n+3}$ does $a_2$, and
$p_{2n+4}$ does $a_3$.

Since $p_{2n+1}$ dominates only $b'_1$ and $b'_2$,
\begin{eqnarray}\label{eq:Arbitrary_UB1}
U_{2n+1}(\vec{s}^*) = \ps + 1.
\end{eqnarray}
Besides, if $p_{2n+1}$ changes the strategy to $b'_4$,
$p_{2n+1}$ dominates $b'_1, b'_2, b'_3, b'_4$, and rules $B_j$
for every $j \in [2n]\backslash S'$:
\begin{eqnarray}\label{eq:Arbitrary_UB2}
U_{2n+1}(\vec{s}^*_{-(2n+1)}, b'_4) &=& (\ps + 1) + (-1) +
\ps + \sum_{j \in [2n]\backslash S'} s_j \nonumber \\
&=&  \sum_{j \in [2n]\backslash S'} s_j.
\end{eqnarray}
Since $\vec{s}^*$ is a Nash equilibrium, Eqs.~(\ref{eq:Arbitrary_UB1}) and
(\ref{eq:Arbitrary_UB2}) imply that
\begin{eqnarray}\label{eq:Arbitrary_UB}
\sum_{j \in [2n]\backslash S'} s_j \le \ps + 1.
\end{eqnarray}

On the other hand,
$p_{2n+2}$ dominates $b'_4, b'_5$, and rules 
$B_j$
for every $j \in [2n]\backslash S'$:
\begin{eqnarray}\label{eq:Arbitrary_LB1}
U_{2n+2}(\vec{s}^*) &=& (\ps + 1) + (-\ps) + \sum_{j \in [2n]\backslash S'} s_j \nonumber \\
&=&  1 + \sum_{j \in [2n]\backslash S'} s_j.
\end{eqnarray}
Besides,
if $p_{2n+2}$ changes the strategy to $a_1$,
$p_{2n+2}$ dominates only $a_1$:
\begin{eqnarray}\label{eq:Arbitrary_LB2}
U_{2n+2}(\vec{s}^*_{-(2n+2)}, b'_4) = \ps.
\end{eqnarray}
Therefore, Eqs.~(\ref{eq:Arbitrary_LB1}) and
(\ref{eq:Arbitrary_LB2}) imply that
\begin{eqnarray}\label{eq:Arbitrary_LB}
\ps \le 1 + \sum_{j \in [2n]\backslash S'} s_j.
\end{eqnarray}

By Eqs.~(\ref{eq:Arbitrary_UB}) and (\ref{eq:Arbitrary_LB}),
\[
\ps - 1\le \sum_{j \in [2n]\backslash S'} s_j \le \ps + 1.
\]
Since $s_j$ is even for every $j \in [2n]$, we have
\[
\sum_{j \in [2n]\backslash S'} s_j = \ps,
\]
and hence $S'$ is the desired partition.
\end{proof}

\subsection{Proof of Proposition~\ref{lem:arbitrary_standard}}

Consider an arbitrary Nash equilibrium $\vec{s}^*$ of $(2n+4, G_S, w_S)$.
For every $j\in [2n]$, we define
$\beta_j$ as the number of players
who choose vertices in $B_j$ on $\vec{s}^*$.

Firstly, we prove the following claim.
\begin{claim}\label{claim:arbitrary_C.3}
$\vec{s}^*$ satisfies (C.3).
\end{claim}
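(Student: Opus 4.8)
The plan is to prove Claim~\ref{claim:arbitrary_C.3} by the same contradiction strategy used throughout the paper: assume $\vec{s}^*$ violates (C.3) and exhibit a beneficial deviation for some player, contradicting that $\vec{s}^*$ is a Nash equilibrium. The key structural fact to exploit is that $A$ is an isolated path $a_1-a_2-a_3-a_4$ with weights $\ps,1,1,\ps$, so a single player sitting on $a_2$ (symmetrically $a_3$) dominates $\{a_2,a_1\}$ and collects utility $\ps+1$; if both $a_2$ and $a_3$ are chosen, each of those two players gets exactly $\ps+1$, and if at most one vertex of $A$ is occupied then the vertex $a_2$ (or $a_3$) is still available as a move yielding $\ge \ps+1$. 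I would therefore first record the ``target value'' $\ps+1$ that any player can secure by moving onto $a_2$ or $a_3$ when at most one of them is currently occupied.

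First I would split into cases according to how many players choose vertices of $V_A$ under $\vec{s}^*$, exactly as in Claim~\ref{claim:nonnegative_AtMostTwo}. If three or more players choose vertices in $V_A$, then by pigeonhole one of them, call it $p$, dominates at most one vertex of the four-vertex path, so $U_p(\vec{s}^*)\le\ps$; on the other hand at most $2n+1$ players occupy $V_B$, while $B$ contains $4n+2$ vertices of weight $\ps$ (the $b_{j,1}$'s carry $-\ps+s_j$, not $\ps$, so one must be a bit careful and instead count the $b_{j,2},b_{j,3}$ vertices, $4n$ of them, plus $b'_1,b'_5$ of weight $\ps+1$), giving $p$ an empty high-weight target and hence a deviation to utility $\ge\ps>U_p(\vec{s}^*)$ unless equality is already impossible — a contradiction. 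If exactly two players choose vertices in $V_A$ but not the pair $\{a_2,a_3\}$, then up to symmetry they occupy one of $\{a_1,a_2\},\{a_1,a_3\},\{a_1,a_4\}$; in each of these the player on $a_1$ (or $a_4$) dominates at most $\{a_1\}$ or splits the path, obtaining at most $\ps$ or $\ps$-ish, while moving to the still-vacant $a_2$ (or $a_3$) yields $\ps+1$, a contradiction. Finally, if at most one player chooses a vertex of $V_A$, then $a_2$ or $a_3$ is vacant and gives a deviating utility $\ge\ps+1$, so it suffices to find a current player with utility $<\ps+1$; here I would use the global weight bound on $B$: the total positive weight available in $V_B$ is bounded (roughly $\sum_j s_j + (\text{weights of }b_{j,2},b_{j,3},b'_1,b'_5) = 2\ps + 4n\ps + O(\ps)$), while $2n+3$ or $2n+4$ players are packed into $V_B$, so some player necessarily falls below $\ps+1$.

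The step I expect to be the main obstacle is the last case, the careful accounting of total positive weight in $V_B$ versus the number of players forced into it, because the $b_{j,1}$ vertices have weight $-\ps+s_j$ (which can be small or even the dominant positive contributor depending on $s_j$), the $b_j$ and $b'_4$ vertices have negative weight, and whether a vertex actually contributes depends on neutrality — so one must argue that the sum of utilities of the $\ge 2n+3$ players in $V_B$ cannot all reach $\ps+1$ simultaneously. I would handle this by showing the total realized utility over $V_B$ is at most $\sum_{j}\max(0,-\ps+s_j) + 4n\ps + 2(\ps+1) + (\text{a small slack})$, which (using $s_j\le 2\ps$) is strictly less than $(2n+3)(\ps+1)$ for $\ps$ chosen large relative to $n$, exactly as in the analogous argument for Theorem~\ref{thm:Positive}; the remaining sub-cases (two occupants of $V_A$ not equal to $\{a_2,a_3\}$) are then a short finite check, and they complete the proof that $\vec{s}^*$ satisfies (C.3).
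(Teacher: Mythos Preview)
Your three-case split matches the paper, and case~(ii) is fine. But cases~(i) and~(iii) both have genuine gaps stemming from the negative weights in $B$, and in case~(iii) the approach cannot be repaired along the lines you suggest.

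\textbf{Case (iii).} Your global weight-counting argument does not close. The total positive weight in $V_B$ is
\[
\underbrace{2n\cdot\ps}_{b_{j,1}}+\underbrace{2n\cdot\ps}_{b_{j,2}}+\underbrace{2(\ps+1)}_{b'_1,b'_5}
=4n\ps+2\ps+2,
\]
and you would need this to be strictly less than $(2n+3)(\ps+1)=(2n+3)\ps+2n+3$, i.e.\ $(2n-1)\ps<2n+1$; this fails for every nontrivial instance. Also, $\ps$ is not a parameter you may choose ``large relative to $n$'': it is fixed as $\ps=\tfrac12\sum_j s_j$ by the {\sc Partition} instance. The paper does not count weights at all here; it argues structurally via $\beta_j=|\{\text{players in }B_j\}|$. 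If some $\beta_j\ge 2$, two players sit in the same star $K_{1,3}$ and one of them is forced down to $\le\ps$. If two distinct $\beta_j=\beta_{j'}=1$, at most one of those two players can reach $b'_4$, and the other is confined to its star with utility $\le\ps$. Otherwise at most one $B_j$ is occupied, so at least $2n+2$ players are crammed onto the five-vertex path $b'_1,\dots,b'_5$ and one of them has utility $\le 0$. In each subcase you get a player with utility $\le\ps$ who can deviate to $a_2$ or $a_3$ for $\ps+1$.

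\textbf{Case (i).} Your bound $U_p(\vec{s}^*)\le\ps$ is too weak, and your deviation is unsafe. Moving to an empty weight-$\ps$ leaf $b_{j,1}$ does \emph{not} guarantee utility $\ge\ps$: if $B_j$ is otherwise empty you will also dominate $b_j$ (weight $-\ps+s_j$) and $b_{j,3}$ (weight $-\ps$), and your total can drop to $s_j<\ps$. The paper instead first observes that with $\ge 3$ players on the four-vertex path $A$ some player has utility $\le 1$ (a short finite check using the weights $\ps,1,1,\ps$). For the deviation it again splits on the $\beta_j$'s: if some $\beta_j=0$, deviate to the centre $b_j$; ruling $B_j$ contributes exactly $s_j\ge 2$, and any further spread through $b'_4$ picks up $b'_5$ or another full $B_{j'}$, each a nonnegative net, so the utility is $\ge s_j\ge 2>1$. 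If all $\beta_j\ge 1$, then (since at most $2n+1$ players are in $V_B$) some $\beta_{j'}=1$; deviating to whichever of $b_{j',1},b_{j',2}$ is free makes $b_{j'}$ neutral or already dominated, so the deviator gets exactly $\ps>1$.

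In short, the negative weights in this construction defeat pure averaging/counting; you need the $\beta_j$-based case analysis the paper uses in both (i) and (iii).
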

\begin{proof}
Suppose $\vec{s}^*$ does not satisfy (C.3).
We derive a contradiction for each of (i), (ii) and (iii) below.

\smallskip
\noindent
(i) \emph{More than three players choose vertices in $A$.}

In this case, there exists a player $p_i$ such that $U_i(\vec{s}^*) \le 1$,
and at most $2n+1$ players choose vertices in $V_B$.

If there exists $j \in [2n]$ such that $\beta _j = 0$, $p_i$ can rule
$B_j$ by changing the strategy to $b_j$. Moreover, if $p_i$ also dominates
$b'_4$, $p_i$ does $b'_5$; and if $p_i$ also dominates
$b_{j'}$ for some $j' \in [2n]$, $p_i$ rules $B_{j'}$. Therefore,
since $s_j$ is even,
\[
U_i(\vec{s}^*_{-i}, b_j) \ge s_j \ge 2.
\]
Otherwise, that is, if $\beta_j \ge 1$ for every $j \in [2n]$, then
there exists $j'\in [2n]$ such that $\beta_{j'}=1$.
Thus, $p_i$ can dominate only either $b_{j', 1}$ or $b_{j', 2}$, and
hence, for some $v' \in \{ b_{j',1}, b_{j',2}\}$,
\[
U_i(\vec{s}^*_{-i}, v') \ge \ps \ge 2.
\]

\smallskip
\noindent
(ii) \emph{exactly two players choose vertices in $A$.}

Since $\vec{s}^*$ does not satisfy (C.3),
we can assume without loss of generality that the two players
choose either ``$a_1$ and $a_2$,'' ``$a_1$ and $a_3$'' or `` $a_1$ and $a_4$.''
For both cases, a player can increase the utility
by choosing $a_2$ or $a_3$.

\smallskip
\noindent
(iii) \emph{at most one player chooses a vertex in $A$.}

In this case, there exists a vertex in $V_A$ such that
any player obtains utility $\ps +1$ by changing the strategy to
the vertex. Thus,
it suffices to show that some player has utility at most $\ps$.
Note that at least $2n+3$ players choose vertices in $V_B$
in this case.

If there exists $j \in [2n]$ such that $\beta_j \ge 2$, then
one of the players has utility at most $\ps$, as desired.
If there exists a pair of $j, j' \in [2n]$ such that $\beta _j = \beta_{j'}=1$,
one of the players (say, the one who choose a vertex in $B_j$) cannot dominate $b'_4$,
and hence its utility is less than or equals to $\ps$, as desired.
Otherwise, that is, if either $\beta_j=0$ for every $j \in [2n]$
or there exists an unique $j\in [2n]$ such that $\beta _j = 1$,
then the other players choose $b'_1, b'_2, b'_3, b'_4, b'_5$,
and hence some player has utility at most zero.
\end{proof}

We below consider the other $2n+2$ players who choose vertices in $V_B$.

\begin{claim}\label{clm:arrbitrary_B}
For every $j\in [2n]$, $\beta _j \le 2$; and there exists at most one $j \in [2n]$
such that $\beta_{j}=1$. 
\end{claim}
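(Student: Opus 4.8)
The plan is to show that every Nash equilibrium $\vec{s}^*$ is ``well spread'' over the gadgets $B_j$, paralleling Claim~\ref{clm:nonnegative_Pj<=1} from the nonnegative-weights case. As in that proof, the key leverage is a global counting argument combined with the escape move to $a_1$: by Claim~\ref{claim:arbitrary_C.3} we know $\vec{s}^*$ satisfies (C.3), so two players sit on $a_2, a_3$ and the remaining $2n+2$ players occupy $V_B$; moreover any player in $V_B$ can obtain utility exactly $\ps$ by relocating to $a_1$, so in a Nash equilibrium every such player has utility at least $\ps$. The total positive weight available in $B$ is tightly budgeted: the vertices $b_{j,1},b_{j,2},b_{j,3}$ carry $(-\ps+s_j)+\ps+\ps = 2\ps + s_j$, the path $b'_1,\dots,b'_5$ carries at most $2\ps+2$ of positive weight, and $\sum_j s_j = 2\ps$, so the whole of $V_B$ supplies roughly $(2n)(2\ps) + 2\ps + (2\ps+2)$, which must be split among $2n+2$ players each getting at least $\ps$. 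This near-equality forces that almost no positive-weight vertex is wasted (left undominated or made neutral).

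First I would dispose of $\beta_j \ge 4$: with four or more players inside a single $B_j$ (which has only four vertices $b_j,b_{j,1},b_{j,2},b_{j,3}$), either two players collide on a vertex (neutralizing it, utility $0 < \ps$ for one of them) or one player is stranded on a vertex whose only positive-weight neighbours are already taken, again yielding utility strictly below $\ps$; a short exhaustive check over the few placements suffices. Next, $\beta_j = 3$: since $B_j$ is a star $K_{1,3}$ centred at $b_j$ with leaves $b_{j,1},b_{j,2},b_{j,3}$ of weights $\ps,\ps,-\ps$, placing three players there either neutralizes $b_j$ or $b_{j,3}$, or leaves some leaf undominated, or puts a player on $b_{j,3}$ (weight $-\ps$) with no way to reach enough compensating weight inside $B_j$ — in all cases someone falls below utility $\ps$, contradicting equilibrium. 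This establishes $\beta_j \le 2$ for all $j$.

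For the second half — at most one index $j$ has $\beta_j = 1$ — I would argue by the budget. If $\beta_j = 1$ and the lone player in $B_j$ has utility at least $\ps$, it must dominate more positive weight than $b_{j,1}$ alone provides unless $s_j$ is large; critically, to collect extra weight it would have to reach $b'_4$ (weight $-\ps$) and then the path/the $B_{j'}$'s with $\beta_{j'}=0$, but $b'_4$ is negative and the player from $B_j$ enters the $b'$-path only via $b_j$–$b'_4$. Two distinct indices $j, j'$ with $\beta_j = \beta_{j'} = 1$ cannot both route their players profitably through $b'_4$ simultaneously — one of them is blocked from $b'_4$ by the other and hence cannot clear utility $\ps$ (its reachable positive weight is only $s_j < \ps$, since we may assume $s_j \le \ps$ after the WLOG scaling, or handle the boundary case separately). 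Combining with the move-to-$a_1$ bound gives the contradiction.

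I expect the main obstacle to be the second claim, specifically the careful bookkeeping of which positive-weight vertices a player stationed at $b_{j,1}$ (versus $b_{j,2}$) can actually dominate, taking into account who controls $b'_1, b'_5$ and $b'_4$, and ruling out the various ways two singleton gadgets could coexist; the case analysis on the placements within a single $B_j$ for $\beta_j \in \{3,4\}$ is routine but must be done exhaustively to be airtight.
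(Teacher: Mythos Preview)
Your overall strategy---use (C.3) to get the escape move to $a_1$ worth $\ps$, then do case analysis on $\beta_j$---matches the paper, but two of your cases have genuine gaps.

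For $\beta_j = 3$: your enumeration (``neutralizes $b_j$ or $b_{j,3}$, or leaves some leaf undominated, or puts a player on $b_{j,3}$'') misses the configuration where the three players sit on $b_j, b_{j,1}, b_{j,2}$. Nothing is neutralized there, every leaf is dominated, and no one sits on $b_{j,3}$; moreover the players on $b_{j,1}$ and $b_{j,2}$ have utility exactly $\ps$, not below it, so your conclusion ``someone falls below $\ps$'' is false in this sub-case. The paper closes this case differently: if no one occupies $b'_4$ or $b'_5$, the player on $b_{j,1}$ can deviate to $b'_5$ and earn $\ps+1$; if instead $b'_4$ or $b'_5$ is occupied, the player on $b_j$ is cut off after $b_j, b_{j,3}$ and gets only $(-\ps+s_j)+(-\ps) < \ps$. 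You need this extra split.

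For ``at most one $j$ with $\beta_j = 1$'': your argument (``one of the two singletons is blocked from $b'_4$ by the other'') handles only the situation where $b'_4$ is not occupied by a third player. If some other player already sits on $b'_4$, then \emph{neither} singleton reaches $b'_4$, so your bottleneck reasoning says nothing. The paper handles this separately: either $b'_1$ or $b'_5$ is free and a singleton can deviate there for $\ps+1 > \ps$, or both are taken and then the player on $b'_4$ itself has utility at most $-\ps + \sum_{j \ne j',j''} s_j < \ps$. You need to add this case.

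Minor points: your weight bookkeeping is off---$b_j$ (the center) carries $-\ps+s_j$, the leaves carry $\ps,\ps,-\ps$, and $(-\ps+s_j)+\ps+\ps = \ps+s_j$, not $2\ps+s_j$; the whole $B_j$ sums to $s_j$. The global budget heuristic you sketch is suggestive but is not what actually pins down either half of the claim; the paper does not use it here.
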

\begin{proof}
We first prove that $\beta _j \le 2$ for every $j\in [2n]$.
Note that, by Claim~\ref{claim:arbitrary_C.3},
any player can obtain the utility $\ps$ by changing the strategy to $a_1$. 

Suppose there exists $j$, $j \in [2n]$, such that $\beta_j \ge 4$.
Then, the utility of one of the players is at most $0 < \ps$, as desired.

Suppose there exists $j \in [2n]$ such that $\beta_j = 3$.
If the three players do not choose $b_j, b_{j, 1}$ and $b_{j, 2}$, then 
one of the players has the utility of  at most $0 < \ps$; thus,
let $p_i, p_{i'}$ and  $p_{i''}$ be the players choosing $b_j, b_{j, 1}$ and $b_{j, 2}$, respectively.
If some player other than $p_i, p_{i'}$ and  $p_{i''}$ chooses $b'_4$ or $b'_5$,
the player $p_i$ dominates only $b_j$ and $b_{j,3}$,
and hence $U_i(\vec{s}^*) = -2\ps + s_j < \ps$.
If no player chooses $b'_4$ or $b'_5$, then we have $U_{i'}(\vec{s}^*) = \ps$,
while $U_{i'}(\vec{s}^*_{-i'}, b'_5) = \ps +1$.

We then prove that there exists at most one $j \in [2n]$
such that $\beta_{j}=1$. Suppose there exist $j', j'' \in [2n]$ such that
$\beta_{j'} = \beta_{j''}=1$. Let $p_{i'}$ and $p_{i''}$ be the players
choosing vertices in $B_{j'}$ and $B_{j''}$, respectively.
We consider the following two cases.

\smallskip
\noindent
(i) There exists a player $p_{i}$ choosing $b'_4$.

In this case, we have $U_{i'}(\vec{s}^*) \le \ps$,
while $p_{i'}$ can increase the utility as follows:
If no player chooses $b'_1$, $U_{i'}(\vec{s}^*_{-i'}, b'_1) = \ps +1$;
similarly, if no player chooses $b'_5$, $U_{i'}(\vec{s}^*_{i'}, b'_5) = \ps +1$.
If there are a player choosing $b'_1$ and one choosing
$b'_5$,
\begin{eqnarray*}
U_{i}(\vec{s}^*) &\le& -\ps + \sum_{j \in [2n]\backslash \{ j', j''\}} s_j\\
&\le& -\ps + 2\ps - s_{j'}-s_{j''}\\
&<& \ps.
\end{eqnarray*}

\smallskip
\noindent
(ii) No player chooses $b'_4$.

In this case, one of $p_{i'}$ and $p_{i''}$ does not dominate $b'_4$;
let $p_{i'}$ be such a player. Then, $p_{i'}$ rules $B_{j'}$, but does not
dominate other verticies. Thus, $U_{i'}(\vec{s}^*) = s_{j'} <\ps$.
\end{proof}

Claim~\ref{clm:arrbitrary_B} implies that $\vec{s}^*$ satisfies (C.1), as follows.
\begin{claim}\label{clm:arrbitrary_C.1}
$\vec{s}^*$ satisfies (C.1).
\end{claim}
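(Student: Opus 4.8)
The plan is to build on the two facts already available for a Nash equilibrium $\vec{s}^*$ of $(2n+4, G_S, w_S)$: by Claim~\ref{claim:arbitrary_C.3} it satisfies (C.3), so two players sit on $a_2,a_3$ and the remaining $2n+2$ players lie in $V_B$; and by Claim~\ref{clm:arrbitrary_B}, $\beta_j\le 2$ for every $j$ and $\beta_j=1$ for at most one $j$. The lever used throughout is that, $a_2$ and $a_3$ being occupied, any player could move to $a_1$ and dominate only $a_1$, so every player of $V_B$ has utility at least $\ps$ in $\vec{s}^*$. I would also record $s_j<2\ps$ for every $j$, since otherwise $\sum_k s_k\ge s_j\ge 2\ps=\sum_k s_k$ forces the remaining $s_k$ to vanish.

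\textbf{Step 1 (locate the $2n+2$ players).} Let $a,b,c$ be the numbers of $j$ with $\beta_j=2,1,0$; then $a+b+c=2n$, $b\le 1$, and exactly $\gamma=2n+2-2a-b$ players occupy the path $b_1'b_2'b_3'b_4'b_5'$. Players giving themselves positive utility cannot share a vertex, and on that path $w_S$ is positive only at $b_1',b_5'$ (each of weight $\ps+1$), so $\gamma\le 5$; combined with Claim~\ref{clm:arrbitrary_B} this leaves only $(a,b,\gamma)$ among $(n+1,0,0)$, $(n,1,1)$, $(n,0,2)$, $(n-1,1,3)$, $(n-1,0,4)$, $(n-2,1,5)$. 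The deviation tools for discarding the unwanted ones are: (a) a player moving to an \emph{unoccupied} $b_1'$ or $b_5'$ gains at least $\ps+1$, so if either is unoccupied then \emph{every} $V_B$-player has utility at least $\ps+1$; (b) inside a block $B_j$, any occupied pair or singleton touching $b_{j,3}$ or $b_j$ leaves one of its players with utility at most $s_j-\ps<\ps$ unless that player escapes through $b_j$ and $b_4'$, whereas a pair on $b_{j,1},b_{j,2}$ leaves both with exactly $\ps$; and (c) a player reaching $b_4'$ can continue into every block with $\beta_j=0$ and ``rule'' it, adding $\sum_{j:\beta_j=0}s_j$ to its utility. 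Running these against the six possibilities — in each of the five unwanted ones there is either a leaf-player with utility exactly $\ps$ who can jump to a free $b_1'$ or $b_5'$, or a free path vertex from which ruling the empty blocks strictly raises some player's utility — leaves $(a,b,\gamma)=(n,0,2)$ with the two path players on $b_1'$ and $b_5'$.

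\textbf{Step 2 (resolve the $n$ blocks with $\beta_j=2$).} Fix such a $j$ and run through the possible pairs of occupied vertices of $B_j=\{b_j,b_{j,1},b_{j,2},b_{j,3}\}$. Every pair other than $\{b_{j,1},b_{j,2}\}$ meets $\{b_j,b_{j,3}\}$, and then one of the two players is confined within $B_j$ to weight at most $s_j-\ps<\ps$ (using $w_S(b_{j,3})=-\ps$, $w_S(b_j)=s_j-\ps$, and that dominating $b_j$ forces dominating $b_{j,3}$); this player also collects no positive weight outside $B_j$, because leaving $B_j$ forces it through $b_j$, which is neutral as soon as its partner occupies a leaf, and then through $b_4'$, which is held by one of the players on $b_1',b_5'$. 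Hence the pair must be $\{b_{j,1},b_{j,2}\}$. For the other $n$ indices $\beta_j=0$ means no player lies in $B_j$. Taking $S'=\{\,j:\beta_j=2\,\}$ therefore gives $|S'|=n$ and all the requirements of (C.1).

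The step I expect to be the genuine obstacle is handling the ``escape'' through $b_4'$, the single vertex adjacent to every centre $b_j$: a player reaching $b_4'$ may sweep onward and rule all blocks with $\beta_j=0$, so a player that looks underpaid inside some $B_j$ need not be, and conversely a seemingly safe configuration may hide a profitable sweeping deviation. Deciding exactly when such a sweep is available — it is blocked precisely when the intermediate $b_j$'s are occupied or have become neutral, which in turn hinges on the sign bookkeeping of the weights $\pm\ps$ and on the evenness of the $s_j$ — is where the real case work, and the difference from the nonnegative-weight construction, lies.
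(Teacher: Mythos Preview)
Your plan is essentially the paper's: first pin down how many blocks carry two players, then force each such pair onto $\{b_{j,1},b_{j,2}\}$. Organising the count via $(a,b,\gamma)$ and listing the six possibilities is arguably cleaner than what the paper does (the paper argues over $|\mathcal B|\in\{0\}\cup[1,n-1]\cup\{n+1\}$ and then tacitly assumes the two leftover players are both on the path, never explicitly disposing of your $(n,1,1)$ case).

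There is, however, a genuine slip in tool~(a). The vertices $b_1'$ and $b_5'$ are \emph{not} symmetric: only $b_5'$ is adjacent to the hub $b_4'$. Moving to an unoccupied $b_5'$ does always give at least $\alpha+1$ (either $b_4'$ goes neutral and the deviator sits on the leaf $b_5'$ alone, or the deviator takes $b_4'$ and then sweeps the whole path and all empty blocks, for $\alpha+1+\sum s_j$). Moving to an unoccupied $b_1'$ does \emph{not}: if at time~2 the hub $b_4'$ becomes neutral --- which happens as soon as two of its neighbours among $b_5',b_3',b_1,\dots,b_{2n}$ are occupied, e.g.\ in your $(n,1,1)$ case with the path player on $b_5'$ and the singleton sitting on a centre $b_{j_0}$ --- then the $b_1'$-deviator swallows $b_1',b_2',b_3'$ for total $(\alpha+1)+0+(-1)=\alpha$, and your inequality ``$\ge\alpha+1$'' fails. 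The paper avoids this by deviating to $b_4'$ itself (guaranteeing the full path sum $\alpha+1$ regardless of what the centres do); you should do the same, or restrict tool~(a) to $b_5'$ and handle ``$b_5'$ occupied'' separately via a $b_4'$-deviation of the $b_5'$-player, which picks up the empty blocks.

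A smaller point in Step~2: ``$b_j$ is neutral as soon as its partner occupies a leaf'' is wrong for the pair $\{b_j,b_{j,\ell}\}$, since there $b_j$ is \emph{occupied}, hence dominated at time~1. In that case the confinement is not through $b_j$ but through $b_4'$: the centre-player and the $b_5'$-player both reach $b_4'$ at time~2, $b_4'$ goes neutral, and the centre-player is stuck with $w(b_j)+w(b_{j,\ell'})+w(b_{j,3})=s_j-\alpha<\alpha$. Your conclusion is right, but the mechanism is the one you flag in your last paragraph, not the one you wrote in Step~2.
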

\begin{proof}
Let ${\cal B} = \{ j \in [2n] \mid \beta_j = 2\}$. We first verify that
$|{\cal B}| = n$, and then show that the two players in $B_j$ for
each $j \in {\cal B}$ choose $b_{j, 1}$ and $b_{j, 2}$.
Recall that, by Claim~\ref{claim:arbitrary_C.3},
any player can obtain the utility $\ps$ by changing the strategy to $a_1$.

If $|{\cal B}| = 0$, Claim~\ref{clm:arrbitrary_B} implies that at least 
$2n+1$ players choose $b'_1, b'_2, b'_3. b'_4, b'_5$.
Thus, some player has the utility at most $0 < \ps$.

If $1 \le |{\cal B}| \le n-1$, Claim~\ref{clm:arrbitrary_B} implies that at least 
three players choose $b'_1, b'_2, b'_3. b'_4, b'_5$, and hence there exists
a player $p_i$ among the three players who dominates neither $b'_1$ nor $b'_5$.
Therefore, since we have $1 \le |{\cal B}|$,
\[
U_i(\vec{s}^*) \le -\ps + \sum_{j \in [2n]\backslash {\cal B}} s_j < \ps.
\]

If $|{\cal B}| = n+1$, then no player chooses $b'_1, b'_2, b'_3. b'_4, b'_5$.
Moreover, since only one player dominates $b'_4$, there exists a player $p_i$
such that $U_i(\vec{s}^*) \le \ps$. However,
$p_i$ can dominates $b'_1, b'_2, b'_3. b'_4, b'_5$ by changing the strategy to
$b'_4$, and thus
\[
(\ps+1)+(-1)+(-\ps)+(\ps+1) = \ps + 1 \le U_i(\vec{s}^*_{-i}, b'_4).
\]

By the argument above, we have $|{\cal B}| = n$, and hence
we have $n$ pairs of players and the other two players.
We below show that
a pair of players
for $B_j$, for an arbitrary index $j \in {\cal B}$,
choose $b_{j, 1}$ and $b_{j, 2}$.
Suppose for the sake of contradiction that
one of the players chooses neither $b_{j, 1}$ nor $b_{j, 2}$.

If a player chooses $b_{j, 3}$, then,
since the other player
chooses $b_j, b_{j,1}$ or $b_{j, 2}$, the
utility of the player is $-\ps < \ps$.

If no player chooses $b_{j, 3}$, there exists a player $p_{i}$ choosing
$b_j$, and $p_{i}$ dominates $b_j$ and $b_{j, 3}$ together with
only one of $b_{j, 1}$ and $b_{j,2}$ in $B_j$.
In addition to the three vertices,
$p_{i}$ does or does not dominate $b'_4$.

In the former case,
there must be no player choosing $b'_3, b'_4$ and $b'_5$.
Consider then the other two players who may choose $b'_1$ and $b'_2$.
If the two players choose $b'_1$ and $b'_2$, one of them has utility zero
which is less than $\alpha$.
If only a player $p_{i'}$ chooses one of $b'_1$ and $b'_2$,
\[
U_{i'}(\vec{s}^*) \le \ps + 1
\]
while
\begin{eqnarray*}
U_{i'}(\vec{s}^*_{-i'}, b'_4) &\ge& (\ps + 1) + (-1) + (-\ps) + 
\sum_{j'\in [2n]\backslash {\cal B}} s_{j'} +\ps + 1\\
&\ge&  \ps + 3.
\end{eqnarray*}

In the latter case, we have
\[
U_{i'}(\vec{s}^*) \le (-\ps + s_j) + (-\ps) + \ps = -\ps +s_j < \ps.
\]

We thus complete the proof.
\end{proof}

By Claims~\ref{claim:arbitrary_C.3} and~\ref{clm:arrbitrary_C.1},
it suffices to consider the other two player choosing vertices in $V_B$.
We complete the proof of Lemma~\ref{lem:arbitrary_standard} with
the following claim.
\begin{claim}
$\vec{s}^*$ satisfies (C.2).
\end{claim}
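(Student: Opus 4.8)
The plan is to leverage the structure already pinned down by Claims~\ref{claim:arbitrary_C.3} and~\ref{clm:arrbitrary_C.1}. Since $\vec{s}^*$ satisfies (C.3), exactly two players choose vertices in $V_A$, namely $a_2$ and $a_3$; and since $\vec{s}^*$ satisfies (C.1), exactly $2n$ players choose vertices in $\bigcup_{j\in[2n]}B_j$, two of them in each $B_j$ with $j\in S'$ (at $b_{j,1}$ and $b_{j,2}$) and none in the remaining gadgets. Hence the two players left over choose vertices among $b'_1,b'_2,b'_3,b'_4,b'_5$, and it suffices to show that one of them sits on $b'_1$ and the other on $b'_5$.

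First I would compute the utility of each of the $2n$ players that lie inside a gadget $B_j$ with $j\in S'$. The center $b_j$ has two of its neighbours, $b_{j,1}$ and $b_{j,2}$, already dominated at time one by two distinct players, so $b_j$ becomes neutral at time two (this holds no matter whether $b'_4$ is occupied, since two distinct players are enough). Because $b_{j,3}$ has $b_j$ as its only neighbour and $b_j$ is never dominated, $b_{j,3}$ is never dominated either. Consequently a player who chose $b_{j,1}$ (or $b_{j,2}$) dominates only that single vertex, so its utility is exactly $w_S(b_{j,1})=\ps$.

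The key step is then to show that $b'_1$ must be chosen by some player of $\vec{s}^*$. Suppose not. Since $|S'|=n\ge 1$, some player, say $p_\ell$, chose a vertex $b_{j_0,1}$ with $j_0\in S'$, and by the previous paragraph $U_\ell(\vec{s}^*)=\ps$. If $p_\ell$ deviates to the vacant vertex $b'_1$, then $p_\ell$ dominates $b'_1$ at time one, so $U_\ell(\vec{s}^*_{-\ell},b'_1)\ge w_S(b'_1)=\ps+1>\ps=U_\ell(\vec{s}^*)$, contradicting that $\vec{s}^*$ is a Nash equilibrium. The symmetric argument with $b'_5$ (using $w_S(b'_5)=\ps+1$) shows that $b'_5$ is chosen by some player as well. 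Neither the $a_2$- nor the $a_3$-player nor any of the $2n$ gadget players sits on $b'_1$ or $b'_5$, so both of these vertices are chosen by the two remaining players; since only two players remain and a single player cannot choose both vertices, one of them chose $b'_1$ and the other chose $b'_5$. Thus $\vec{s}^*$ satisfies (C.2), which completes the proof of Proposition~\ref{lem:arbitrary_standard}.

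The only point I expect to need care is the utility computation for the gadget players: one has to check that, once $b_j$ turns neutral, the leaf $b_{j,3}$ is genuinely disconnected from the rest of $B$ (it is, because $b_j$ is the unique vertex joining $B_j$ to $b'_4$), so that these players obtain exactly $\ps$ and no more; this is exactly what makes the deviation to $b'_1$ (or $b'_5$) strictly improving. Everything else is a short counting argument, and the degenerate case $n=0$ does not occur since the {\sc Partition} instances under consideration have $n\ge 1$.
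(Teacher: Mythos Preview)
Your overall structure is right, but there is a genuine gap in the step ``$p_\ell$ dominates $b'_1$ at time one, so $U_\ell(\vec{s}^*_{-\ell},b'_1)\ge w_S(b'_1)=\ps+1$''. In a graph with negative weights, dominating a vertex of weight $\ps+1$ does \emph{not} give a lower bound of $\ps+1$ on the utility: if the diffusion from $b'_1$ continues along the spine, $p_\ell$ may also dominate $b'_3$ (weight $-1$) or even $b'_4$ (weight $-\ps$). You must argue that this does not happen. The paper handles this explicitly: since the two leftover players sit in $\{b'_2,b'_3,b'_4,b'_5\}$, one of them blocks the propagation so that the deviating gadget player ``does not dominate $b'_3$ due to $p_i$''. (One should also note that the two leftover players occupy distinct vertices --- otherwise both have utility $0$ and could move to $a_1$ --- so at least one of them lies in $\{b'_2,b'_3,b'_4\}$ and blocks the path.)

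The ``symmetric argument with $b'_5$'' is even more delicate and is \emph{not} symmetric to the $b'_1$ case. Once $b'_1$ is occupied by $p_i$, the remaining player $p_{i'}$ is at $b'_2,b'_3,b'_4$ or $b'_5$. If $p_{i'}$ is at $b'_2$, then a gadget player deviating to $b'_5$ is \emph{not} blocked from $b'_4$ (weight $-\ps$): it dominates $b'_5,b'_4$ and then the empty gadgets, for utility $1+\sum_{j\notin S'} s_j$, which you have no a~priori control over. The paper avoids this by using a different deviation for the second half: it lets $p_{i'}$ itself move to $b'_5$. Compared with its current position at $b'_2,b'_3$ or $b'_4$, this move drops the vertex $b'_3$ (weight $-1$) from $p_{i'}$'s territory while keeping everything to the right, hence strictly increases $p_{i'}$'s utility. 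That asymmetry between the two halves is the missing idea in your proposal.
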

\begin{proof}
Let $p_i$ and $p_{i'}$ be the two players other than $2n+2$ players
satisfying (C.1) and (C.3). Let $S' \subseteq [2n]$ be the set of
indicies satisfying (C.1). 

We first show that one of $p_i$ and $p_{i'}$ chooses $b'_1$.
Suppose no player chooses $b'_1$; then Claim~\ref{clm:arrbitrary_B}
implies that at least one of $p_i$ and $p_{i'}$ chooses
$b'_2, b'_3, b'_4$ and $b'_5$; assume that $p_i$ is such a player.
Since Claim~\ref{clm:arrbitrary_C.1} implies that any player choosing $B_j$,
$j \in S'$, has utility $\ps$; but, the player can 
obtain, by changing the strategy to $b'_1$, $\ps +1$
and does not dominate $b'_3$ due to $p_i$.

Let $p_i$ be the player choosing $b'_1$. We then show that $p_{i'}$
chooses $b'_5$. If $p_{i'}$ chooses $b'_2, b'_3$ or $b'_4$,
then $p_{i'}$ can increase the utility by changing the strategy to $b'_5$
by avoiding $b'_3$. Otherwise, that is, if $p_{i'}$ chooses a vertex
in $B_j$, $j \in [2n] \backslash S'$, then $p_i$ cannot dominate
$b'_4$ and $b'_5$, and hence
\[
U_i(\vec{s}^*_{-i}) \le \ps +1.
\]
However, by changing the strategy to $b'_4$, $p_i$ dominates
$b'_1, b'_2, b'_3, b'_4$ and $b'_5$ together with $B_{j'}$
for every $j'\in [2n]\backslash (S'\cup \{j\})$.
Therefore, since $|S'\cup \{ j\}| < 2n$,
\[
U_i(\vec{s}^*_{-i}, b'_4) \ge 
(\ps+1)+(-1)+(-\ps)+ \sum_{j'\in [2n]\backslash (S'\cup \{j\})}
s_{j'} +(\ps+1) > \ps + 1.
\]
\end{proof}

\section{Proof of Lemma~\ref{lem:decompose_players}}

Since the necessity is obvious, we only verify the sufficiency.
That is, we show that if $(k, G, w)$ has a Nash equilibrium, then
there exist $\kappa_1, \kappa_2, \dots , \kappa_m$ and $t$
such that $k = \sum_{j=1}^m \kappa_j$ and
$P_j$ admits $\kappa_j$ players with a common boundary $t$ for every $j\in [m]$.
Note that, if a strategy profile  $\vec{s} = (\vec{s}_1, \vec{s}_2,
\dots , \vec{s}_m)$ is a Nash equilibrium, then 
it trivially holds that 
$\suki_{P_j}(\vec{s}_j) \le \umin_{P_j}(\vec{s}_j)$
for any $j \in [m]$.

Consider an arbitrary Nash equilibrium $\vec{s}$, and let
$\vec{s}_1, \vec{s}_2, \dots , \vec{s}_m$ be strategy
profiles for $P_1, P_2, \dots , P_m$, and
$\kappa_1, \kappa_2, \dots , \kappa_m$ be the numbers of players choosing
vertices in $P_1, P_2, \dots , P_m$, respectively.
Suppose for sake of contradiction that no
common boundary $t$ exists for $P_1, P_2, \dots , P_m$ with $\kappa_1, \kappa_2, \dots , \kappa_m$.
Let $t$ be an arbitrary integer such that there exists $j$ satisfying
$\suki_{P_j}(\vec{s}_j) \le t \le \umin_{P_j}(\vec{s}_j)$.
Then there is at least an index $j'$ such that either
$\umin_{P_{j'}} (\vec{s}_{j'}) < t$ or $t < \suki_{P_{j'}}(\vec{s}_{j'})$.
Let $J'_{<t}$ be the set of indices satisfying
the former inequality $\umin_{P_{j'}} (\vec{s}_{j'}) < t$,
and $J'_{t<}$ be the set of indices satisfying the latter inequality
$\umin_{P_{j'}} (\vec{s}_{j'}) < t$.

Consider the case for  $J'_{<t} \neq \emptyset$ and $J'_{t<} \neq \emptyset$.
In this case, there exists a pair of indices $j'_1 \in J_{<t}$
and $j'_2 \in J'_{t<}$ such that
$\umin_{P_{j'_1}} (\vec{s}_{j'_1}) < \suki_{P_{j'_2}}(\vec{s}_{j'_2})$.
Thus, a player in $P_{j'_1}$ can increase the utility by changing
the strategy to a vertex in $P_{j'_2}$.

Consider the other case $J'_{<t} = [m]$ or $J'_{<t} = [m]$.
If $J'_{<t} = [m]$, there exists an index $j' \in [m]$ such that
$\umin_{P_{j'}} (\vec{s}_{j'}) < \suki_{P_{j}}(\vec{s}_{j})$;
otherwise we can choose
\[
t = \min _{j' \in [m]} \umin_{P_{j'}} (\vec{s}_{j'})
\]
as the common boundary.
Thus, a player in $P_{j'}$ can increase the utility by changing
the strategy to a vertex in $P_{j}$. Similarly,
if $J'_{t<} = [m]$, there exists an index $j' \in [m]$ such that
$\umin_{P_{j}} (\vec{s}_{j}) < \suki_{P_{j'}}(\vec{s}_{j'})$;
otherwise we can choose
\[
t = \max _{j' \in [m]} \suki_{P_{j'}} (\vec{s}_{j'})
\]
as the common boundary.
Thus, a player in $P_{j}$ can increase the utility by changing
the strategy to a vertex in $P_{j'}$.

\section{Proof of Lemma~\ref{lem:ForestsOfPaths_weighted}}

We give such an algorithm as a proof of Lemma~\ref{lem:ForestsOfPaths_weighted}.
Suppose that we are given a path $P$ of $n$ vertices, together with its weight function $w$ and a nonnegative integer $t$.

We first note that it is easy to determine whether $P$ admits zero player with the boundary $t$. 
In this case, there is only one strategy profile $\vec{s} = \emptyset$ for $(0, P, w)$, and $\umin_{P} (\vec{s}) = + \infty \ge t$.
Thus, $0 \in K$ if and only if $\suki_{P}(\vec{s})  = \sum_{v \in V(P)} w(v) \le t$. 
Clearly, this can be done in $O(n)$ time. 
In the following, we thus assume that at least one player exists in the game $(\kappa, P, w)$. 
\smallskip

\noindent
{\bf Idea and definitions.}

We label the vertices $v_1, v_2, \ldots, v_n$ in the path $P$ from left to right.
Our algorithm employs a dynamic programming approach based on the number $\kappa$ of players in $P$, and mainly keeps truck of the first and second rightmost vertices that are chosen by some of the $\kappa$ players in $P$ together with the numbers of players who chose them. 
However, it should be noted that, because we will add a new $(\kappa+1)$-th player to the right side of the currently rightmost player, the utility of the rightmost player may change according to the addition. 
Therefore, as a ``partial'' solution, we introduce the notion of ``Nash sub-equilibrium'' for a weighted path, as follows.

Let $\vec{s}$ be a strategy profile for the game $(\kappa, P, w)$ such that $\kappa \ge 1$.
We denote by $r_{\vec{s}}$ the index $1 \le r_{\vec{s}} \le n$ such that $v_{r_{\vec{s}}}$ is the rightmost vertex chosen by at least one player in $\vec{s}$.
Let $\akifixvset_{\vec{s}}$ be the set of all players in $\vec{s}$ who did not choose $v_{r_{\vec{s}}}$; 
we denote by $\umin(\vec{s}; \akifixvset_{\vec{s}})$ the minimum utility over all players in $\akifixvset_{\vec{s}}$.
For two integers $x$ and $y$, $0 \le y < x \le n$, let $\suki(\vec{s}; y,x) = \max_{y \le j \le x} U_{\kappa+1}(\vec{s}+v_j)$. 
In other words, $\suki(\vec{s}; y,x)$ denotes the potential of the maximum utility under $\vec{s}$ that can be expected to gain by an extra player other than the $\kappa$ players when the extra player chooses a vertex from $v_y, v_{y+1}, \ldots , v_{x}$.
Then, a strategy profile $\vec{s}$ for the game $(\kappa, P, w)$ is called a {\em Nash sub-equilibrium with a boundary $t$} if and only if 
\begin{description}
\item[(1)] $\suki(\vec{s}; 1, r_{\vec{s}}) \le t$;
\item[(2)] $\umin(\vec{s}; \akifixvset_{\vec{s}}) \ge t$; and
\item[(3)] $U_j(\vec{s}_{-j}, v') \le U_j(\vec{s})$ for each player $j \in \akifixvset_{\vec{s}}$ and $v' \in \{v_1, v_2, \ldots , v_{r_{\vec{s}}}\}$.
\end{description}

We classify the Nash sub-equilibriums with respect to the first and second rightmost vertices chosen by players. 
For notational convenience, we introduce a dummy vertex $v_0$. 
Let $\kappa$ be an integer such that $1 \le \kappa \le 2n$.
Then, we say that $\bigl((x, a), (y, b)\bigr)$ is a {\em feasible set on $\kappa$} if 
\begin{description}
\item[$\bullet$] $x$ and $y$ are indices of vertices in $V(P) \cup \{v_0 \}$ such that $0 \le y < x \le n$; and 
\item[$\bullet$] $a$ and $b$ are integers such that $1 \le a \le \kappa$ and $0 \le b \le \kappa-a$.
\end{description}
For a feasible set $\bigl((x, a), (y, b)\bigr)$ on $\kappa$, we say that the path $P$ admits a {\em Nash $\bigl(\kappa, (x, a), (y, b)\bigr)$-sub-equilibrium with the boundary $t$} if and only if 
there exist a strategy profile $\vec{s}$ for $(\kappa, P, w)$ such that 
\begin{description}
\item[(1)] $\vec{s}$ is a Nash sub-equilibrium with the boundary $t$;
\item[(2)] exactly $a$ players in $\vec{s}$ choose $v_x$;
\item[(3)] if $\kappa - a > 0$, then exactly $b$ players in $\vec{s}$ choose $v_y \in \{ v_1, v_2, \ldots, v_{x-1} \}$; 
\item[(4)] if $\kappa -a = 0$, then $(y, b) = (0,0)$; and
\item[(5)] if $\kappa-a-b > 0$, then $v_y \in \{ v_2, v_3, \ldots, v_{x-1} \}$ and each of the $\kappa-a-b$ players in $\vec{s}$ chooses a vertex from $v_1, v_2, \ldots, v_{y-1}$.
\end{description}
Then, $\bigl((x, a), (y, b)\bigr)$ indicates the first and second rightmost vertices $v_x$ and $v_y$ chosen by players in $\vec{s}$ and the numbers of players $a$ and $b$ who chose $v_x$ and $v_y$, respectively. 
Therefore, the utility $U_j(\vec{s})$ of a player $j$ who chose $v_x$ can be computed only by $\bigl((x, a), (y, b)\bigr)$;
let $U_j\bigl((x, a), (y, b)\bigr) = U_j(\vec{s})$ for a Nash $\bigl(\kappa, (x, a), (y, b)\bigr)$-sub-equilibrium. 
Furthermore, $U_j(\vec{s}_{-j}, v')$ can be computed only by $\bigl((x, a), (y, b)\bigr)$ if  $j$ is a player who chose $v_x$ and $v' \in \{v_{y+1}, v_{y+2}, \ldots, v_n\}$;
let $U_j\bigl((x, a-1), (y, b); v'\bigr) = U_j(\vec{s}_{-j}, v')$. 
Similarly, $\suki(\vec{s}; x,n)$ can be computed by $\bigl((x, a), (y, b)\bigr)$, indeed only by $(x, a)$;
let $\suki'((x,a), n) = \suki(\vec{s}; x,n)$ for a Nash $\bigl(\kappa, (x, a), (y, b)\bigr)$-sub-equilibrium.
Then, we have the following lemma.
\begin{lemma} \label{lem:subNash}
Let $\vec{s}$ be a strategy profile $\vec{s}$ for the game $(\kappa, P, w)$ which is a Nash $\bigl(\kappa, (x, a), (y, b)\bigr)$-sub-equilibrium with the boundary $t$ for some feasible set $\bigl((x, a), (y, b)\bigr)$ on $\kappa$.
Then, $\vec{s}$ is a Nash equilibrium for $P$ with the boundary $t$ if and only if 
\begin{description}
\item[(1)] $\suki'((x,a), n) \le t${\rm ;}
\item[(2)] $U_j\bigl((x, a), (y, b)\bigr) \ge t$ for a player $j$ who chose $v_x${\rm ;} and
\item[(3)] $U_j\bigl((x, a-1), (y, b); v'\bigr) \le U_j\bigl((x, a), (y, b)\bigr)$ for a player $j$ who chose $v_x$ and $v' \in \{v_{y+1}, v_{y+2}, \ldots, v_n\}$.
\end{description}
\end{lemma}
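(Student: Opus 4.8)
The plan is to unpack the conclusion ``$\vec{s}$ is a Nash equilibrium for $P$ with the boundary $t$'' into the three requirements it abbreviates --- (NE) $U_{i}(\vec{s}_{-i},v')\le U_{i}(\vec{s})$ for every player $i$ and every $v'\in V(P)$; (S) $\suki_{P}(\vec{s})\le t$; and (U) $t\le\umin_{P}(\vec{s})$ --- and, under the standing hypothesis that $\vec{s}$ is a Nash $\bigl(\kappa,(x,a),(y,b)\bigr)$-sub-equilibrium with the boundary $t$, to establish (S) $\Leftrightarrow$ clause~(1), (U) $\Leftrightarrow$ clause~(2), and, granting (S) and (U), (NE) $\Leftrightarrow$ clause~(3). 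In each equivalence the forward direction just reads a clause off a Nash equilibrium and is immediate, so the work lies in the backward direction, where (S) and (U), hence clauses~(1) and~(2), are available. I would use throughout that $v_{x}=v_{r_{\vec{s}}}$ is the rightmost occupied vertex, that the $a$ players on $v_{x}$ all have utility $U_{j}\bigl((x,a),(y,b)\bigr)$, and the elementary geometry of diffusion on a path: a player's captured set is an interval delimited by the midpoints to its nearest occupied neighbours, and a vertex carrying two or more players is neutral and forms an impassable wall.

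First I would dispose of (S) and (U). Writing $\suki_{P}(\vec{s})=\max\bigl(\suki(\vec{s};1,x),\,\suki'((x,a),n)\bigr)$ and using that the definition of Nash sub-equilibrium already guarantees $\suki(\vec{s};1,x)\le t$, requirement (S) is equivalent to $\suki'((x,a),n)\le t$, which is clause~(1). Similarly $\umin_{P}(\vec{s})=\min\bigl(\umin(\vec{s};\akifixvset_{\vec{s}}),\,U_{j}\bigl((x,a),(y,b)\bigr)\bigr)$, and the definition of Nash sub-equilibrium guarantees $\umin(\vec{s};\akifixvset_{\vec{s}})\ge t$, so (U) is equivalent to $U_{j}\bigl((x,a),(y,b)\bigr)\ge t$, which is clause~(2).

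The substantive part is (NE), which I would handle by splitting on the deviating player $j$ and the location of $v'$. A deviation by $j\in\akifixvset_{\vec{s}}$ to $v'\in\{v_{1},\dots,v_{x}\}$ is exactly what the last condition in the definition of Nash sub-equilibrium forbids, and a deviation by a player $j$ on $v_{x}$ to $v'\in\{v_{y+1},\dots,v_{n}\}$ is exactly what clause~(3) forbids --- here one notes that for $v'>v_{x}$ the vertex $v'$ is the new rightmost occupied vertex while the configuration to its left is still encoded by $(x,a-1),(y,b)$, so $U_{j}(\vec{s}_{-j},v')$ equals the quantity $U_{j}\bigl((x,a-1),(y,b);v'\bigr)$ appearing in the lemma. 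It then remains to rule out the two uncovered families of deviations, and I claim these are automatic given the sub-equilibrium hypothesis together with (S) and (U). If $j\in\akifixvset_{\vec{s}}$ moves to some $v'>v_{x}$, then, because $j$ sat strictly left of $v_{x}$, deleting $j$ leaves the control of the interval $[v_{x},v_{n}]$ unchanged, so $U_{j}(\vec{s}_{-j},v')=U_{\kappa+1}(\vec{s}+v')\le\suki'((x,a),n)\le t\le U_{j}(\vec{s})$, the final inequality holding because $j\in\akifixvset_{\vec{s}}$. If a player $j$ on $v_{x}$ moves to some $v'\le v_{y}$ (so $\kappa>a$ and $v_{y}$ is occupied), then for $v'<v_{y}$ the player $j$ is screened from $v_{x}$ by the players on $v_{y}$, giving $U_{j}(\vec{s}_{-j},v')=U_{\kappa+1}(\vec{s}+v')\le\suki(\vec{s};1,x)\le t$, and for $v'=v_{y}$ the vertex $v_{y}$ turns neutral and $j$ captures nothing, so $U_{j}(\vec{s}_{-j},v')=0\le t$; in either case $U_{j}(\vec{s})=U_{j}\bigl((x,a),(y,b)\bigr)\ge t$ by clause~(2). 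Finally, in the degenerate case $\kappa=a$ we have $(y,b)=(0,0)$, clause~(3) already ranges over all of $\{v_{1},\dots,v_{n}\}$, and $\akifixvset_{\vec{s}}=\emptyset$, so nothing is uncovered. Assembling the three blocks yields the equivalence.

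The step I expect to be the main obstacle is the verification, in the previous paragraph, that the uncovered deviations cannot improve: this forces a separation of the subcases $a=1$ and $a\ge 2$, a careful argument that a player's captured interval on a path is genuinely insensitive to deleting or relocating another player that sits behind an occupied barrier (which is what makes the claimed equalities with $U_{\kappa+1}(\vec{s}+v')$ valid), and a check that every utility appearing --- in particular $U_{j}\bigl((x,a-1),(y,b);v'\bigr)$ and $\suki'((x,a),n)$ --- depends only on the advertised parameters, so that those quantities are well defined. Each such check is routine, but there are several to make.
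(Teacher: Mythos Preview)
Your proposal is correct and follows essentially the same approach as the paper's proof: both combine the three sub-equilibrium conditions (your (1'), (2'), (3')) with clauses (1)--(3) to cover all deviation cases, using the bounds $\suki\le t\le\umin$ to handle the ``cross'' deviations not directly listed in either definition. Your case split is in fact more explicit than the paper's --- you correctly separate the rightmost player's deviations into $v'\le v_y$ (handled via $\suki(\vec{s};1,x)\le t$) and $v'\in\{v_{y+1},\dots,v_n\}$ (handled by clause~(3)), whereas the paper compresses this into a single line.
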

\begin{proof}
The only-if-part clearly holds, and hence we prove the if-part. 
Since $\vec{s}$ is a Nash $\bigl(\kappa, (x, a), (y, b)\bigr)$-sub-equilibrium with a boundary $t$ for the game $(\kappa, P, w)$, it satisfies
\begin{description}
\item[(1')] $\suki(\vec{s}; 1, x) \le t$;
\item[(2')] $\umin(\vec{s}; \akifixvset_{\vec{s}}) \ge t$; and
\item[(3')] $U_j(\vec{s}_{-j}, v'') \le U_j(\vec{s})$ for each player $j \in \akifixvset_{\vec{s}}$ and $v'' \in \{ v_1, v_2, \ldots , v_{x} \}$.
\end{description}
Then, Conditions~(1') and (1) ensure that $\suki_{P}(\vec{s}) \le t$ holds.
Similarly, Conditions~(2') and (2) ensure that $\umin_{P}(\vec{s}) \ge t$ holds.
Furthermore, Conditions~(2') and (1) imply that no player $j$ in $\akifixvset_{\vec{s}}$ increases the utility $U_j(\vec{s})$ by moving to a vertex in $v_{x}, v_{x+1}, \ldots, v_n$. 
Similarly, by Conditions~(1') and (2) no player $j$ who chose $v_x$ moves to a vertex in $v_1, v_2, \ldots, v_{x-1}$. 
Together with Conditions (3') and (3), we thus conclude that $\vec{s}$ is a Nash sub-equilibrium with a boundary $t$. 
\end{proof}

We now define our DP table. 
For a feasible set $\bigl((x, a), (y, b)\bigr)$ on an integer $\kappa \in [2n]$, let 
\[
	f_t\bigl(\kappa, (x, a), (y, b)\bigr) = \left\{
			\begin{array}{ll}
			\TRUE & ~~~\mbox{if $P$ admits a Nash $\bigl(\kappa, (x, a), (y, b)\bigr)$-sub-}\\
					  & ~~~\mbox{equilibrium with the boundary $t$}; \\
			\FALSE & ~~~\mbox{otherwise}.
			\end{array} \right.
\]


\noindent
{\bf Algorithm.}

As the initialization, we first compute $f_t \bigl(\kappa, (x, \kappa), (0, 0) \bigr)$ for each integer $\kappa \in [2n]$ and $x \in [n]$.
Let $\vec{s}$ be the strategy profile for $(\kappa, P, w)$ such that all the $\kappa$ players choose $v_x$.
Then, by the brute-force manner, we can check in $O(n^2)$ time whether $\vec{s}$ is a Nash $\bigl(\kappa, (x, \kappa), (0, 0) \bigr)$-sub-equilibrium with the boundary $t$ or not. 
Therefore, $f_t \bigl(\kappa, (x, \kappa), (0, 0) \bigr)$ can be computed in $O(n^4)$ time for all integers $\kappa \in [2n]$ and $x \in [n]$.
\medskip

We then compute $f_t\bigl(\kappa, (x, a), (y, b)\bigr)$ for each integer $\kappa \in [2n]$ and its feasible sets $\bigl((x, a), (y, b)\bigr)$ such that $b \ge 1$.
Suppose that $P$ admits a Nash $\bigl(\kappa, (x, a), (y, b)\bigr)$-sub-equilibrium $\vec{s}$ with the boundary $t$. 
Then, by removing the $a$ players who chose $v_x$, we can obtain another Nash $\bigl(\kappa-a, (y, b), (z, c)\bigr)$-sub-equilibrium $\vec{s}'$ with the boundary $t$ for some feasible set $\bigl((y, b), (z, c)\bigr)$ on $\kappa-a$;
recall that the utility is not taken into account for the $b$ players who chose $v_y$ in $\vec{s}'$.
Therefore, $f_t\bigl(\kappa, (x, a), (y, b)\bigr) = \TRUE$ if and only if there exists at least one feasible set $\bigl((y, b), (z, c)\bigr)$ on $\kappa-a$ such that 
\begin{description}
\item[(1)] $f_t \bigl(\kappa-a, (y, b), (z, c) \bigr) = \TRUE$;
\item[(2)] the utility of each player $j$ who chose $v_y$ is at least $t$ under the condition that $a$ players chose $v_x$, $b$ players chose $v_y$ and $c$ players chose $v_z$;
\item[(3)] each player $j$ who chose $v_y$ does not move to a vertex in $v_{z}, v_{z+1}, \ldots, v_{x}$ to increase its utility under the condition that $a$ players chose $v_x$, $b$ players chose $v_y$ and $c$ players chose $v_z$; and 
\item[(4)] the utility of an extra player $j'$ is at most $t$ under the condition that $a$ players chose $v_x$, $b$ players chose $v_y$, and $j'$ chose a vertex from $v_{y}, v_{y+1}, \ldots, v_{x}$.
\end{description}
For each feasible set $\bigl((x, a), (y, b)\bigr)$ on $\kappa$, we check all feasible sets $\bigl((y, b), (z, c)\bigr)$ on $\kappa-a$.
The conditions (1)--(4) above can be checked in $O(n^2)$ time for one feasible set $\bigl((y, b), (z, c)\bigr)$. 
Then, we can compute $f_t\bigl(\kappa, (x, a), (y, b)\bigr)$ in time $O(n^9)$ for all feasible sets $\bigl((x, a), (y, b)\bigr)$ on $\kappa$.  



%


\medskip

Finally, we compute the desired set $K$, as follows:
An integer $\kappa$ is in $K$ if and only if there exists a feasible set $\bigl((x, a), (y, b)\bigr)$ on $\kappa$ such that $f_t\bigl(\kappa, (x, a), (y, b)\bigr) = \TRUE$ and Conditions~(1)--(3) of Lemma~\ref{lem:subNash} are satisfied. 
This can be done in time $O(n^7)$. 

In this way, our algorithm runs in time $O(n^9)$ in total. 
This completes the proof of Lemma~\ref{lem:ForestsOfPaths_weighted}.

\section{Proof of Lemma~\ref{lem:ForestsOfPaths}}
Let $P$ be a path of $n$ vertices.
We denote by $\kappa$ and $t$ by the number of players choosing vertices in
$P$ and a boundary $t$, respectively.

\smallskip
\noindent
\emph{Proof for (1)}: 
If $P$ admits $\kappa = 0$ with $t$, then $n \le t$ holds; otherwise, we have $t+1 \le \suki_P(\vec{s})$.
If $n \le t$ holds, then $P$ clearly admits $\kappa = 0$ with $t$.

\smallskip
\noindent
\emph{Proof for (2)}: If $P$ admits $\kappa = 1$ with $t$, then
$t \le n \le 2t+1$ holds, as follows.
If $n \le t-1$, we have $\umin(\vec{s})_P \le t-1$; and if $2t+2 \le n$, then
$t+1 \le \suki_P(\vec{s})$ no matter what vertex the player choose.

If $t \le n \le 2t+1$ holds, then $P$ admits $\kappa = 1$ with $t$
by taking a strategy profile in which the player
choose the center for odd $n$ (or one of the two centers for even $n$)
of $P$.

\smallskip
\noindent
\emph{Proof for (3)}: If $P$ admits $\kappa = 2$ with $t$, $2t \le n \le 2t+2$ holds, as follows.
If $n \le 2t-1$, then one of the two player has utility less than $t$:
$\umin(\vec{s})_P \le t-1$. If $2t+3 \le n$, then
$t+1 \le \suki_P(\vec{s})$, since the two players must
choose two neighboring vertices in any Nash equilibrium $\vec{s}$.

If $2t \le n \le 2t+2$ holds,  $P$ admits $\kappa = 2$ with $t$, as follows.
Consider a strategy profile $\vec{s}$ in which the two players
choose the center and a neighbor of $P$ if  $n$ is odd $n$,
or do the two centers if $n$ is even.
Then, clearly $\vec{s}$ is a Nash equilibrium, and
$\suki_P(\vec{s}) \le t$ and $t \le \umin_P(\vec{s})$.

\smallskip
\noindent
\emph{Proof for (4)}:
The proof of a similar statement is given in~\cite{Takehara2012_2}, but we give it
for completeness.
 
If $P$ admits $\kappa = 3$ with $t$, then $t=1$ and $n = 3, 4$, or $5$, as follows.
If $n \le 2$, then at least one of the three players has utility zero:
$\umin_P(\vec{s})$ is not positive.
If $n = 3, 4$, or $5$, we can easily verify that
every Nash equilibrium $\vec{s}$ satisfies $\suki_P(\vec{s}) \in \{0, 1\}$ and $\umin_P(\vec{s}) = 1$,
and hence $t=1$.
If $6 \le n$, we show that no Nash equilibrium
exists; there are two cases to consider.
If the three vertices are neighboring each other,
then the player in the middle
must have utility one, but can increase their utility by changing the strategy
to the one just to the right (of left) of a vertex outside.
If the three vertices are not neighboring,
a player choosing leftmost or rightmost vertex
can increase the utility by changing the strategy to the one
next to the middle.

If $t=1$ and $n = 3, 4$, or $5$, then
$P$ admits $\kappa = 3$ with $t$ by taking a strategy profile in which
the three players choose the three centers of $P$.

\smallskip
\noindent
\emph{Proof for (5)}:
We consider only the case for even $\kappa$($\ge 4$),
since the proof for the other case is similar.

We define some terms for the proof.
Let $\kappa$ be the number of players,
$P$ be the path of $n$ vertices, and
$w$ be the weight function.
Let $s^{(1)}, s^{(2)}, \dots , s^{(\kappa)}$ be the vertices
the players $p_1, p_2, \dots , p_\kappa$ choose.
Without loss of generality, for every $i$, $ 1 \le i \le \kappa$,
we can assume that $s^{(i)}$ is located in the $i$th position:
$s^{(1)}, s^{(2)}, \dots , s^{(i-1)}$ are to the left of $s^{(i)}$.
We denote by $\delta_0$ the number of vertices to the left of $s^{(1)}$;
for every $i$, $0 \le i \le \kappa - 1$,
by $\delta_i$ the number of vertices between $s^{(i)}$ and $s^{(i+1)}$;
and by $\delta_\kappa$ the number of vertices to the right of $s^{(\kappa)}$.
Clearly, $\delta_0, \delta_1, \dots , \delta_\kappa$ determine the number
\begin{eqnarray}\label{eq:(5)_n=delta_i+k}
n = \sum_{i=0}^\kappa \delta_i + \kappa
\end{eqnarray}
and the strategy profile $\vec{s}$ for $(\kappa, P, w)$.
Note that, if $\vec{s}$ with $\delta_0, \delta_1, \dots , \delta_\kappa$
is a Nash equilibrium, then $\delta_1 = \delta_{\kappa-1} = 0$;
otherwise, the first (or the last player) increases their utility
by changing the strategy to the one next to $s^{(2)}$($s^{(\kappa-1)})$.
Moreover, we can easily observe that
if  $\umin(\vec{s}) \le t \le \suki_P(\vec{s})$ holds, we have
$t-1 \le \delta_0 \le t$ and $t-1 \le \delta_\kappa \le t$.

\bigskip
\noindent
[$\Rightarrow$]

Consider first even $\kappa$.
We here prove that, if the number $n, \kappa$ and $t$ do not satisfy
Eq.~(\ref{eq:FofP_k >= 4}), then $(\kappa, P, w)$ has no Nash equilibrium
such that $\umin_P(\vec{s}) \le t \le \suki_P(\vec{s})$.
If $n < \kappa t$, then we have $n/\kappa < t$, and hence
no desired strategy profile with a boundary $t \le \umin_P(\vec{s})$ exists.

Suppose $(2\kappa-4)t+\kappa < n$, and
the desired strategy profile $\vec{s}$
with $\delta_0, \delta_1, \dots , \delta_\kappa$ exists.
Since $\vec{s}$ is a Nash equilibrium, we have $\delta_1 = \delta_{\kappa-1} = 0$,
$\delta_0 \le t$ and $\delta_\kappa \le t$. Thus, it holds that
\[
(2\kappa-4)t + \kappa - 2t =2t(\kappa-3) + \kappa
<\sum_{i=2}^{\kappa-2} \delta_i + \kappa.
\]
Therefore, there exists an index $i$, $2 \le i \le \kappa-2$, such that
$2t < \delta_i$. Thus, an extra player can gain the utility more than $t$
by choosing the vertex to the right of $s^{(i)}$, and hence
$t < \suki(\vec{s})$, which is a contradiction.

Consider then odd $\kappa$.
Suppose for sake of contradiction that $n < (\kappa +1)t-1$,
and
the desired strategy profile $\vec{s}$
with $\delta_0, \delta_1, \dots , \delta_\kappa$ exists.
By Eq.~(\ref{eq:(5)_n=delta_i+k}), we have
\[
\sum_{i=0}^\kappa \delta_i + \kappa < (\kappa +1)t-1,
\]
and hence
\begin{eqnarray}\label{eq:(5)_delta_1-k}
\sum_{i=0}^\kappa \delta_i < (\kappa+1)(t-1).
\end{eqnarray}
On the other hand,
since $\vec{s}$ is a Nash equilibrium with a boundary $t$, we have
\begin{eqnarray}\label{eq:(5)_012}
t-1 \le \delta_0, \quad \mbox{and} \quad \delta_1 = 0;
\end{eqnarray}
for any even $i$ satisfying  $2 \le i \le \kappa - 1$, we have
\begin{eqnarray}\label{eq:(5)_3_k-3}
2t-2 \le \delta_{i-1} + \delta_i;
\end{eqnarray}
and
\begin{eqnarray}\label{eq:(5)_k-2k-1k}
\delta_{\kappa-1} = 0, \quad \mbox{and} \quad t-1 \le \delta_{\kappa}.
\end{eqnarray}
Since $\kappa$ is odd, we can combine
the inequalities in~(\ref{eq:(5)_012})$-$(\ref{eq:(5)_k-2k-1k}), and obtain
\begin{eqnarray*}\label{eq:(5)_delta_2-k-2}
(t-1)  + (2t-2)\left(\frac{\kappa-3}{2} + 1\right)  + ( t-1)
= (\kappa +1)(t-1) \le \sum_{i=0}^{\kappa} \delta_i,
\end{eqnarray*}
which contradicts Eq.~(\ref{eq:(5)_delta_1-k}).

The proof for the inequality $n \le (2\kappa -4) + \kappa$ is same as
the one for the case where $n$ is even.


\bigskip
\noindent
[$\Leftarrow$]

We first consider even $\kappa$.
We prove by construction that, if the number $n, \kappa$ and $t$ satisfy Eq.~(\ref{eq:FofP_k >= 4}),
then $P$ admits $\kappa$ players with the boundary $t$,
that is, there exists
a strategy profile $\vec{s}$ such that
$\vec{s}$ is a Nash equilibrium for $(\kappa, P, w)$  and
$\suki_P(\vec{s}) \le t \le \umin_P(\vec{s})$.

For a given $n$ satisfying Eq.(\ref{eq:FofP_k >= 4}),
we can construct the desired strategy profile $\vec{s}$
with the following
designated $\delta_0, \delta_1, \dots , \delta_\kappa$.

\smallskip
\noindent
{\bf The strategy profile $\vec{s}$}: 
For every $i$, $0 \le i \le \kappa$, the parameter $\delta_i$ satisfies
\begin{eqnarray}\label{eq:(5)_definition_of_s_even}
\delta_i \in \left\{
\begin{array}{ll}
\{ t-1, t\} & \mbox{ if } i=0;\\
\{ 0 \} & \mbox{ if } i=1;\\
\{ 2t-2, 2t-1, 2t\} & \mbox{ if $2 \le i \le \kappa-2$ and $i$ is odd};\\
\{ 0, 1, \dots , 2t\} & \mbox{ if $2 \le i \le \kappa-2$ and $i$ is even};\\
\{ 0 \} & \mbox{ if } i=\kappa-1;\\
\{ t-1, t\} & \mbox{ if } i=\kappa.
\end{array}
\right.
\end{eqnarray}
Besides, there is an index $i^*$, $0 \le i^* \le \kappa$ such that
$\delta_0, \delta_1, \dots \delta_{i^*-1}$ have their maximum values, while
$\delta_{i^*+1}, \delta_{i^*+2}, \dots \delta_{\kappa}$ have their minimum values. 

\smallskip

If $i^*=0$ and $\delta_0$ has the minimum value
(i.e., $\delta_{i^*} = \delta_0 = t-1$), then the strategy determines
\begin{eqnarray}\label{eq:FofP_kt}
n &=& (t-1) + \left( \frac{\kappa-4}{2} + 1 \right)(2t-2) + (t-1) + \kappa \nonumber\\
&=& \kappa t.
\end{eqnarray}
If $i^*=\kappa$ and $\delta_\kappa$ has the maximum value
(i.e., $\delta_{i^*} = \delta_\kappa = t$), then the strategy determines
\begin{eqnarray}\label{eq:FofP_(2k-4)t+k}
n &=& t + \left( \kappa-4 + 1 \right)2t + t + \kappa \nonumber\\
&=& (2\kappa -4)t+\kappa.
\end{eqnarray}
To obtain $P$ of $n$ vertices,
we can increment the number of vertices one
by one from Eq.~(\ref{eq:FofP_kt}) to Eq.~(\ref{eq:FofP_(2k-4)t+k})
according to the following procedure: if $\delta_{i^*}$
is less than its maximum value, 
we just increment $\delta_{i^*}$ by one; otherwise,
we increment $\delta_{i^*+1}$ by one, and
$i^*+1$ is now considered to be new $i^*$. It now remains
to show that $\vec{s}^*$ is always a Nash equilibrium and
$\suki_P(\vec{s}) \le t \le \umin_P(\vec{s})$.

We can prove that $\vec{s}^*$ is always a Nash equilibrium, as follows.
For every $i$, $ 1 \le i \le \kappa$, we can easily observe that
\begin{eqnarray}\label{eq:FofP_U_i}
t \le U_i(\vec{s});
\end{eqnarray}
and for every $i$, $0 \le i \le \kappa$, we have $\delta_i \le 2t$.
Thus, any player $p_i$ cannot obtain the utility more than $t$ 
by changing the strategy to one of $\delta_{i'}$ vertices
between $s^{(i')}$ and $s^{(i'+1)}$
where
$i'  \le i-2$ or $i+1 \le i'$. Complementally,
if $p_i$ changes the strategy to one of $\delta_{i-1} + \delta_{i} + 1$
vertices between $s^{(i-1)}$ and $s^{(i+1)}$,
the utility of $p_i$ is always $\lfloor (\delta_{i-1}+\delta_i)/2 \rfloor$,
since our construction implies that one of $\delta_{i-1}$ and $\delta_i$ is even. Thus, $\vec{s}$ is a Nash equilibrium.

Similarly,
for every $i$, $0 \le i \le \kappa$, we have $\delta_i \le 2t$, and hence
$\suki_P(\vec{s}) \le t$, while Eq.~(\ref{eq:FofP_U_i}) implies that
$t \le \umin_P(\vec{s})$.

For the case where $\kappa$ is odd,
we can verify, in the same way, the statement by replacing (\ref{eq:(5)_definition_of_s_even})
with
\begin{eqnarray}
\delta_i \in \left\{
\begin{array}{ll}
\{ t-1, t\} & \mbox{ if } i=0;\\
\{ 0 \} & \mbox{ if } i=1;\\
\{ 2t-2, 2t-1, 2t\} & \mbox{ if $2 \le i \le \kappa-3$ and $i$ is odd};\\
\{ 0, 1, \dots , 2t\} & \mbox{ if $2 \le i \le \kappa-3$ and $i$ is even};\\
\{ 2t-2, 2t-2, 2t\} & \mbox{ if } i=\kappa-2;\\
\{ 0 \} & \mbox{ if } i=\kappa-1;\\
\{ t-1, t\} & \mbox{ if } i=\kappa.
\end{array}
\right.
\end{eqnarray}






\end{document}